\newcommand{\soc}{\textsf{SOC}\xspace}
\newcommand{\NN}{\mathbb{N}}
\newtheorem{theorem}{Theorem}[section]
\newtheorem{lemma}[theorem]{Lemma}
\newtheorem{definition}{Definition}
\title{Centralities for Networks with Consumable Resources}
\author[1]{Hayato Ushijima-Mwesigwa \thanks{corresponding author: \texttt{hushiji@g.clemson.edu}}}
\author[2]{Zadid Khan}
\author[2]{Mashrur A. Chowdhury}
\author[1]{Ilya Safro}
\affil[1]{School of Computing, Clemson University, Clemson, South Carolina, USA}
\affil[2]{Department of Civil Engineering, Clemson University, Clemson SC, USA}
\date{}                     
\begin{document}

\maketitle



\begin{abstract}
	
	Identification of influential nodes is an important step in understanding and controlling the dynamics of information, traffic and spreading processes in networks. As a result, a number of centrality measures have been proposed and used across different application domains. At the heart of many of these measures, lies an assumption describing the manner in which traffic (of information, social actors, particles, etc.) flows through the network. For example, some measures only count shortest paths while others consider random walks.
	This paper considers a spreading process in which a resource necessary for transit is partially consumed along the way while being refilled at special nodes on the network. Examples include fuel consumption of vehicles together with refueling stations, information loss during dissemination with error correcting nodes, and consumption of ammunition of military troops while moving.
	We propose generalizations of the well-known measures of betweenness, random walk betweenness, and Katz centralities to take such a spreading process with consumable resources into account. In order to validate the results, experiments on real-world networks are carried out by developing  simulations based on well-known models such as Susceptible-Infected-Recovered and congestion with respect to particle hopping from vehicular flow theory. The simulation-based models are shown to be highly correlated to the proposed centrality measures.\\
	\noindent{\bf Reproducibility: }Our code, and experiments are available at {\url https://github.com/hmwesigwa/soc\_centrality}
	\vspace{0.5cm}
	
	\noindent \textbf{Keywords:} \textit{network centrality, Katz centrality, betweenness centrality, random-walk betweenness centrality, consumable resources}
\end{abstract}

\section{Introduction}
Spreading processes are ubiquitous throughout science, nature, and society \cite{strogatz2001exploring,tao2006epidemic,pastor2001epidemic}. 
These include, spreading of infectious diseases \cite{keeling2011modeling}, computer viruses \cite{kephart1997fighting}, cascading failures \cite{motter2004cascade}, traffic congestion \cite{li2015percolation}, opinion spreading \cite{liu2007opinion,bettencourt2006power}, and reaction-diffusion processes \cite{colizza2007reaction}. 
Understanding a nodes' spreading influence is fundamental for a wide variety of applications such as epidemiology \cite{diekmann2000mathematical,keeling2011modeling}, viral marketing \cite{watts2007viral,leskovec2007dynamics}, collective dynamics \cite{albert2002statistical,boccaletti2006complex,barrat2008dynamical} and robustness of networks \cite{albert2000error,newman2003structure,cohen2001breakdown} and so forth.
Whereas many centrality measures were originally developed for social networks, some of them have subsequently been adapted to quantify the importance of nodes in epidemiological spreading processes \cite{kitsak2010identification,ghosh2012rethinking,vsikic2013epidemic, bae2014identifying}. This is partly due to the fact that most centralilty measures have simple assumptions, thus these measures are often intuitive and interpretable for a given application. Moreover, most popular centrality measures are based on variants of paths and eigenvector computations which explain paradigms in spreading models.
Popular centrality measures include degree, closeness, betweenness, current-flow, PageRank, eigenvector and Katz centralities \cite{freeman1978centrality,newman2005measure,brandes2005centrality,page1999pagerank,bonacich1987power,bonacich1991simultaneous,katz1953new}. All these measures make an implicit assumption about the process in which a commodity (e.g., information, vehicles, or infection) flows in the network. Typically, closeness and betweenness assume flow on geodesic paths, while PageRank, eigenvector and Katz centrality model flow via random walks. The extent to which a centrality measure can be interpreted for a given application depends on whether or not the assumed flow characteristics are a good representation of what is actually flowing in the network.

In this work, we consider a flow process in which a resource essential for flow is consumed along the way and can be refilled at specially assigned nodes in order to ensure that a flow process is not terminated. For example, a vehicle consumes fuel as it travels in a network that has refueling station nodes. In another domain, information requires updating or refreshing while it moves over the network. For example, in real information and social networks, rumors and gossips often die out if not refreshed \cite{moreno2004dynamics} and forgetting rates are considered in models \cite{zhao2013rumor}. Not much of the existing work models well-known concepts on networks by taking into account consumable resources.

For simplicity, we model the resource consumption as a discrete process that limits the number of steps the flow process can take without refilling the resource. For a graph underlying network of interest, $G=(V,E)$, the parameter $\kappa$ represents the number of steps a process can take without a consumed resource being refilled, and $\Omega \subset V$ represents the refilling nodes. One of the most intuitive and important modern applications of this process is the in-motion recharging of electric vehicle batteries that is anticipated to be broadly implemented in future. We borrow terminology from the charge level of batteries for electric vehicles and refer to the currently available resource as the \emph{state of charge} (\soc). Thus, $\kappa$ represents the full \soc value. In the next section, a list of related real-word applications is given.

\subsubsection*{Our contribution}
In this work, we study a process where a commodity (such as information, and traffic) is flowing (or spreading) in a network while consuming a resource necessary for flow, and being refilled at special nodes. We give a list of potential applications that have a similar flow (or spreading) process. In order to estimate a nodes' spreading influence, we generalize the measures of Katz, betweenness, and random-walk betweenness centralities (including its generalization for directed graphs) and show how they can be computed. Lastly, we present different models to simulate the spreading processes and show that the generalized centrality measures are highly correlated to the simulation-based models.

\section{Applications}
\emph{Transportation networks:} A natural and motivating application of such a process is in transportation networks, in particular, road networks. Let a node in a road network represent a road segment. An edge between two road segments exists if they are physically adjacent to each other (as prolongation of each other or with a real intersection). 
In this case, $\Omega \subset V$ represents the nodes with refueling stations, and $\kappa$ - the maximum distance a vehicle can travel without refueling. 
In particular, we can also consider electric vehicle road networks equipped with wireless charging lanes, where a whole lane can be turned into a charging infrastructure. 
This technology has seen tremendous growth over the last couple of years with test sites already in place \cite{jang2012optimal}. However, setting up this technology will come with a heavy price tag for a city with a limited budget, thus tools must be developed to analyze these networks beforehand. While several research studies have carried out for identifying the optimal locations for the deployment of wireless charging lanes \cite{riemann2015optimal,chen2016optimal,ushijima2017optimal,khan2017utility,khan2019wireless}, these studies often make different assumptions while solving different optimization objectives and constraints. Having an independent tool to analyze the network is thus also necessary in order for deployment strategies to be compared.

\emph{Peer-to-Peer Networks:}
Peer-to-Peer (P2P) information exchange systems  \cite{basu2013state} have gained popularity over the past two decades. One of the challenges in P2P systems is searching for content on the network. Gnutella is a popular open, and decentralized file-sharing protocol in P2P networks \cite{wang2007analyzing}. The Gnutella protocol works as follows \cite{ripeanu2002mapping}: 
\begin{enumerate}
	\item A node (computer) $v$ connects to the Gnutella network by connecting to a set of one or more nodes, $U$, already in the network. Then $v$ announces its existence to all nodes in $U$. 
	\item The nodes in $U$ announce to all their neighbors that $v$ has joined the network, which also announce to their neighbors, and so forth.
	\item Once all nodes are aware of $v$'s existence, it can make a query on the network.
\end{enumerate}
Popular methods of message propagation for a given query issued by a node include such methods as flood-based, and random walks routing algorithms \cite{tsoumakos2006analysis}. A global time-to-live (TTL) parameter represents the maximum number of steps (also known as hops) a query can take before it gets discarded. In a flood-based routing algorithm, a querying node contacts all its neighbors, who then contact all their neighbors, and so forth. The process stops after each message has taken TTL number of steps. This simplistic method produces a huge overhead by contacting many nodes. In the random walks routing algorithm, the querying node randomly chooses a subset of its neighbors and sends each of them $k$ messages, for some $k$. Each of these messages starts its own random walk in the network that is terminated after TTL steps. Other termination conditions exist, however, they are not relevant for this work.  The random walks routing algorithm greatly reduces the message passing throughout the network, with other advantages such as local load balancing, since no nodes are favored over others during message propagation. However, depending on the network topology, success rates could vary significantly. These two routing algorithms are sometimes referred to as blind search methods. On the other hand, informed search methods include methods that for example, take advantage of previous queries making better decisions for message passing, and in an ideal scenario, a message could then take the shortest path to a target node. 

In the Gnutella network, once a node receives a message, it first reduces the TTL counter of the message before forwarding it. The TTL parameter is intuitively equivalent to the \soc parameter $\kappa$ in this paper while the set of nodes in $\Omega$ represent nodes that reset the TTL counter before forwarding the message. These, for example, could be compromised nodes. In the analysis of the Gnutella network, an interesting question is determining the most important nodes in the network, which could be the nodes that receive the most traffic.

\emph{Social Networks (online):} The popularity and complexity of online social networks (OSN) have seen a tremendous growth in the last two decades and will continue to grow. In OSNs such as Twitter or Facebook, a user shares information which can be viewed by other users he/she is connected to. Centrality measures are often used to identify influential users within an OSN. However, in most centrality measures, all the users in the network are assumed to exhibit similar behavioral features. In other words, they all have the same desire and motivation to share knowledge in the network. In most cases, all the users are assumed to be \emph{active} users, users that are willing and motivated to share their knowledge. Many studies have been carried and show that this is not the case. In fact, most users, while being beneficiaries of the content being shared, actually do not share information themselves. The two different types of users are often referred to as \emph{posters} and \emph{lurkers} \cite{marett2009decision,lai2014knowledge,preece2004top,schlosser2005posting}. Posters are defined in \cite{schlosser2005posting} as the active users who share their experiences and create content on the internet while lurkers are defined as the passive users who do not necessarily create any content. 
The analysis of lurkers in social networks is now an active area of research \cite{interdonato2016trust}. In order to have a better understanding of online social networks, it is important to understand how information and content is shared over the network. It has been reported that lurkers are the majority in many online communities. The percentages of lurkers in an online community varies across different studies with some giving estimates as high as 90 percent of the total users \cite{van20141}. Sometimes referred to as the \emph{participation inequality principle} \cite{interdonato2016trust}, only a small percentage of users actually contribute to the online content while the rest never do. Lurkers are not registered users who do not use their account, they can share information in subtle ways. For example, Facebook has the "like" feature and thus user's contacts can see the information he/she liked. In this work, we take the posters to be members of the set $\Omega$ and assume that a piece of information is coupled with a \emph{momentum} or \emph{penetrating power}. Following the analogy of a battery charge in  EV, the momentum of a piece of information has the power to drive the information for a limited number of steps, $\kappa$,  before it dies out. However, if it reaches a node in $\Omega$, it regains its momentum.  




\emph{Personalized web ranking:} Consider a person surfing the web at random, however, with a topic of interest in mind. The surfer begins at a web page $\omega_1$ and performs a random walk on the web graph. At each time step, the person proceeds from the current page $u$ to a randomly chosen web page that $u$ is linked to. If after $\kappa$ steps, the surfer has not found a web page of interest, the surfer starts the process again from $\omega_1$. However, if the surfer discovers a web page  $\omega_2$ that is relevant to the topic of interest, $\omega_2$ becomes the new restarting point and the process continues. If $\Omega$ is the set of web pages known beforehand, then centrality measure defined by such a process can be used to rank the web pages based on the ones in $\Omega$, giving a personalized page ranking strategy. In this class of applications, we can also mention random walk based similarity measures on graphs and hypergraphs \cite{shaydulin2017relaxation,fouss2007random,chen2011algebraic} that would benefit from introducing resource consumption restrictions for the distance of a random walk.




\section{Related Work}
Centrality as a way of analyzing social networks dates back to Bavelas \cite{bavelas1948mathematical}.
Since then, various methods of centrality have been proposed to quantify the importance of individuals in social networks. These measures have also been effectively used as tools to study networks in other diverse fields such as physics, biology, and engineering. 

Since our initial motivation for this research was related to the analysis of road networks, in this section, we first highlight how previous studies have used centrality measures to analyze road networks. Next, we briefly introduce and summarize studies on electric vehicle road networks. Lastly, based on the potential applications for the proposed centrality measures, we summarize different existing approaches for  possible applications.  

We briefly summarize an incomplete list of studies in which centralities have been used to study and analyze road networks. 
A road network pattern can be viewed as the geographical layout and structure of a network. A road network can be laid out in different patterns (for example, see the book \cite{southworth2013streets} for more information on road network patterns) 
which can affect traffic performance, travel behavior, and traffic safety. In Zhang et al., \cite{zhang2011centrality}, the betweenness centrality is computed to analyze and classify road network patterns. In particular, it is used to define a measure that can quantitatively distinguish between different pattern types. In Wang et al., \cite{wang2012understanding}, the authors use centrality measures to analyze road networks in urban areas and apply their findings to mitigate congestion. More specifically, they use large-scale mobile phone data, with detailed Geographic Information System data to detect types of road usage and determine the origins of the drivers. This information is used to build a bipartite network with nodes representing road segments and  \emph{driver sources}, which is then called the \emph{network of road usage}. Here, a driver source is a zone where the mobile phone user lives. This can be located using the mobile phone data.  Given a list of all driver sources, for each road segment $r$, the authors calculated the fraction of traffic flow on $r$ that was generated by each driver source. They then ranked the driver sources by their contribution to the traffic flow. Based on this information, an edge in the network of road usage exists between a road segment $r$ and the top-ranked source nodes that produce 80\% of $r$'s traffic flow. Finally, the betweenness centrality of a road segment $r$ in the road network, along with the degree centrality of $r$ in the network of road usage were used to classify and group the road segments in the network. Experiments carried out in the San Francisco Bay area and Boston area provides evidence to show that the findings could enable cities to tailor targeted strategies to reduce the average daily commute time. 


In Scheurer et al., \cite{scheurer2008spatial}, the authors used a wide variety of centrality measures such as betweenness, closeness and degree centrality to identify the positive and negative points of the public transportation networks from different perspectives such as coverage, connectivity and service levels. 
The collective human spatial movement behavior is explored in \cite{jiang2011agent}. The authors use, among others, PageRank and betweenness centrality  coupled with agent-based simulations to study the movement of pedestrians in London street network. 
In \cite{altshuler2011augmented}, the authors propose an estimation method for mobility prediction in transportation networks based on the betweenness centrality carrying out experiments on the Israeli transportation network. Other studies on transportation networks, where centrality plays a crucial role in the analysis include \cite{jayasinghe2015explaining,jiang2004structural,porta2006network,jayaweeracentrality,crucitti2006centrality,park2010social}. 

As cities move towards reducing their carbon footprint, EVs offer the potential to reduce both petroleum imports and greenhouse gas emissions. However, batteries in these vehicles have a limited travel distance per charge. This results in a major obstacle for EV widespread adaptation, namely, 
\emph{range anxiety}, the persistent worry about not having enough battery power to complete a trip. The emergence of EV wireless charging technology where a whole lane can be turned into a charging infrastructure provides itself as a potential solution to range anxiety. For a more detailed study of the design, application and future prospects of this technology, the reader is encouraged to see, for example, \cite{qiu2013overview,bi2016review,li2015wireless,lukic2013cutting, cirimele2014wireless,fuller2016wireless,vilathgamuwa2015wireless,ning2013compact,yan2006efficient,guimera2002optimal}.
With a heavy price tag, a deployment of this technology without a careful study can lead to inefficient use of limited resources. One of the main purposes of this paper is to provide a tool to study and analyze road networks with a given deployment of wireless charging lanes. In these EV road networks, we assume that in order for an EV to travel between any two nodes, it is possible that a vehicle may need to detour to get charged to arrive at its destination. We envision that our modified version of betweenness centrality can be used in studying these EV road networks in similar ways as the studies in the preceding paragraph. 


Studies analyzing social networks whose users can be categorized as posters and lurkers have recently been gaining attention. In \cite{tagarelli2013s,tagarelli2014lurking}, the authors propose centrality measures for ranking lurkers in social networks. In these works, no prior knowledge of whether a user is a lurker/poster or not is assumed. The authors define a topology-driven lurking framework to model the relationships from information-producer to information-consumer. As a result, lurkers are ranked based on only the topology of the network. In our applications of the proposed centrality measures, we assume prior knowledge of whether or not a user is a poster or lurker. The main basis for this assumption is that the network topology may not be a related to a user's desire to share information. For example, two users on Facebook may have the same number of connections, however, have very different desires to share or post information.

\section{Graph Model}
Let $G = (V,E)$ be an unweighted (directed or undirected) graph underlying a network of interest. The underlying assumption is that the commodity (such as information and moving vehicle) is flowing (or spreading) on $G$ while consuming a resource necessary for flow. The flow is limited to the nodes within a geodesic distance of at most $\kappa$ edges, 
$\kappa\in \NN$. In addition, there exists a subset of nodes, $\Omega \subset V$ that refill the resource. In other words, if the commodity passes through a node $u \in \Omega$, it can then spread further to nodes that are at most $\kappa$ edges, 
from $u$. This process is modeled by a directed graph with adjacency matrix $\mathcal{B}_{\kappa}$ (see below). 

Let $A$ be the adjacency matrix of $G$, and $|V| =n$. 
Define the state space of a commodity traversing $G$ as the set
\begin{equation*}
\mathcal{V} :=\{(u, i)| u \in V, 0 \leq i \leq \kappa \},
\end{equation*}
in which the state $(u,i)$ represents the event that the commodity is at node $u \in V$ with the current level of \soc at $i$. The transition from one state to another is modeled by a directed graph $\mathcal{G} = (\mathcal{V}, \mathcal{E})$, where $\mathcal{E} = \mathcal{E}_1 \cup \mathcal{E}_2$ with 

\begin{equation*}
\mathcal{E}_1 := \big\{ \big( (u,i), (v,\kappa )\big) \ | \ (u,v) \in E,  v \in \Omega  \big\},
\end{equation*}
and
\begin{equation*}
\mathcal{E}_2 := \big\{ \big( (u,i), (v,j)\big) \ | \ (u,v) \in E, \   i = j + 1, v \notin \Omega  \big\}.
\end{equation*}
The set $\mathcal{E}_1$ represents a transition where the current \soc is increased to $\kappa$ (refilled), while $\mathcal{E}_2$ represents a transition where the current \soc is reduced by 1 (consumed). The adjacency matrix of $\mathcal{G}$, denoted as $\mathcal{B}_{\kappa}$, is defined as follows. Let $J_{\Omega}$ be a diagonal $n\times n$ matrix given by

\begin{equation*}
[J_{\Omega}]_{ii} = \left\{\begin{array}{lr}
1, & \mbox{if } i \in \Omega \\
0, & \mbox{otherwise.} 
\end{array}\right\}
\end{equation*}
For ease of exposition, where it is clear, we drop the subscript in $J_{\Omega}$ and simply write $J$.
If $I$ the $n\times n$ identity matrix define the $n(\kappa  +1) \times n(\kappa+1)$  block matrix $\mathcal{B}_{\kappa}$ as 
\begin{equation}
\mathcal{B}_{\kappa} = \left[\begin{array}{llllllll}
AJ & A(I-J) & \textbf{0}& \dots& \dots& \dots& \textbf{0}\\\
AJ & \textbf{0} & A(I-J) &\textbf{0}& \dots& \dots &\textbf{0} \\
AJ & \textbf{0} &\textbf{0} & \ddots &\textbf{0}& \dots &\textbf{0} \\
\vdots & & & & \ddots & & \vdots\\
\vdots & & & &  & \ddots & \vdots\\
AJ& \textbf{0} &\textbf{0} & \dots &\textbf{0}& \dots A(I-J)\\
AJ& \textbf{0} &\textbf{0} & \dots &\textbf{0}& \dots &\textbf{0}\\
\end{array}\right]
\label{b_matrix}
\end{equation}
Then the block matrix $\mathcal{B}_{\kappa}$ defines a directed state space  graph $\mathcal{G} = (\mathcal{V}, \mathcal{E})$ that models the underlying flow process.
In order for a commodity to flow from node $s$ to $t$, it may be necessary to traverse one or more nodes in $\Omega$. With this in mind, we define a feasible walk to represent the walks in $G$ that the commodity can fully traverse.
\begin{definition}\label{def:fw}
	A walk $w$ in $G$ is a \emph{feasible walk} if a commodity starting with full \soc can traverse $w$.
\end{definition}
The proposed centrality measures in the following sections are based on computing the feasible walks in the network. 

\section{Katz Centrality}
The centrality measure proposed by Katz \cite{katz1953new} was originally intended to rank a node (i.e., an actor in a social system) influence within a social network according to the number of its contacts considering different path lengths to other nodes. 
Thus, the model takes into account not only the immediate neighbors of a node but also its neighbors of second-order, third-order and so on.  The computation of Katz centrality is based on random walks emanating from a node. In this section, we propose \soc-Katz centrality, that only takes feasible walks into account. 


\subsection*{Counting Number of Feasible Walks}
For an adjacency matrix $A$ of a graph, the $ij$th entry of the matrix $A^k$, $k\in \NN$, counts paths from $i$ to $j$ of length $k$. \emph{However, in our resource consumption model, not all of these walks are in fact feasible}. This leads to an interesting question of finding a matrix that represents the number of $i$-$j$ feasible walks.

Consider the matrix $\mathcal{B}_{\kappa}^k$. Assume that the index of nodes in $V$ and matrices $I,J,A, \mathcal{B}_{\kappa}$ start at 0. For $i,j \in V$, with $0 \leq i < n$, $0 \leq i' < n (\kappa + 1)$,  and $j \equiv  i' \pmod{n}$, the $ij$th entry of $\mathcal{B}_{\kappa}^k$ gives the number of walks from node $i$ to $j$  completing with a different SOC. The destination SOC is given by the value $\lfloor i'/n \rfloor$. This implies that the number of walks from $i$ to $j$ ending with non-negative SOC is given by the summation at each SOC level. 

Let $\mathcal{I}_{\kappa}$ be an $n(\kappa +1) \times n$ block matrix with $\kappa + 1$ blocks of identity matrix $I$ given by
\begin{equation}
\mathcal{I}_{\kappa} = 
\left[\begin{array}{c}
I \\
I \\
\vdots \\
I
\end{array}\right], \textsf{ and }
\mathcal{Z}_{\kappa}=
\left[\begin{array}{c}
I \\
0 \\
\vdots \\
0
\end{array}\right].
\label{block_identity}
\end{equation}
Then the matrix  $\mathcal{I}_{\kappa}^T\mathcal{B}_{\kappa}^k\mathcal{I}_{\kappa}$, is an $n\times n$ matrix whose $ij$th entry gives the number of feasible walks from $i$ to $j$ of length $k$. 
Let $S$ be the $n(\kappa +1) \times n(\kappa + 1)$ matrix with $ij$ term given by
\begin{equation}
s_{ij} = \sum_{k=1}^{\infty}\alpha^k[\mathcal{B}_{\kappa}^k]_{ij}.
\end{equation}
Thus,
\[
\begin{array}{ll}
S &= I_{n(\kappa + 1) \times n(\kappa +1)} + \alpha \mathcal{B}_{\kappa} + \alpha^2 \mathcal{B}_{\kappa}^2 + \dots +  \alpha^i \mathcal{B}_{\kappa}^i + \dots \\
&= (I_{n(\kappa + 1) \times n(\kappa +1)} - \alpha \mathcal{B}_{\kappa} )^{-1}
\end{array}.
\]
Then, if $W$ is the $n \times n$ matrix given
\begin{equation}
W =  \mathcal{Z}_{\kappa}^T(I_{n(\kappa+1) \times n(\kappa+1)} - \alpha \mathcal{B}_{\kappa} )^{-1}\mathcal{I}_{\kappa}
\end{equation}
The centrality is then given by $C = W \textbf{1}$, where $\textbf{1}$ is the column vector consisting of all 1's.
For the standard centrality measure, the Katz centrality is computed by $(I - \alpha A)^{-1}$. The parameter $\alpha$, also known as the \emph{damping factor}, must be chosen carefully such that $0 < \alpha < 1/\lambda_{\max}$, where $\lambda_{\max}$ is the largest eigenvalue of $\mathcal{B}_{\kappa}$. 

\begin{lemma}
	$1/\lambda_{max}(A) \leq 1/\lambda_{max}({\mathcal{B}_{\kappa}})$ .
	\label{lamda}
\end{lemma}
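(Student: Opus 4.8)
The plan is to restate the claim in the equivalent form $\lambda_{\max}(\mathcal{B}_{\kappa}) \le \lambda_{\max}(A)$ (both quantities being positive whenever $G$ has an edge) and to prove this inequality of spectral radii. Since $A$ and $\mathcal{B}_{\kappa}$ are entrywise nonnegative, Perron--Frobenius theory identifies $\lambda_{\max}(\cdot)$ in each case with the spectral radius, realized by a nonnegative eigenvector. Hence it is enough to produce a nonnegative vector $x$ on the state space $\mathcal{V}$ that is subinvariant for $\mathcal{B}_{\kappa}$ at level $\lambda_{\max}(A)$, meaning $\mathcal{B}_{\kappa}\,x \le \lambda_{\max}(A)\,x$ entrywise; the Collatz--Wielandt characterization of the spectral radius then forces $\lambda_{\max}(\mathcal{B}_{\kappa}) = \rho(\mathcal{B}_{\kappa}) \le \lambda_{\max}(A)$.

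The natural candidate is $x = \mathcal{I}_{\kappa} u$, where $u \ge 0$ is a Perron eigenvector of $A$, i.e. $Au = \lambda_{\max}(A)\,u$, and $\mathcal{I}_{\kappa}$ is the stacking operator from \eqref{block_identity}. The verification is a one-line consequence of the block form \eqref{b_matrix}: each of the first $\kappa$ block-rows of $\mathcal{B}_{\kappa}$ has precisely the two nonzero blocks $AJ$ and $A(I-J)$, and $AJ + A(I-J) = A$ because $J$ is a $0$--$1$ diagonal matrix, so the corresponding block of $\mathcal{B}_{\kappa}\mathcal{I}_{\kappa}u$ is exactly $Au = \lambda_{\max}(A)\,u$; the last block-row contains only $AJ$, and since $A(I-J)u \ge 0$ we have $AJu \le Au = \lambda_{\max}(A)\,u$. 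Thus $\mathcal{B}_{\kappa}(\mathcal{I}_{\kappa}u) \le \lambda_{\max}(A)(\mathcal{I}_{\kappa}u)$ entrywise, which is the subinvariance we wanted.

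The one place that needs care --- and the main obstacle --- is that the Collatz--Wielandt upper bound $\rho(M) \le \max_i (Mx)_i/x_i$ is only valid for a \emph{strictly} positive test vector $x$, hence requires $u > 0$. This holds automatically when $G$ is connected, since then $A$ is irreducible and its Perron eigenvector is strictly positive (so $\mathcal{I}_{\kappa}u > 0$ as well). To handle a disconnected $G$ I would argue by perturbation: replace $A$ by $A_{\varepsilon} = A + \varepsilon\,\mathbf{1}\mathbf{1}^{T} > 0$, observe that forming $\mathcal{B}_{\kappa}$ out of $A$ is entrywise monotone (since $J \ge 0$ and $I - J \ge 0$), so $\mathcal{B}_{\kappa} \le \mathcal{B}_{\kappa}^{(\varepsilon)}$ and hence $\rho(\mathcal{B}_{\kappa}) \le \rho(\mathcal{B}_{\kappa}^{(\varepsilon)}) \le \lambda_{\max}(A_{\varepsilon})$, and then let $\varepsilon \to 0$ using continuity of the spectral radius in the matrix entries; alternatively one can run the whole argument separately on each connected component. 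Taking reciprocals then gives the stated inequality. (Two quick sanity checks: for $\Omega = V$ one gets $\mathcal{B}_{\kappa}$ with spectral radius exactly $\lambda_{\max}(A)$, so the bound is tight; for $\Omega = \varnothing$ the matrix $\mathcal{B}_{\kappa}$ is block-nilpotent, so $\lambda_{\max}(\mathcal{B}_{\kappa}) = 0$.)
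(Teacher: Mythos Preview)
Your argument is correct and takes a genuinely different route from the paper. The paper's proof is combinatorial: it observes that every walk in the state-space graph $\mathcal{G}$ projects to a walk of the same length in $G$, so the entries of $\mathcal{B}_{\kappa}^k$ are dominated by the corresponding entries of $A^k$; from this entrywise domination of all powers it concludes that the radius of convergence of $\sum_k \alpha^k \mathcal{B}_{\kappa}^k$ is at least that of $\sum_k \alpha^k A^k$, whence $\lambda_{\max}(\mathcal{B}_{\kappa}) \le \lambda_{\max}(A)$. (The paper's write-up is a bit loose---it writes ``$\mathcal{A}^k$'' for what should really be $\mathcal{Z}_\kappa^T \mathcal{B}_\kappa^k \mathcal{I}_\kappa$, and the passage from convergence of the aggregated series to the spectral bound on $\mathcal{B}_{\kappa}$ itself is left implicit---but the underlying idea is sound.) Your approach instead builds an explicit subinvariant vector $\mathcal{I}_{\kappa}u$ from the Perron eigenvector $u$ of $A$ and invokes Collatz--Wielandt. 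This is cleaner and more self-contained: it exploits the block structure \eqref{b_matrix} directly, the single-line check $AJu + A(I-J)u = Au$ does all the work, and no limiting argument over walk lengths is needed. The paper's approach, on the other hand, is closer to the spirit of the Katz construction (everything is phrased in terms of walk counts and series convergence) and does not require knowing Perron--Frobenius machinery beyond the link between spectral radius and radius of convergence. Your handling of the disconnected case by perturbation or by components is also more careful than what the paper provides.
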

\begin{proof}
	If $\mathcal{A} = \mathcal{Z}_m^T\mathcal{B}_m \mathcal{I}_m$, then $\mathcal{A}^k$ is the $n \times n$ matrix that counts the walks of length $k$ in the bounded-walk graph. Clearly, $[A^k]_{ij} \geq [\mathcal{A}^k]_{ij}$ for all $i,j$. This implies that if the sequence $\{\alpha^k A^k\}_{k=1}^{\infty}$ converges, then $\{\alpha^k \mathcal{A}^k\}_{k=1}^{\infty}$ converges. Thus, $\lambda_{max}(A) > \lambda_{max}({\mathcal{B}})$ 
\end{proof}

When $\alpha \to 0$, then only walks of very short length are taken into account and degree centrality usually performs well  \cite{borgatti2005centrality,kitsak2010identification}. However, as the value of $\alpha$ increases, eigenvector and Katz outperform other measures \cite{liu2016locating}.  Due to lemma \ref{lamda}, we are able to take larger values of $\alpha$ compared to the standard Katz centrality measure.

\section{Betweenness Centrality}
For a graph $G$, let $\sigma_{st}$ be the total number of shortest paths from nodes $s$ to $t$, while $\sigma_{st}(v)$ be the total number of shortest paths from $s$ to $t$ that pass through $v$. Then the (unnormalized) betweenness centrality of $v$, $BC(v)$, is given by 
\begin{equation}
BC(v) := \sum_{s \neq v \neq t} \frac{\sigma_{st}(v)}{\sigma_{st}}.
\end{equation}
The decision of whether or not to include the end-points of a path to fall on that path is usually made according to specific applications and goals. This is because the only difference this makes is an additive constant to $BC(v)$. In this paper, we will generally include the end-points.

Let  $\sigma_{st}^*$ be the total number of \emph{shortest feasible walks} (see Definition \ref{def:fw}) from  $s$ to $t$, with $\sigma_{st}^*(v)$ be the total number of shortest feasible walks from $s$ to $t$ that pass through $v$. Then the (unnormalized) \soc-betweenness centrality of $v$, $BC^*(v)$, is given by 
\begin{equation}
BC^*(v) := \sum_{s \neq v \neq t} \frac{\sigma_{st}^*(v)}{\sigma_{st}^*}.
\end{equation}
The computation of $BC^*$ depends on counting the number of shortest feasible walks for each pair $s,t \in V$.

\subsection{Counting Shortest Feasible Walks}
Let $d_G(u,v)$ for $u, v
\in V$ be the geodesic distance from $u$ to $v$. The term $\sigma_{st}(v)$ for $v \in V$ can be calculated as
\begin{equation}
\sigma_{st} (v) = \left\{\begin{array}{lr}
0, & \mbox{ if } d_G(s,t) < d_G(s,v) + d_G(v,t)\\
\sigma_{sv}\cdot \sigma_{vt}, & \mbox{otherwise.}
\end{array}\right.
\end{equation}
This property, however, does not hold for counting shortest feasible walks in $G$.  Thus, in order to count feasible walks in $G$, we turn to the directed graph $\mathcal{G}$.

\begin{lemma}
	Let $w = (s=u_0, u_1, \dots, u_k = t)$ be an $s$-$t$ walk in $G$ of length $k$, with $u_i \in V$. $w$ is a \emph{feasible walk} in $G$ if and only if there exists a walk in $\mathcal{G}$ with a node sequence of $(s, \kappa) = (u_0, i_0), (u_1, i_1), \dots, (u_k, i_k) = (t, i_k)$ for some $0 \leq i_0, \dots, i_k \leq \kappa$.
	\label{lemma1}
\end{lemma}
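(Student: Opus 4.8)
The statement is essentially a definitional unpacking of the state-space construction, so the proof should be a careful bookkeeping argument matching walks in $G$ to walks in $\mathcal{G}$. The plan is to prove the two directions separately, in each case constructing the corresponding walk step by step and tracking the \soc value explicitly.

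For the forward direction, suppose $w = (s = u_0, u_1, \dots, u_k = t)$ is a feasible walk in $G$. By Definition \ref{def:fw}, a commodity starting with full \soc can traverse $w$; this means there is a well-defined sequence of \soc values $i_0 = \kappa, i_1, \dots, i_k$ obtained by the update rule: $i_{j+1} = \kappa$ if $u_{j+1} \in \Omega$ (refill, via $\mathcal{E}_1$), and $i_{j+1} = i_j - 1$ if $u_{j+1} \notin \Omega$ (consume, via $\mathcal{E}_2$), and feasibility guarantees $i_j \geq 0$ at every step (in fact $i_j \geq 1$ whenever another consuming step follows). I would argue by induction on $j$ that $\big((u_j, i_j), (u_{j+1}, i_{j+1})\big)$ is an edge of $\mathcal{G}$: this is immediate because $(u_j, u_{j+1}) \in E$ and the pair $(i_j, i_{j+1})$ satisfies exactly the condition defining $\mathcal{E}_1$ (if $u_{j+1} \in \Omega$) or $\mathcal{E}_2$ (if $u_{j+1} \notin \Omega$, using $i_j = i_{j+1} + 1 \geq 1$). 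Hence $(u_0, i_0), \dots, (u_k, i_k)$ is a walk in $\mathcal{G}$ with $i_0 = \kappa$, as required.

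For the reverse direction, suppose there is a walk in $\mathcal{G}$ with node sequence $(s, \kappa) = (u_0, i_0), (u_1, i_1), \dots, (u_k, i_k)$. Each consecutive pair lies in $\mathcal{E}_1 \cup \mathcal{E}_2$; projecting onto the first coordinate, $(u_j, u_{j+1}) \in E$ for all $j$, so $w = (u_0, \dots, u_k)$ is a walk in $G$. It remains to check $w$ is feasible, i.e., that the \soc values a commodity actually experiences while traversing $w$ coincide with the $i_j$ and stay non-negative. Here I would note that the transition rules are deterministic given membership in $\Omega$: if $u_{j+1} \in \Omega$ the only available edges are in $\mathcal{E}_1$ forcing $i_{j+1} = \kappa$, and if $u_{j+1} \notin \Omega$ the only available edges are in $\mathcal{E}_2$ forcing $i_{j+1} = i_j - 1$; since all $i_j$ in the sequence satisfy $0 \le i_j \le \kappa$ by definition of $\mathcal{V}$, the commodity never runs out of \soc, so $w$ is feasible.

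The main (minor) obstacle is being precise about the edge case in $\mathcal{E}_2$: an edge $\big((u,i),(v,j)\big) \in \mathcal{E}_2$ requires $i = j+1$, so $i \ge 1$; one must observe that the existence of such an edge in the $\mathcal{G}$-walk already certifies the \soc was positive before the consuming step, which is exactly the non-negativity condition needed for feasibility. I would also remark that the $i_k$ at the terminal node is unconstrained beyond $0 \le i_k \le \kappa$, matching the statement, and that in the forward direction the particular starting requirement $i_0 = \kappa$ is precisely the "starting with full \soc" hypothesis of Definition \ref{def:fw}. No deep idea is needed; the content is that $\mathcal{G}$ was constructed so that its walks from layer $\kappa$ are in bijection with feasible walks of $G$.
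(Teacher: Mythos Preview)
Your proposal is correct and follows essentially the same approach as the paper: both directions are proved by direct bookkeeping, tracking the \soc values along the walk and matching each step to an edge in $\mathcal{E}_1$ or $\mathcal{E}_2$. In fact your reverse direction is more complete than the paper's, which only verifies that the projected sequence $(u_0,\dots,u_k)$ is a walk in $G$ and does not explicitly argue feasibility; your observation that the transition rules force $i_{j+1}=\kappa$ or $i_{j+1}=i_j-1$ and that all $i_j\ge 0$ by membership in $\mathcal{V}$ fills that small gap.
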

\begin{proof}
	For walk $w=(s=u_0, u_1, \dots, u_k = t)$ in $G$, let $i_0, i_1, \dots, i_k$ be the \soc value at nodes $u_0, \dots, u_k$ respectively during the walk. Since, $w$ is a feasible walk, $i_0 = \kappa$ and node $(u_{j+1}, i_{j +1})$ is adjacent to $(u_{j}, i_{j})$ in $\mathcal{G}$, for $0\leq j \leq k -1$. Therefore the node sequence $(s, \kappa) =$ $(u_0, i_0), (u_1, i_1), \dots, (u_k, i_k) = (t, i_k)$ is a walk in $\mathcal{G}$.
	On the other hand, if $(s, \kappa) =$ $(u_0, i_0), (u_1, i_1)$ ~$, \dots, (u_k, i_k)$ $ = (t, i_k)$ is a walk in $\mathcal{G}$, then $u_{j+1}$ is adjacent to $u_{j}$ in $G$, for $0\leq j \leq k -1$.  Thus, $w=(s=u_0, u_1, \dots, u_k = t)$ is a walk in $G$. 
\end{proof}
For $0\leq i \leq \kappa$, a walk from $(s, \kappa) \in \mathcal{V}$ to $(t, i)$, can be viewed as a feasible walk from $s$ to $t$ in $G$. However, a shortest path from $(s, \kappa) \in \mathcal{V}$ to $(t, i)$ is not necessarily a shortest feasible walk from $s$ to $t$ in $G$. In order to count shortest feasible walks in $G$, we introduce a set of dummy nodes into $\mathcal{G}$ and call the new graph $\mathcal{G}_{\star}  = (\mathcal{V}_{\star}, \mathcal{E}_{\star})$ where
\[
\begin{array}{rl}
\mathcal{V}_{\star} &:= \mathcal{V} \cup \{\ (u , \star) \ | \ u \in V \} \\
\mathcal{E}_{\star} &:= \mathcal{E} \cup \{\ \big((u, i), (u, \star)\big) \ | \ (u,i) \in \mathcal{E} \}
\end{array}
\]
Note: $\star$ can be viewed as a string or marker and is not a variable. The nodes $(u,i)$ for $0 \leq i \leq \kappa$ represent node $u\in V$ at different states $i$. However, for a shortest feasible walk from $s$ to $t$, we are interested in arriving at $t$ at any state, thus introducing a dummy node $(t, \star)$ to capture all final states. The following lemma shows how adding these dummy nodes, simplifies the process representing shortest feasible walks. 
\begin{lemma}
	Let $w = (s=u_0, u_1, \dots, u_k = t)$ be an $s$-$t$ walk in $G$ of length $k$, with $u_i \in V$. $w$ is a \emph{shortest} feasible walk in $G$ if and only if there exists a \emph{shortest path}, in $\mathcal{G}_{\star}$ with a node sequence of $(s, \kappa) = (u_0, i_0), (u_1, i_1), \dots, (u_k, i_k)=(t, i_k), (u_{k+1}, i_{k+1})= (t, \star)$ for some $0 \leq i_0, \dots, i_k \leq \kappa; k \in \NN$. 
	\label{lemma2}
\end{lemma}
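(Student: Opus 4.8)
The plan is to bootstrap off Lemma~\ref{lemma1}, which already identifies feasible $s$--$t$ walks in $G$ of length $k$ with walks in $\mathcal{G}$ from $(s,\kappa)$ to some state $(t,i_k)$, and to upgrade that correspondence so that it also respects the property of being \emph{shortest}. The whole point of the dummy layer $\{(u,\star)\}$ is to let us compare, with a single target, feasible walks that reach $t$ with \emph{different} residual \soc. The structural fact I would isolate first is that every dummy node $(u,\star)$ is a sink in $\mathcal{G}_{\star}$: it has incoming edges only from the states $(u,i)$ and no outgoing edges. Consequently, (i) any walk in $\mathcal{G}_{\star}$ that reaches $(t,\star)$ does so by a single final edge $\big((u_k,i_k),(t,\star)\big)$ with $u_k=t$, and it visits no dummy node before that; and (ii) deleting the dummy nodes cannot shorten any distance between two non-dummy nodes, so $d_{\mathcal{G}_{\star}}\big((s,\kappa),(t,i)\big)=d_{\mathcal{G}}\big((s,\kappa),(t,i)\big)$ for every $i$. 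Deleting (resp.\ appending) that last edge thus gives a bijection between paths in $\mathcal{G}_{\star}$ from $(s,\kappa)$ to $(t,\star)$ of length $k+1$ and walks in $\mathcal{G}$ from $(s,\kappa)$ to $\bigcup_{0\le i\le\kappa}\{(t,i)\}$ of length $k$.

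Next I would record the distance identity
\[
d_{\mathcal{G}_{\star}}\big((s,\kappa),(t,\star)\big)\;=\;1+\min_{0\le i\le\kappa} d_{\mathcal{G}}\big((s,\kappa),(t,i)\big),
\]
which follows from the previous paragraph (a geodesic to $(t,\star)$ is a geodesic to whichever $(t,i)$ is closest, followed by the dummy edge). By Lemma~\ref{lemma1}, a walk in $\mathcal{G}$ from $(s,\kappa)$ to some $(t,i)$ of length $k$ exists if and only if there is a feasible $s$--$t$ walk in $G$ of length $k$; taking minima on both sides shows that $\min_{0\le i\le\kappa} d_{\mathcal{G}}\big((s,\kappa),(t,i)\big)$ equals the length of a shortest feasible $s$--$t$ walk in $G$ (both sides being undefined/$\infty$ when no feasible walk exists, so the lemma is vacuous there). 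Combining the two displays, the length of a shortest feasible walk equals $d_{\mathcal{G}_{\star}}\big((s,\kappa),(t,\star)\big)-1$.

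Then the two directions are short. For the forward implication, given a shortest feasible walk $w$ of length $k$, Lemma~\ref{lemma1} produces a walk $(s,\kappa)=(u_0,i_0),\dots,(u_k,i_k)=(t,i_k)$ in $\mathcal{G}$; appending $(u_{k+1},i_{k+1})=(t,\star)$ gives a walk in $\mathcal{G}_{\star}$ of length $k+1$, and by the computation above $k+1=d_{\mathcal{G}_{\star}}\big((s,\kappa),(t,\star)\big)$, so this walk is a geodesic, hence a shortest path (a minimum-length walk is automatically simple). For the converse, a shortest path in $\mathcal{G}_{\star}$ with the stated node sequence has length $k+1=d_{\mathcal{G}_{\star}}\big((s,\kappa),(t,\star)\big)$; by the sink observation its last edge is the dummy edge, so stripping it leaves a walk in $\mathcal{G}$ from $(s,\kappa)$ to $(t,i_k)$ of length $k$, which by Lemma~\ref{lemma1} corresponds to a feasible $s$--$t$ walk in $G$ of length $k=d_{\mathcal{G}_{\star}}\big((s,\kappa),(t,\star)\big)-1$, i.e.\ of minimum feasible length. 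I expect the only real subtlety to be the bookkeeping around the dummy layer --- namely, verifying that an optimal path to $(t,\star)$ neither detours through a dummy node prematurely nor is shortened by the presence of the dummy nodes --- both of which are handled once and for all by the observation that the dummy nodes are sinks; the transfer of ``shortest'' is then just the distance identity together with Lemma~\ref{lemma1}.
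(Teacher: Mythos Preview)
Your proof is correct and, while it shares the same skeleton as the paper (both directions bootstrap off Lemma~\ref{lemma1}), it is organized around a different device: you first isolate the structural fact that each $(u,\star)$ is a sink, derive the distance identity
\[
d_{\mathcal{G}_\star}\big((s,\kappa),(t,\star)\big)=1+\min_{0\le i\le\kappa} d_{\mathcal{G}}\big((s,\kappa),(t,i)\big),
\]
and then both implications drop out by comparing lengths. The paper instead argues each direction by contradiction: for the forward implication it shows that the lifted walk $w'$ in $\mathcal{G}$ must be simple (else excising a cycle would yield, via Lemma~\ref{lemma1}, a shorter feasible walk in $G$), and for the converse it notes that a shorter feasible walk in $G$ would lift to a shorter walk to $(t,\star)$. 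Your route is a bit more explicit about why the dummy layer cannot interfere with distances among non-dummy states, and it cleanly separates the ``minimum over terminal \soc'' step from the length-matching step; the paper's argument is terser but leaves the passage from ``$w'$ is simple'' to ``$w'$ (plus the dummy edge) is a shortest path'' implicit. Either approach is fine here.
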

\begin{proof}
	If $w$ is a shortest feasible walk in $G$, from Lemma \ref{lemma1}, it follows that there exists a walk $w'$ in $\mathcal{G}$ and subsequently in $\mathcal{G}_{\star}$ with $w' = ((s, \kappa) = (u_0, i_0), (u_1, i_1), \dots,(u_{k}, i_{k}) = (t, i_k))$. Suppose $w'$ is not a path in $\mathcal{G}$. Then there exists a node $(u_j, i_j)$ for some $j \in \NN$ visited more than once. This forms a cycle $C$ within $w'$. Define $w''$ as a node sequence in $\mathcal{G}$ where the cycle $C$ in $w'$ is replaced with $(u_j, i_j)$. It is easy to see that $w''$ is a walk in $\mathcal{G}$ with length strictly less than $w'$. By Lemma \ref{lemma1}, this implies that there exists a feasible walk in $G$ with length less than $w$ contradicting the assumption that $w$ is a shortest feasible walk in $G$. Thus, $w'$ and subsequently the node sequence $(s, \kappa) = (u_0, i_0), (u_1, i_1), \dots, (u_k, i_k)=(t, i_k), (u_{k+1}, i_{k+1})= (t, \star)$ for some $0 \leq i_0, \dots, i_k \leq \kappa$, form a shortest path in $\mathcal{G}_{\star}$.
	
	On the other hand, suppose $w'$ is a shortest path from $(s, \kappa)$ to $(t, \star)$ in $\mathcal{G}_{\star}$ of length $k +1$ for some $k \in \NN$. Lemma \ref{lemma1}, implies that, there exists an $s$-$t$ feasible walk $w$ in $G$ of length $k$. By the same lemma, the existence of a shorter $s$-$t$ feasible walk in $G$ would imply the existence of a walk from $(s, \kappa)$ to $(t, \star)$ in $\mathcal{G}_{\star}$ with length less than $k +1$, contradicting the shortest path assumption.
\end{proof}
For a set $A \subset V$ and $s, t\in V$, define $\sigma_{st}(A)$ as the number of $s$-$t$ shortest paths that pass through one or more nodes in $A$.
\begin{lemma}
	For $s, t, v \in V$, let $\gamma, \tau \in \mathcal{E}_{\star}$ with $\gamma = (s, \kappa)$, $ \tau = (t, \star)$ and $A_v = \{ (v, i)| 0 \leq i \leq \kappa\}$then:
	\begin{enumerate}
		\item $\sigma_{st}^* = \sigma_{\gamma \tau}$
		\item $\sigma_{st}^*(v) = \sigma_{\gamma \tau}\big(A_v\big)$
	\end{enumerate}
\end{lemma}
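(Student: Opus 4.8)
The plan is to leverage Lemma \ref{lemma2}, which already establishes a bijective-style correspondence between shortest feasible $s$-$t$ walks in $G$ and shortest paths from $\gamma = (s,\kappa)$ to $\tau = (t,\star)$ in $\mathcal{G}_{\star}$. The two claims are then essentially bookkeeping on top of that correspondence, so the proof should be short; the main work is making the correspondence \emph{count-preserving} rather than merely existential.

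For part 1, I would first argue that the map sending a shortest feasible walk $w = (s=u_0,\dots,u_k=t)$ in $G$ to the node sequence $((u_0,i_0),\dots,(u_k,i_k),(t,\star))$ in $\mathcal{G}_{\star}$ — where the $i_j$ are the \soc values forced by the walk starting at full charge — is well-defined and injective: the \soc values are uniquely determined by $w$ (at each step \soc either drops by one or resets to $\kappa$ according to whether the next node lies in $\Omega$), so distinct walks in $G$ give distinct sequences, and by Lemma \ref{lemma2} each such sequence is a shortest path in $\mathcal{G}_{\star}$. Conversely, any shortest path from $\gamma$ to $\tau$ has its last edge of the form $((t,i_k),(t,\star))$ and projects (drop the \soc coordinates, delete the terminal dummy node) to an $s$-$t$ walk in $G$, which by Lemma \ref{lemma2} is a shortest feasible walk; this projection inverts the first map because the \soc coordinates along any path out of $(s,\kappa)$ are forced by the edge definitions of $\mathcal{E}_1,\mathcal{E}_2$. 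Hence the two sets are in bijection and $\sigma_{st}^* = \sigma_{\gamma\tau}$.

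For part 2, I would observe that under the bijection of part 1, a shortest feasible walk $w$ passes through $v \in V$ if and only if some node $u_j = v$, which happens if and only if the corresponding shortest path in $\mathcal{G}_{\star}$ passes through $(v, i_j)$ for some $0 \le i_j \le \kappa$, i.e. through a node of $A_v = \{(v,i) : 0 \le i \le \kappa\}$. (Note the dummy node $(v,\star)$ never appears on a shortest $\gamma$-$\tau$ path unless $v = t$, and even then it is the endpoint $\tau$; since betweenness sums over $s \ne v \ne t$ this is not an issue, but in any case including or excluding it only shifts things by the convention already discussed.) Restricting the bijection of part 1 to the sub-collections of walks/paths that meet $A_v$ therefore gives $\sigma_{st}^*(v) = \sigma_{\gamma\tau}(A_v)$.

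The one subtlety to watch — and the only place the argument could go wrong — is the claim that the \soc labelling is \emph{forced}, so that the correspondence between $G$-walks and $\mathcal{G}_{\star}$-paths is genuinely one-to-one rather than one-to-many: I must check that from state $(u,i)$ with $(u,v)\in E$ there is exactly one admissible successor state at $v$, namely $(v,\kappa)$ if $v \in \Omega$ (via $\mathcal{E}_1$) and $(v,i-1)$ if $v \notin \Omega$ (via $\mathcal{E}_2$), with no successor when $v \notin \Omega$ and $i = 0$ — which is exactly infeasibility of that step. This is immediate from the definitions of $\mathcal{E}_1$ and $\mathcal{E}_2$, so the bijection is clean and both equalities follow.
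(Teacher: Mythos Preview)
Your proposal is correct and follows essentially the same approach as the paper: invoke Lemma~\ref{lemma2} to obtain a correspondence between shortest feasible $s$-$t$ walks in $G$ and shortest $\gamma$-$\tau$ paths in $\mathcal{G}_\star$, then restrict to those meeting $v$ (respectively $A_v$) for part~2. If anything, your version is more careful than the paper's own proof---you explicitly verify that the \soc coordinates are forced so the correspondence is a genuine bijection, whereas the paper simply asserts that Lemma~\ref{lemma2} ``gives a one-to-one mapping'' without spelling this out.
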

\begin{proof}
	For the first part, Lemma \ref{lemma2} gives a one-to-one mapping between set of shortest feasible walks in $G$ and set of shortest paths in $\mathcal{G}_{\star}$ whose cardinalities are given by $\sigma_{st}^*$ and $\sigma_{\gamma \tau}$ respectively. For the second part, $\sigma_{\gamma \tau}(A_v)$ counts the number of shortest paths in $\mathcal{G}_{\star}$ that pass through a node in $A_v$. Applying lemma \ref{lemma2}, any shortest path that passes through a node in $A_v$ is a shortest feasible walk in $G$ that passes through $v$. 
\end{proof}
Due to the above lemma, we can compute $BC^*(v)$ as follows: 
\begin{theorem}
	For graph $G=(V,E)$, and directed graph $\mathcal{G}_{\star}$, let $S = \{(s, i)\in \mathcal{E}_{\star} | s \in V, i = \kappa \}$ and $T = \{(t, i)\in \mathcal{E}_{\star} | s \in V, i = \star \}$, $A_v = \{(v, i) | 0 \leq i \leq \kappa\}$, with $v \in V$, then
	\begin{equation}
	BC^*(v) = \sum_{\gamma, \tau \in \mathcal{E}_{\star}\\ \gamma \in S, \tau \in T}\frac{\sigma_{\gamma \tau}\big(A_v\big)}{\sigma_{\gamma \tau}}.
	\label{mod_bc}
	\end{equation}
\end{theorem}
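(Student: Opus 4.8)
The plan is to obtain \eqref{mod_bc} directly from the definition of $BC^{*}(v)$ by a change of summation variable furnished by the preceding lemma, so the argument is essentially a re-indexing together with a careful treatment of the summation domain. First I would record the two structural bijections underlying the re-indexing: the assignment $s \mapsto (s,\kappa)$ identifies $V$ with $S \subset \mathcal{V}_{\star}$, and $t \mapsto (t,\star)$ identifies $V$ with $T \subset \mathcal{V}_{\star}$ — both are well defined and bijective because $\mathcal{G}_{\star}$ contains the node $(u,\kappa)$ and the dummy node $(u,\star)$ for every $u \in V$. Under these identifications an ordered pair $(s,t) \in V \times V$ corresponds to the ordered pair $(\gamma,\tau) = ((s,\kappa),(t,\star))$, so any sum indexed by pairs of vertices of $G$ equals the corresponding sum indexed by $S \times T$.

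Next I would invoke the preceding lemma termwise: for each such pair it gives $\sigma^{*}_{st} = \sigma_{\gamma\tau}$ and $\sigma^{*}_{st}(v) = \sigma_{\gamma\tau}(A_v)$, with $A_v = \{(v,i) \mid 0 \le i \le \kappa\}$. Substituting both equalities into $BC^{*}(v) = \sum_{s \ne v \ne t} \sigma^{*}_{st}(v)/\sigma^{*}_{st}$ turns the summand into $\sigma_{\gamma\tau}(A_v)/\sigma_{\gamma\tau}$ and, via the bijections above, turns the index set into a subset of $\{\gamma \in S,\ \tau \in T\}$. To arrive at exactly \eqref{mod_bc} I would then reconcile the index sets: adopting the convention stated just before the definition of $BC$, namely that endpoints are included and that a pair with no shortest feasible walk (equivalently $\sigma_{\gamma\tau} = 0$) contributes $0$ rather than $0/0$, the restriction $s \ne v \ne t$ may be dropped at the price of an additive constant depending only on $|V|$ — exactly as in the classical betweenness case, since a term with $s = v$ has $\gamma \in A_v$ (and one with $t = v$ has $\tau \in A_v$), hence $\sigma_{\gamma\tau}(A_v) = \sigma_{\gamma\tau}$ and contributes $1$. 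This reconciliation is the step I would write out most carefully.

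The main obstacle here is bookkeeping rather than mathematics. I would make sure that $\sigma_{\gamma\tau}(A_v)$, which by definition counts each shortest $\gamma$-$\tau$ path meeting $A_v$ once regardless of how many nodes of $A_v$ it passes through, is matched correctly with $\sigma^{*}_{st}(v)$, which counts each shortest feasible $s$-$t$ walk through $v$ once regardless of how often that walk revisits $v$. This matching is precisely part (2) of the preceding lemma, which in turn rests on Lemma \ref{lemma2}, so I would cite it rather than reprove it, but I would state explicitly that it is what rules out double counting when a shortest path visits several distinct nodes $(v,i)$, $(v,i')$. Finally I would note that shortest paths in $\mathcal{G}_{\star}$ are well defined and obtainable by breadth-first search because $\mathcal{G}_{\star}$ is unweighted, so that \eqref{mod_bc} is an effective computational formula and not merely an identity.
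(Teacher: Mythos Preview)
Your proposal is correct and follows the paper's own route: the paper presents the theorem as an immediate consequence of the preceding lemma (``Due to the above lemma, we can compute $BC^*(v)$ as follows'') and gives no further argument, so your termwise substitution via $s\mapsto(s,\kappa)$, $t\mapsto(t,\star)$ is exactly what is intended. Your added bookkeeping on the endpoint convention and on why $\sigma_{\gamma\tau}(A_v)$ avoids double counting when a path meets several $(v,i)$ is more than the paper itself provides, but it is in the same spirit and fills in details the paper leaves implicit.
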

We now show how to compute $BC^*(v)$ without explicitly constructing $\mathcal{G}_{\star}$.  In our computations, the value $\sigma_{\gamma \tau}\big(A_v\big)$ is approximated by $\displaystyle \sum_{i=0}^{\kappa} \sigma_{\gamma \tau}\big((v, i)\big)$.

\subsection{Computing SOC-Betweenness Centrality}
In order to compute $BC^*$, we build on Brandes' algorithm \cite{brandes2001faster} for computing $BC(v)$. We first give a summary of Brandes' algorithm.

The \emph{pair-dependency} is defined as the ratio 
\begin{equation}
\delta_{st}(v) := \frac{\sigma_{st}(v)}{\sigma_{st}}
\end{equation} of a pair $s, t \in V$ on an intermediary node $v\in V$. In order to eliminate the need for explicit summation of all pair-dependencies, Brandes introduces the notion of \emph{dependency} of a vertex $s\in V$ on  a single vertex $v \in V$, defined as 
\begin{equation}
\delta_{s \bullet}(v) := \sum_{t \in V}\delta_{st}(v)
\end{equation}
and shows that the dependency of $s \in V$ on any $v \in V$ obeys the following recursive relation
\begin{equation}
\delta_{s \bullet}(v) = \sum_{w: v \in P_s(w)} \frac{\sigma_{sv}}{\sigma_{sw}} \cdot (1 + \delta_{s \bullet}(w)).
\label{standard_recur}
\end{equation}
where $P_s(v)$ is the set of \emph{predecessors} of a vertex $v$ during a breadth-first search (BFS) from source $s\in V$. It is given by
\begin{equation}
P_s(v) := \{u \in V: \{u,v\} \in E, \ d_G(s,v) = d_G(s,u)  + 1 \}
\end{equation}
where $d_G(s,v)$ is the geodesic distance from $s$ to $v$. In summary, Brandes' algorithm for computing BC is as follows: for each source node, $s \in V$,
\begin{enumerate}[i]
	\item perform BFS computing number of shortest paths to every other node, $t \in V$
	\item back propagation: compute $\delta_{s\bullet}(v)$ for $v \in V$ in order of non-increasing distance from $s$.
\end{enumerate}
One major difference between equation (\ref{mod_bc}) and the standard computation of $BC$ is that equation (\ref{mod_bc}) is constrained by the fact that the source and target nodes must be chosen from sets $S$ and $T$. Therefore, the recursive relation given by equation (\ref{standard_recur}) can not be used as it is because not all nodes are target nodes.
For $T \subset V$, consider the function
\begin{equation}
\delta_{s \bullet}^T(v) := \sum_{t \in T}\delta_{st}(v),
\label{target}
\end{equation}
then 
\begin{lemma}
	\begin{equation}
	\delta_{s \bullet}^T(v) = \sum_{w: v \in P_s(w)} \frac{\sigma_{sv}}{\sigma_{sw}} \cdot (\mathbbm{1}_{T}(w) + \delta_{s \bullet}^T(w)),
	\label{recur_target}
	\end{equation}
\end{lemma}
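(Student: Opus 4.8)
The plan is to mimic Brandes' derivation of the unrestricted recursion (\ref{standard_recur}), but with the sum over all targets replaced by a sum over $t\in T$, keeping careful track of where the constant $1$ in (\ref{standard_recur}) turns into $\mathbbm{1}_T(w)$. The whole argument takes place in an arbitrary (here, the relevant) graph and relies only on the structural description of $\sigma_{st}(v)$ already recalled in the excerpt: $\sigma_{st}(v)=\sigma_{sv}\sigma_{vt}$ when $v$ lies on a shortest $s$-$t$ path, and $\sigma_{st}(v)=0$ otherwise.

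First I would prove the single-target identity: for fixed $s,v\in V$ with $v\neq s$ and any $t\in V$,
\[
\delta_{st}(v)=\sum_{w:\,v\in P_s(w)}\frac{\sigma_{sv}}{\sigma_{sw}}\bigl(\mathbbm{1}_{\{t\}}(w)+\delta_{st}(w)\bigr).
\]
This comes from partitioning the shortest $s$-$t$ paths that pass through $v$ according to the vertex $w$ appearing immediately after $v$: such a $w$ is exactly a vertex with $v\in P_s(w)$, and the number of shortest $s$-$t$ paths through $v$ whose next vertex is $w$ equals $\sigma_{sv}\cdot\sigma_{wt}$ when $w$ is itself on a shortest $s$-$t$ path. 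Dividing by $\sigma_{st}$ and rewriting $\sigma_{sv}\sigma_{wt}/\sigma_{st}=(\sigma_{sv}/\sigma_{sw})\cdot(\sigma_{sw}\sigma_{wt}/\sigma_{st})=(\sigma_{sv}/\sigma_{sw})\,\delta_{st}(w)$ produces the $\delta_{st}(w)$ term, while the boundary case $w=t$ contributes the extra summand $\sigma_{sv}/\sigma_{st}$, which is $(\sigma_{sv}/\sigma_{sw})\cdot 1$ with $w=t$, recorded by $\mathbbm{1}_{\{t\}}(w)$.

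Next I would sum this identity over $t\in T$ and interchange the two finite summations:
\[
\delta_{s\bullet}^T(v)=\sum_{t\in T}\delta_{st}(v)=\sum_{w:\,v\in P_s(w)}\frac{\sigma_{sv}}{\sigma_{sw}}\Bigl(\sum_{t\in T}\mathbbm{1}_{\{t\}}(w)+\sum_{t\in T}\delta_{st}(w)\Bigr).
\]
Then $\sum_{t\in T}\mathbbm{1}_{\{t\}}(w)=\mathbbm{1}_T(w)$ and $\sum_{t\in T}\delta_{st}(w)=\delta_{s\bullet}^T(w)$ by the definition (\ref{target}), which is precisely (\ref{recur_target}).

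Since this is Brandes' argument with the summation domain narrowed, the content is bookkeeping rather than a new idea, and the only delicate point is the endpoint term: in the unrestricted case the $w=t$ contribution sums to the constant $1$ because every vertex is eventually a target, whereas once targets are restricted to $T$ it survives only when $w\in T$, which is exactly why the constant becomes $\mathbbm{1}_T(w)$. I would also fix, once and for all, the convention on whether path endpoints are counted (the paper includes them) so that the $w=t$ term is treated consistently, and note that the recursion is asserted for $v\neq s$, with the vertices on the last BFS layer from $s$ serving as the base case $\delta_{s\bullet}^T(v)=\mathbbm{1}_T$-weighted leaf values; these are minor points that do not affect the structure of the proof.
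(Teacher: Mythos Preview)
Your proposal is correct and follows essentially the same approach as the paper: both arguments rest on Brandes' decomposition of $\delta_{st}(v)$ according to the vertex $w$ immediately following $v$ on a shortest path, then restrict the target sum to $T$. The paper phrases this via the edge-dependency $\delta_{st}(v,\{v,w\})$ and splits into the cases $w\in T$ versus $w\notin T$, whereas you prove a single uniform single-target identity and then sum over $t\in T$; your packaging is slightly cleaner, but the underlying combinatorics is identical.
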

\noindent where $\mathbbm{1}_T(w)$ is the indicator function such that $\mathbbm{1}_T(w) = 1$ if $w \in T$ and 0 otherwise. 
\begin{proof}
	For each term on the right side, if $w \in T$, then the summand follows from  equation  (\ref{standard_recur}).\\ Consider the case if $w \notin T$.
	Extend the definition of the pair-dependency to include an edge $e$ such that, $\delta_{st}(v,e):= \sigma_{st}(v,e)/\sigma_{st}$ where $\sigma_{st}(v,e)$ is the number of shortest $s$-$t$ paths that contain both $v$ and $e$. Then Brandes showed that
	
	\begin{equation*}
	\delta_{s \bullet}(v) = \displaystyle \sum_{w: v \in P_s(w)} \sum_{t \in V}\delta_{st} (v, \{v, w\}) 
	\end{equation*}
	and
	\begin{equation*}
	\delta_{st} (v, \{v, w\}) = \left\{\begin{array}{lr}
	\frac{\sigma_{sv}}{\sigma_{sw}}, & \mbox{ if } t = w\\
	\frac{\sigma_{sv}}{\sigma_{sw}}\cdot \frac{\sigma_{st}(v)}{\sigma_{st}}, & \mbox{ otherwise}
	\end{array}\right.
	\end{equation*}	
	It then follows that 
	\begin{equation*}
	\delta_{s \bullet}^T(v) = \displaystyle \sum_{w: v \in P_s(w)} \sum_{t \in T}\delta_{st} (v, \{v, w\}) .
	\end{equation*}
	So for $w \notin T$, then $t \neq w$ and
	\begin{equation*}
	\begin{array}{ll}
	\displaystyle \sum_{w: v \in P_s(w)} \sum_{t \in T}\delta_{st} (v, \{v, w\}) &= 
	\displaystyle \sum_{w: v \in P_s(w)}\frac{\sigma_{sv}}{\sigma_{sw}}\cdot \frac{\sigma_{st}(w)}{\sigma_{st}}\\
	& = \displaystyle \sum_{w: v \in P_s(w)}\frac{\sigma_{sv}}{\sigma_{sw}}\cdot \delta{s \bullet}^T(w)
	\end{array}
	\end{equation*}	
\end{proof}
The recursive relation in equation \ref{recur_target} is used to compute the SOC-betweenness centrality. Algorithm (\ref{soc_btn}) describes this computation in detail.

\begin{algorithm}[ht]
	\caption{SOC-Betweenness Centrality}\label{soc_btn}
	\begin{algorithmic}[1]
		\State \textbf{Input:} $G=(V,E),  \mathcal{V}_{\star}, \Omega, \kappa$
		\State \textbf{Output:} $bc[v], v \in \mathcal{V}_{\star}$
		\State $bc[\nu] \gets 0, \nu \in \mathcal{V}_{\star}$
		\State $\Sigma \gets \{(u, i) \in \mathcal{V}_{\star} | u \in \mathcal{V}_{\star} , i = \kappa\}$
		\For {$s \in \Sigma$}
		\State $S \gets$ empty stack;
		\State $P[\omega] \gets$ empty list, $\omega \in \mathcal{V}_{\star} $;
		\State $\sigma[t] \gets 0, t \in \mathcal{V}_{\star} ; \sigma[s] \gets 1$
		\State $d[t] \gets -1, t \in \mathcal{V}_{\star} ;\ d[s] \gets 0$
		\State $Q \gets $ empty queue;
		\State enqueue $s \to Q;$
		\While {$Q \ not \ empty$}
		\State dequeue $(v,i) \gets Q;$
		\State push $(v,i) \to S;$
		\If {$i \neq \star$}
		\For {$neighbor \ w \ of v$}
		\If {$w \in \Omega$} $current\_node \gets (w, \kappa)$
		\Else 
		\If {$i \geq 0$}  $current\_node \gets (w, i-1)$
		\Else 
		$\ current\_node \gets -1$
		\EndIf
		\EndIf
		\If {$current\_node \neq -1$}
		\If {$d[current\_node] < 0$}
		\Comment $w \ found \ for \ first \ time ?$
		\State enqueue $current\_node \to Q;$
		\State $d[current\_node] \gets d[(v,i)] + 1;$
		\EndIf
		\If {$d[current\_node] = d[(v,i)] + 1$}
		\Comment{shortest path to $w$ via $v$?}
		\State $\sigma[current\_node] \gets \sigma[current\_node] + \sigma[(v,i)];$
		\State append $ (v,i) \to P[current\_node];$
		\EndIf 
		\EndIf
		
		\EndFor
		\EndIf
		
		\EndWhile
		\State $\delta[\nu] \gets 0, \nu \in \mathcal{V}_{\star};$
		\Comment{$S$ return vertices in order of non-increasing distance from $s$}
		\State $\chi[(v, i)] \gets 0, (v,i) \in \mathcal{V}_{\star} $
		\State $\chi[(v, i)] \gets 1, (v,\star) \in \mathcal{V}_{\star} $
		\While {$S \ not \ empty$}
		\State pop $(w,i) \gets S;$
		\If {$ \chi[(w,i)] = 1$}
		\For {$(v,j) \in P[(w,i)]$}
		$\chi[(v,j)] \gets 1$
		\If {$i = \star$}
		$\ \delta[(v,j)] \gets \delta[(v,j)] + \frac{\sigma[(v,j)]}{\sigma[(w,i)]}\cdot (1 + \delta[(w,i)]);$
		\Else  
		~$\delta[(v,j)] \gets \delta[(v,j)] + \frac{\sigma[(v,j)]}{\sigma[(w,i)]}\cdot \delta[(w,i)];$
		\EndIf
		\EndFor
		\If {$(w,i) \neq s$}
		$bc[(w,i)] \gets bc[(w,i)] + \delta[(w,i)]$
		\EndIf
		\EndIf
		\EndWhile
		\EndFor
	\end{algorithmic}
\end{algorithm}

\section{Random-Walk Betweenness Centrality}
A common criticism for betweenness centrality is that it does not take non-shortest paths into account and is therefore inappropriate in cases where information spread is governed by other rules \cite{borgatti2005centrality}. As a result, variants of betweenness centrality have been proposed such as betweenness measures based on network flow  \cite{freeman1991centrality}, and random-walk betweenness centrality (RWBC) \cite{newman2005measure,brandes2005centrality}. In some sense, as suggested by Newman,  ``RWBC and BC can be viewed as being on opposite ends of a spectrum of possibilities, one representing information that is moving at random and has no idea of where it is going and the other knowing precisely where it is going''. Some real-world situations mimic these extremes \cite{newman2005measure,freeman1978centrality}, 
however, others such as the small-world experiment \cite{kleinfeld2002small} fall somewhere in between. 

In a network where the flow process is coupled with a SOC constraint, it is therefore natural to also propose a variant of RWBC for such networks. If we consider an undirected connected graph, for any pair of nodes $s,t$, a random walk starting at $s$ will eventually arrive at $t$ with high probability. However, in a network where the flow is coupled with a SOC constraint, and likewise, a directed network that is not strongly connected, not every random walk starting at $s$ has a positive probability of arriving at $t$.  With this in mind, the proposed variant of RWBC only considers walks that arrive at the destination node. For example, if a node does not have enough SOC to travel from $s$ to $t$ via any walk, then the pair $s$-$t$ does not contribute to centrality score.

Consider RWBC proposed in \cite{newman2005measure}. Unlike the standard betweenness centrality measure that only considers shortest paths between a pair of nodes, RWBC takes all paths into account while giving more importance to shorter paths.  RWBC of a node $i$ is defined as the \emph{net} number of times a random walk passes through $i$. By \emph{net}, authors meant that if a walk passes through $i$ and later passes back through it in the opposite direction, the two would cancel out and there is no contribution to the betweenness. 

RWBC was originally proposed for undirected graphs. In this section, we first generalize RWBC to directed graphs.  In a directed graph $\mathcal{G} = (\mathcal{V}, \mathcal{E})$, for any pair of nodes $s,t \in \mathcal{V}$, it is not guaranteed that every random walk from $s$ will eventually arrive at $t$. We generalize RWBC for directed graphs to only include random walks from $s$ to $t$. Let $\mathcal{G}_{s,t} = (\mathcal{V}_{s,t}, \mathcal{E}_{s,t})$ be a subgraph of $\mathcal{G}$ such that every node lies on a walk from $s$ to $t$. RWBC is adjusted for $\vec{\mathcal{G}_{s,t}}$ as follows. Let $\mathcal{A}$ adjacency matrix with $D$ the out-degree diagonal matrix, where $D$ is defined as
\begin{equation*}
[D]_{ij}  :=  \left\{\begin{array}{ll}
deg_{+}(v_i) & \mbox{ if } i =j\\
0, & \mbox{ otherwise},
\end{array}\right.
\end{equation*}
\noindent where $deg_{+}(v_i)$ is the out-degree of node $v_i$. Define the transition matrix of $\vec{\mathcal{G}_{s,t}} $ as
\begin{equation}
M := D^{-1}\mathcal{A}
\end{equation}

For a walk starting at $s$, the probability that it is at $j$ after $r$ steps is given by $[M^r]_{sj}$. The probability that the walk continues further to an adjacent vertex $i$ is $[M^r]_{sj}d^{-1}_{j}$, where $d^{-1}_{j}$ is the out-degree at $j$. Thus, the expected number of times a walk from $s$ to $t$ uses the directed edge $(j, i)$ is given by $[(I -M_t)^{-1}]_{sj}d^{-1}_{j}$,
which is the $s$-$j$th entry of the matrix given by 
\begin{equation}
(I - M_t)^{-1}D^{-1}_{t} = (D_{t} - A_t )^{-1},
\end{equation}
where $D_t$ and $A_t$ is the matrix derived from deleting row and column $t$. 
Add the zero column back to $(D_{t} - A_t )^{-1}$ and call this matrix $T$. 
Let $\textbf{s}$ be the vector given by 
\begin{equation*}
s_i := \left\{\begin{array}{lr}
1,& \mbox{if } i = s\\
-1,& \mbox{if } i = t\\
0, &\mbox{otherwise}
\end{array}\right.
\end{equation*}
Let the vector $f$ be defined as 
\begin{equation*}
f := \textbf{s}^T T 
\end{equation*}
then, the $i$th entry of $f$, $f_i$, represents the expected number of walks from $s$ to $t$ that pass through node $i$. If $\mathcal{D}_f$ is the diagonal matrix with $f_i$ at the $i$th diagonal position then the matrix
\begin{equation*}
\mathcal{F} := \mathcal{D}_f\mathcal{A}
\end{equation*}
gives a matrix whose $i$-$j$ value represents the expected number of times a random walk from $s$ to $t$ uses edge $(i,j)$. 
The net flow of random walk through the $i$th vertex is for a given $s$-$t$ pair is given by
\begin{equation}
I_i^{(st)} = \frac{1}{2}\sum_{(i,j)\in \mathcal{E}_{s,t}} |\mathcal{F}_{i,j} - \mathcal{F}_{j,i}|.
\label{rwbc}
\end{equation}

The expression in (\ref{rwbc}) is used to compute the centrality scores of a directed graph $\mathcal{G}_{s,t}$ arising from random walks starting at node $s$ to $t$. Note that for every node $v$ in $\mathcal{G}_{s,t}$, $v$ must be at a finite distance from $s$ and $t$. The centrality for each node in $\mathcal{G}$ is then given by the sum of the individual scores for each source-target pair. Let $\hat{Y}$ be the vector of centrality scores of $\mathcal{G}$, the \soc-RWBC is given by the vector
\begin{equation}
\hat{Y}^T\cdot \mathcal{I}_{\kappa}
\end{equation}
where $\mathcal{I}_{\kappa}$ is the block matrix defined in (\ref{block_identity}).

\section{Computational Experiments}
In the preceding sections, mathematical models for the three proposed centrality measures are given. The question then arises, ``How good are these centrality measures?'' We tackle this question from three different perspectives. First, \emph{usability}: how can we meaningfully use the proposed centrality measures. Second, \emph{robustness}: how robust are the centrality measures with respect to their parameters. Lastly, \emph{novelty}: how are the proposed centrality measures different from their well-established predecessors. Experiments in this section are carried on the graph datasets given in Table \ref{tab:graphs}.

\begin{table}[htp]
	\caption{
		Experimental graph datasets: $d_{\min}, d_{\mbox{avg}}, d_{\max}$ represents the minimum, average and maximum degree.
	}
	\label{tab:graphs}
	\begin{tabular}{p{3.8cm}p{1.2cm}p{1.2cm}p{0.9cm}p{0.9cm}p{0.9cm}p{3.5cm}}
		\hline
		Graph   & Nodes     & Edges & $d_{\min}$ & $d_{\mbox{avg}}$  & $d_{\max}$ &Reference \\\hline
		Router Network  & 2114 & 6632 &1 &6 & 109 & \cite{rossi2015network}\\
		Minnesota State Road Network & 2642& 3303&1&2 &5 &\cite{davis2011university}\\
		Gnutella Network& 6301 & 2077&1&7 &97 & \cite{ripeanu2002mapping}\\
		Collaboration Network & 5242 &14496&0&5&81 & \cite{leskovec2007graph}
		\\  \hline
	\end{tabular}
\end{table}

\subsection{Usability}

In Borgatti \cite{borgatti2005centrality}, the expected centrality is defined as a centrality score given by a closed-form expression, 
and \emph{realized centrality} as the actual centrality score observed in the context of a particular flow process. Therefore, one can view a centrality measure as a formula-based prediction of a flow process through a node. It is therefore important to compare the predictions given by the closed-form expression with the actual frequency of traffic observed flowing through a node across multiple instances. For example, in order to test whether betweenness centrality is a good prediction of observed traffic through a node, the expected betweenness centrality is compared with the realized betweenness centrality, where the actual frequency of traffic through a given node is referred to as the realized betweenness centrality while the formula-based centrality is referred to as the expected betweenness centrality. 
In this section, we compare the expected centrality with the realized centrality values for the proposed centrality measures. Given that realized centrality scores are achieved by running long simulations, we show the usability of the proposed measures as way to efficiently estimate the outcome of these computationally expensive simulations.

In order to observe realized centrality values, simulations for each of the three flow process are developed. In general, centrality measures are primarily used either as ranking algorithms or as methods for identification of influential nodes. We therefore compare the expected and realized centrality values using Kendall's Tau \cite{kendall1938new} 
rank correlation coefficient. Kendall's Tau is given by
\begin{equation*}
\tau := \frac{2}{n(n-1)}\sum_{i<j}sgn \Big[(y_i-y_j)(z_i-z_j)\Big],
\end{equation*}
where, for each node $i$, we denote the node's spreading influence and its  centrality measure by $y_i$ and $z_i$, respectively. The $sgn(y)$ is a piecewise function such that $sgn(y) = 1$ if $y>0$, $-1$ if $y<0$ and 0 if $y =0$. The values of $\tau$ belong to the range $[-1,1]$, where larger values of $\tau$ correspond to a higher correlation between the expected and realized centralities.   

\subsubsection{\soc Katz Centrality}
In order to compute the realized centrality values with respect to \soc-Katz centrality, we turn to the susceptible-infected-recovered (SIR) spreading model (also called susceptible-infected-removed
model) \cite{hethcote2000mathematics}. Klemm et. al \cite{klemm2012measure} suggested that the eigenvector centrality can be used for estimating a spreading influence of the nodes in the SIR model, by \cite{liu2016locating} defining the dynamical-sensitive (DS) centrality and showing that it more accurately locates influential nodes in the SIR model. The DS centrality is very closely related to the Katz centrality.

In the SIR model, a node can be in one of following states: (i) susceptible, nodes can become infected, (ii) infected, nodes are infected and can infect susceptible nodes, and (iii) recovered, nodes have recovered and developed immunity, thus cannot be infected again. In order to estimate the spreading influence of a node $v$, initially, all nodes are susceptible and $v$ is infected. At each step, an infected node tries to infect its susceptible neighbors and succeeds with probability $\alpha$. The infected node enters the recovered state with probability $\mu$. In this work, we set $\mu =1$, i.e., the infection can be transmitted only once. The process stops if no new infections are formed or after a fixed number of steps. We generalize the SIR spreading process to accommodate the \soc parameter. 

Define an edge $(i,j)$ as \emph{active} if node $i$ infected $j$ via edge $(i,j)$. A stopping criteria for the SIR spreading process for a fixed number of steps $\kappa$ can be viewed as follows: Let $u$ be the initially infected node, then
\begin{itemize}
	\item [] \textit{an infected node $v$ cannot infect its susceptible neighbors if there exists a path 
		of length $\kappa$ from $u$ to $v$ consisting of only active edges}, i.e., the infection dies out after $\kappa$ steps.
\end{itemize}
In order to generalize the SIR model to accommodate a flow process based on \soc, we modify the above stopping criteria to:
\begin{itemize}
	\item [] \textit{an infected node $v$ cannot infect its susceptible neighbors if there exists a path of length $\kappa$ from \underline{either} $u$, \underline{or non-susceptible $w\in \Omega$} to $v$ consisting of only active edges.}
\end{itemize}
If the set $\Omega \subset V$, is empty, then \soc-Katz centrality is equivalent to the DS centrality which is shown in \cite{liu2016locating} to be highly correlated to the nodes' spreading influence according to the SIR model.  

Experiments are carried out to show that the above generalized SIR spreading process is highly correlated to the proposed \soc-Katz centrality. For this experiment, we use the 
network representing the Internet at the major router level \cite{rossi2015network, rocketfuel} consisting of 2114 nodes and 6632 edges. The nodes and edges represent routers, and the connections between them, respectively. We set $\kappa = 5$, and $\alpha = 0.03$, and vary the size of the set $\Omega \subset V$ such that the ratio $|\Omega|/|V|$ ranges from 0.1 to 0.9. For each value of $|\Omega|/|V|$, the set $\Omega$ is chosen at random, and the corresponding spreading influence is estimated for each node by running the generalized SIR model $10^4$ times. This is repeated 30 times. The box-plot in Figure \ref{katz_boxplot} shows the correlation between the spreading influence as a result of the generalized SIR model compared to \soc-Katz centrality. The results show Kendall Tau correlation values in the range $(0.945, 0.970)$ suggesting that the two processes are very highly correlated. 
\begin{figure}[htb]
	\centering
	\begin{minipage}{0.7\textwidth}
		\begin{tikzpicture}
		\node (img)  {\includegraphics[width=1\textwidth]{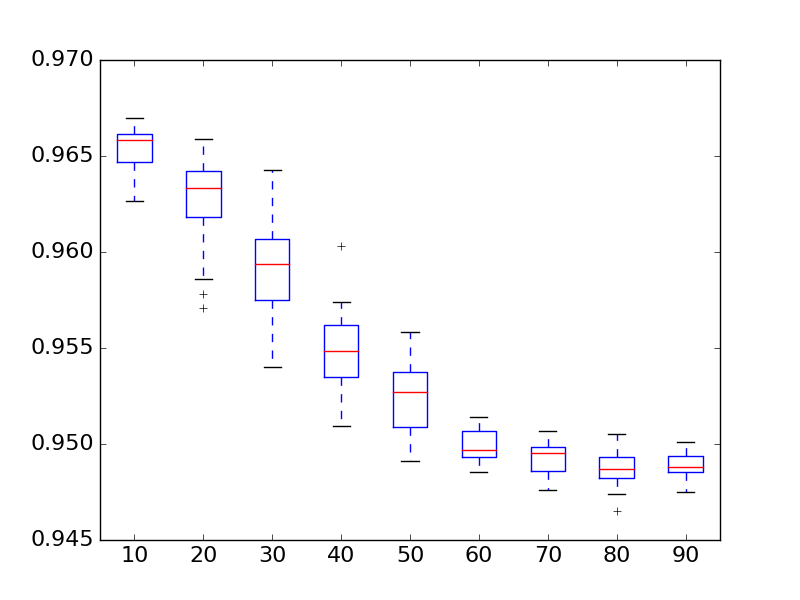}};
		\node[below=of img, node distance=0cm, yshift=1.3cm,font=\color{black}] {$|\Omega|/|V|$  (\%)};
		\node[left=of img, node distance=0cm, xshift=0.5cm, rotate=90, anchor=center,yshift=-0.6cm,font=\color{black}] {Kendall $\tau$};
		\end{tikzpicture}
	\end{minipage}%

	\caption{Comparison of nodes' spreading influence according to the generalized SIR model and \soc Katz centrality 
		on the routers network. Each boxplot represents 30 random choices of the set $\Omega$ with spreading probability $\alpha = 0.03$, and $\kappa = 5$ }
	\label{katz_boxplot}
\end{figure}
\subsubsection{\soc-(Random-Walk) Betweenness Centrality}
To demonstrate the \soc-RWBC and \soc-betweenness centralities, we experiment with two networks, namely, a computer network and road network. The computer network is generated from the P2P network, Gnutella \cite{leskovec2007graph,ripeanu2002mapping}, and consists of 6301 nodes and 20777 edges. The road network  \cite{davis2011university} represents Minnesota state roads and consists of 2642 nodes and 3303 edges. We simulate traffic on both networks. 


The realized centralities are computed using the particle hopping. The particle hopping model is a method used in vehicular flow theory \cite{nagel1996particle}. In this model, a section of a road is represented by a node and a vehicle as a particle where each node can only be occupied by one particle at a given time. This model is sometimes referred to as cellular automata and gives a minimal model for traffic flow behaviors \cite{huitema2000routing}. The flow of packets through the internet have also been modelled by cellular automata \cite{liu2002simple, huisinga2001microscopic}. In \cite{holme2003congestion}, the fraction of time steps that node is occupied by a particle is referred to as the \emph{occupation ratio}.

For the application to electric vehicles, the value $\kappa$ represents the number of steps the car can travel before its battery runs out of charge. 
For a message or vehicle being propagated from node $s$ to $t$, we simulate the traffic on the nodes when a routing algorithm propagates the message or vehicle in one of the two cases, (i) via a shortest feasible walk, and (ii) a random feasible walk. We add the condition that the routing algorithm is informed and takes the current $\kappa$ counter of the message or vehicle, and target $t$, into account before deciding 
which neighbor to direct it to. In other words, if a message or vehicle cannot be successfully propagated to its destination due to the value of $\kappa$, then the message or vehicle is not propagated at all and therefore does not contribute to the traffic of the network. 

Experiments are carried on the Gnutella network, where we set $TTL = 4$ and record the occupation ratio based on the corresponding routing algorithms. The occupation ratio is then compared to the proposed centrality measures. The results for \soc-betweenness centrality and \soc-RWBC are presented in Figure \ref{soc_btn_boxplot} and \ref{soc_rwbc_boxplot} 
respectively.  The results show Kendall Tau values in the range $(0.79, 0.82)$ for a ratio $|\Omega|/|V|$ of 0.2 and $(0.86, 0.88)$ for a ratio $|\Omega|/|V|$ of 0.9 for \soc-betweenness centrality. Similar correlation scores and trends for \soc-RWBC  are observed suggesting a high correlation between the expected centralities and realized centrality measures. 

\begin{figure}[htb]
	\centering
	\begin{minipage}{0.7\textwidth}
		\begin{tikzpicture}
		\node (img)  {\includegraphics[width=1\textwidth]{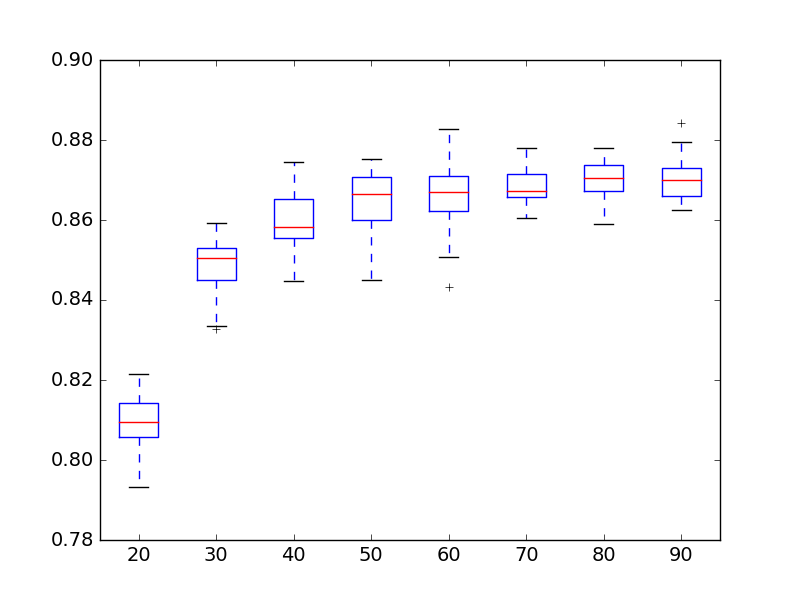}};
		\node[below=of img, node distance=0cm, yshift=1.3cm,font=\color{black}] {$|\Omega|/|V|$  (\%)};
		\node[left=of img, node distance=0cm, xshift=0.5cm, rotate=90, anchor=center,yshift=-0.7cm,font=\color{black}] {Kendall $\tau$};
		\end{tikzpicture}
	\end{minipage}
	\caption{Correlation scores for expected versus realized centrality for \soc-betweenness centrality for the Gnutella network. Each boxplot represents 30 random choices of the set $\Omega$, with $\kappa = 4$}
	\label{soc_btn_boxplot}
\end{figure}
\begin{figure}[htb]
		\centering
	\begin{minipage}{0.7\textwidth}
		\begin{tikzpicture}
		\node (img)  {\includegraphics[width=1\textwidth]{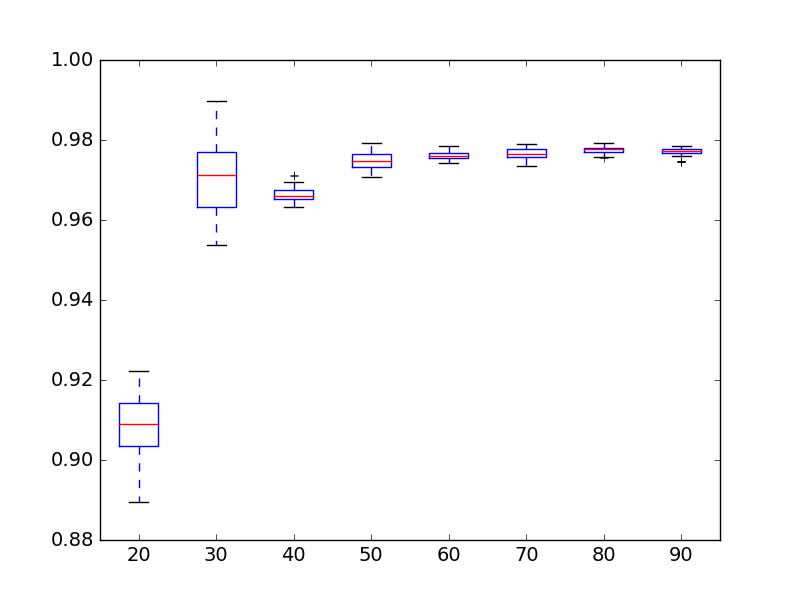}};
		\node[below=of img, node distance=0cm, yshift=1.3cm,font=\color{black}] {$|\Omega|/|V|$  (\%)};
		\node[left=of img, node distance=0cm, xshift=0.5cm, rotate=90, anchor=center,yshift=-0.7cm,font=\color{black}] {Kendall $\tau$};
		\end{tikzpicture}
	\end{minipage}%
	\caption{Correlation scores for expected versus realized centrality for \soc RWBC for the Gnutella network. Each boxplot represents 30 random choices of the set $\Omega$, with $\kappa = 4$}
	\label{soc_rwbc_boxplot}
\end{figure}

For experiments on the Minnesota state road network, we set $\kappa = 20$. We choose a relatively larger value of $\kappa$ for the road network experiments because we assume that electric vehicles can travel a relatively long distance if it starts fully charged. As in the Gnutella experiments, we record the occupation ratio. The results for \soc-betweenness centrality  are presented in Figure \ref{soc_btn_MNboxplot}.  The results show Kendall Tau values in the range $(0.79, 0.86)$ for a ratio $|\Omega|/|V|=$ 0.2 and $(0.83, 0.87)$ for a ratio $|\Omega|/|V|=$ 0.9 for \soc-betweenness centrality.

\begin{figure}[htb]
	\centering
	\begin{tikzpicture}
	\node (img)  {\includegraphics[width=0.7\textwidth]{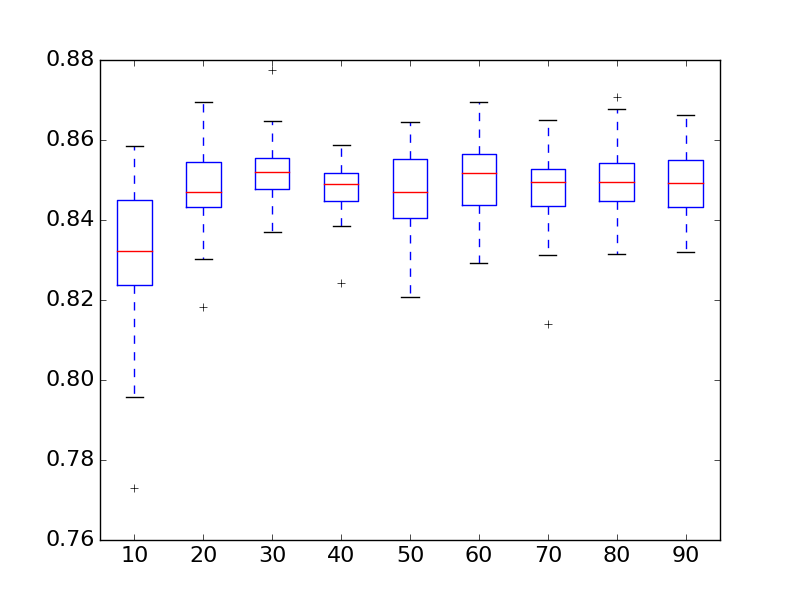}};
	\node[below=of img, node distance=0cm, yshift=1.3cm,font=\color{black}] {$|\Omega|/|V|$  (\%)};
	\node[left=of img, node distance=0cm, xshift=0.5cm, rotate=90, anchor=center,yshift=-0.7cm,font=\color{black}] {Kendall $\tau$};
	\end{tikzpicture}
	\caption{Correlation scores for expected versus realized centrality for \soc-betweenness centrality for the Minnesota state road network. Each boxplot represents 30 random choices of the set $\Omega$, with $\kappa = 20$}
	\label{soc_btn_MNboxplot}
\end{figure}

\subsection{Robustness and Novelty}
The parameter $\kappa$ is application dependent, so it is important to understand how the proposed centrality measures behave for different values of $\kappa$. From the mathematical expressions of our novel  centrality measures, it is clear that for a large enough $\kappa$, the proposed centrality measures would become identical to  their baseline centrality measures as in this case, the limitation of \soc-dependent distance is gradually vanishing. In this section, we carry out experiments to understand how the proposed measures compare to their baseline measures while varying the parameter $\kappa$. The goal of the experiments is to quantify what is not captured when using the well-established centrality measures for given values of $\kappa$. Thus, demonstrating the robustness of the results and novelty of the measures. 

The first set of experiments is carried out on toy graphs to illustrate the difference in central nodes when using the proposed centrality measures versus their baseline measures. The first toy graph is a graph formed by connecting two $5\times5$ grid graphs by a path of length 5. The second is a $10 \times 10$ grid graph. The second set of experiments uses real-world datasets. In these experiments, we focus on the Minnesota state road network and a collaboration network constructed using the scientific collaboration data \cite{leskovec2007graph}, consisting of 5242 nodes and 14496 edges.

The experiments on toy graphs are used to visually illustrate to the reader the difference between the proposed centrality measures versus  their baseline measures. Thus, providing an intuition on how the measures work. The difference between \soc-BC and BC for small values of $\kappa$ is illustrated in Figure \ref{compBC}. In this experiment, we set $\kappa = 4$. We use a color spectrum from red to yellow, showing the most central to the least central nodes respectively. The graph in Figure \ref{compBC} (a) represents the standard BC. As expected, the nodes along the bridge are the most central nodes. The graphs in (b) - (f) show different scenarios where the nodes in $\Omega$ are marked with a blue-edge diamond-shaped node. As we can see in the Figures (d) and (f), depending on the value of $\kappa$ and nodes in $\Omega$, the centrality scores can be significantly different from the standard BC, where the most central nodes based on BC, are now among the least central nodes based on \soc-BC.

\begin{figure}[htp]
	\begin{center}
		\begin{tabular}{cc}
			
			\includegraphics[width=0.5\linewidth]{./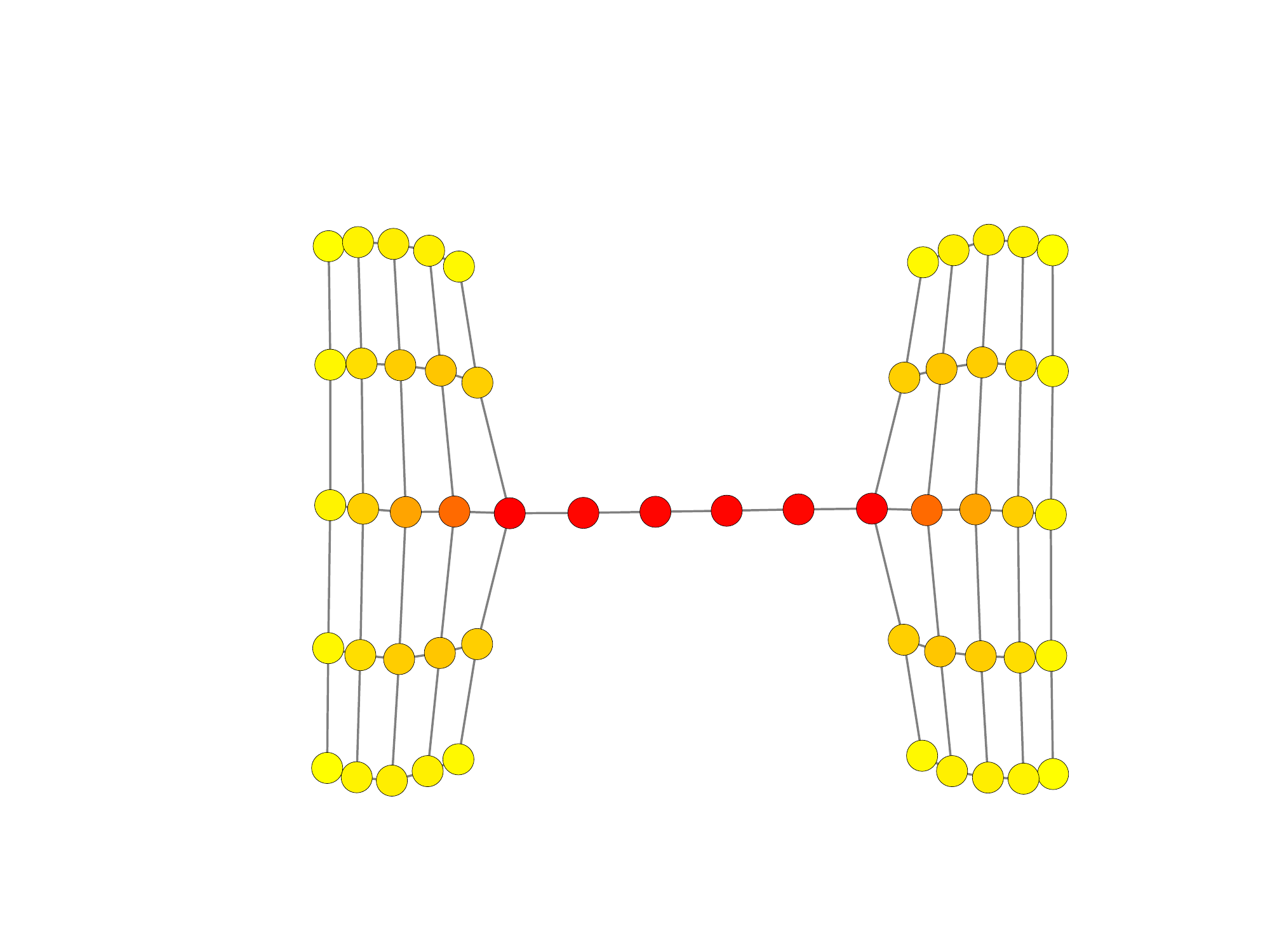} &     \includegraphics[width=0.5\linewidth]{./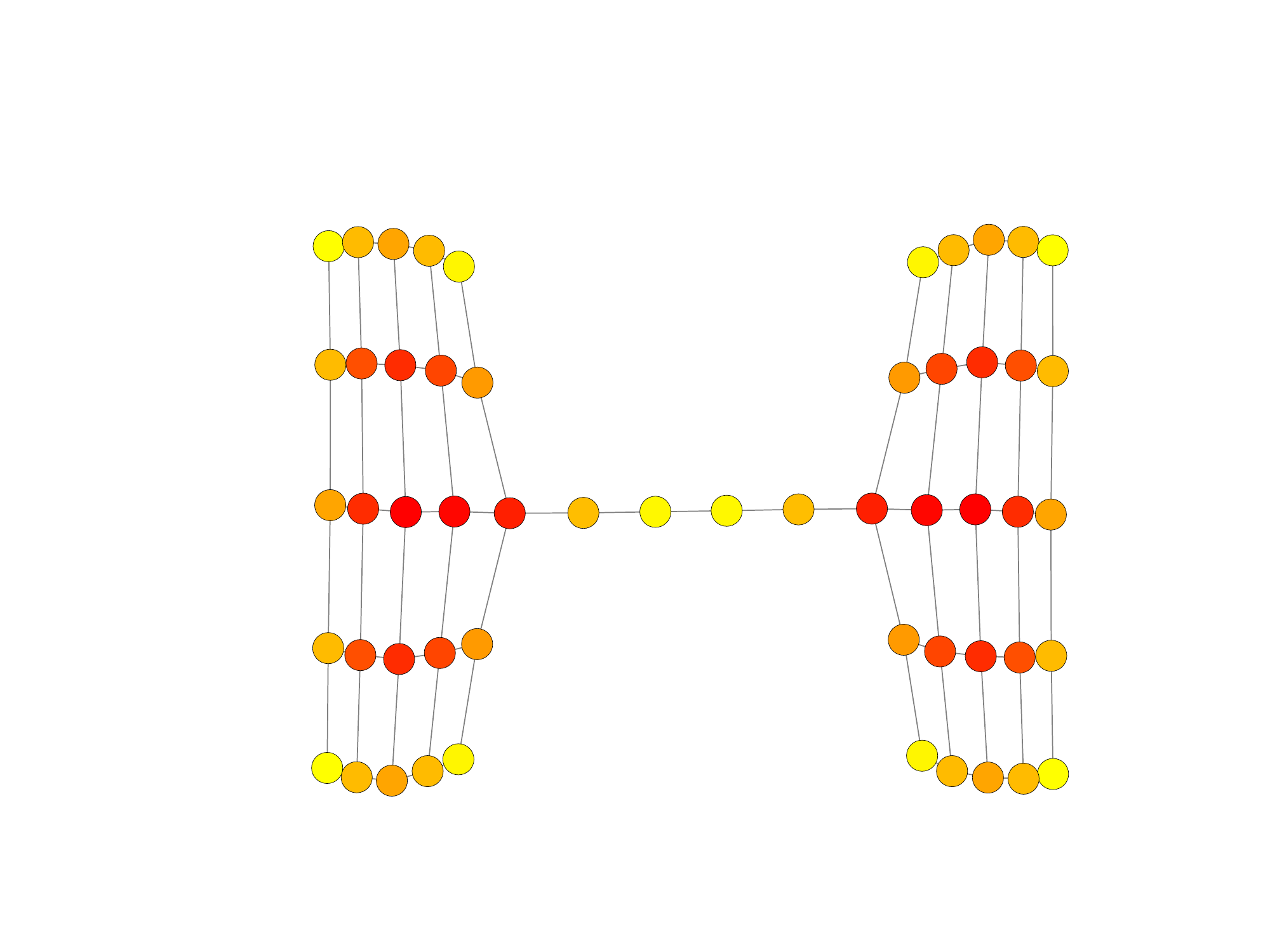} \\
			(a) Standard BC &(b)  $|\Omega| = 0, \kappa = 4$ \\ 
			\includegraphics[width=0.5\linewidth]{./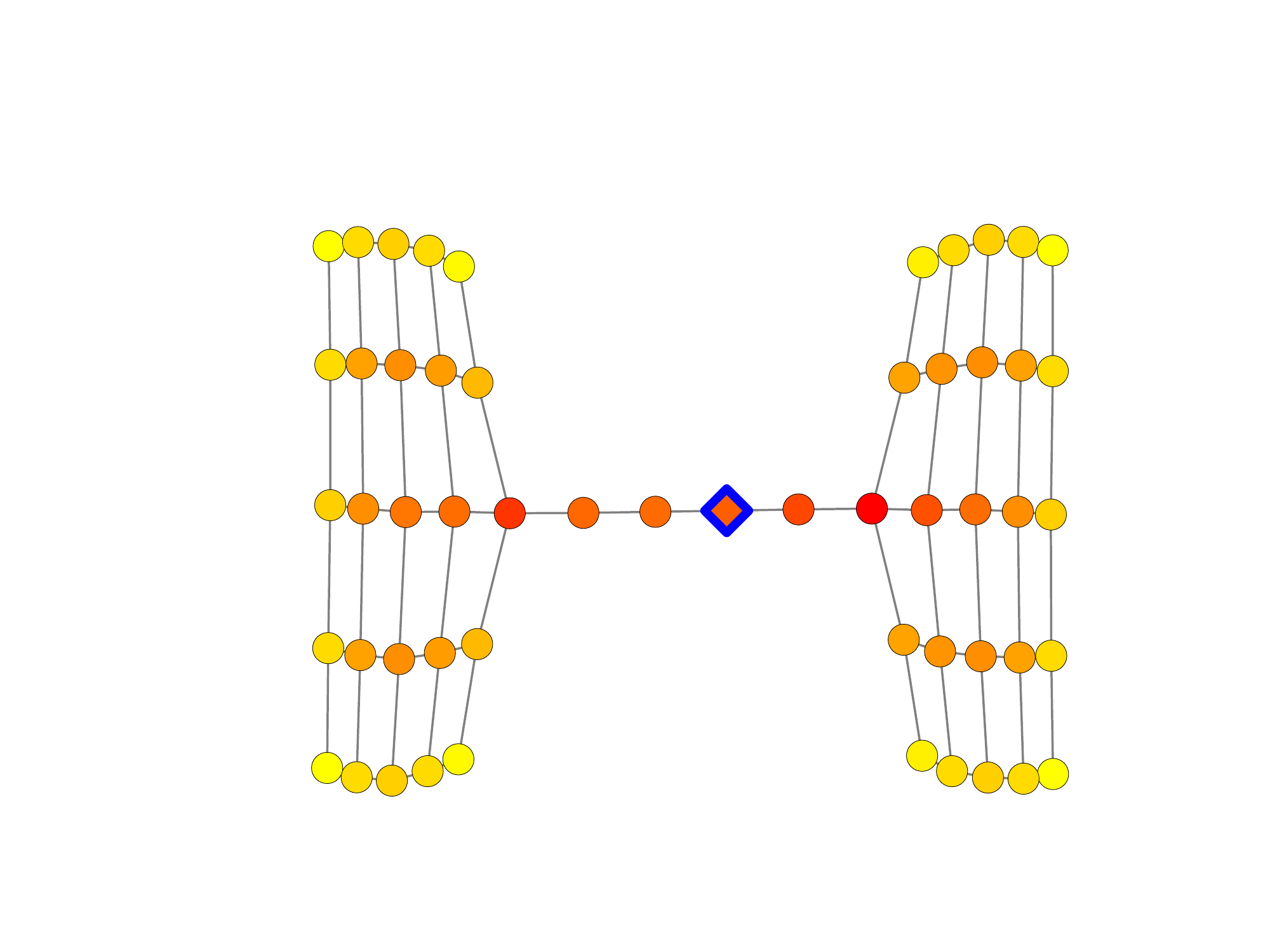}  &\includegraphics[width=0.5\linewidth]{./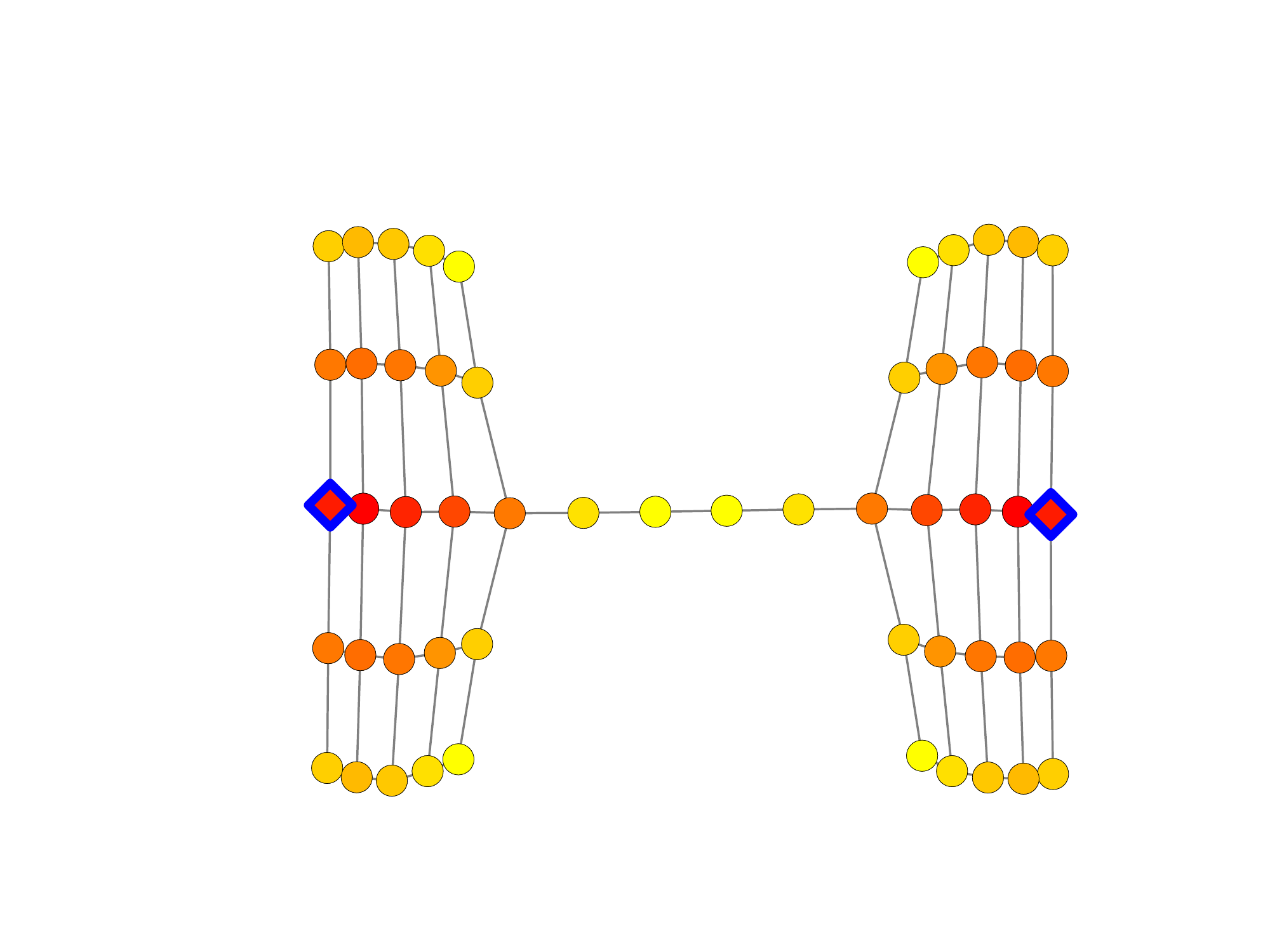}  \\
			(c) $|\Omega| = 1, \kappa = 4$ &(d)  $|\Omega| = 2, \kappa = 4$ \\ 
			\includegraphics[width=0.5\linewidth]{./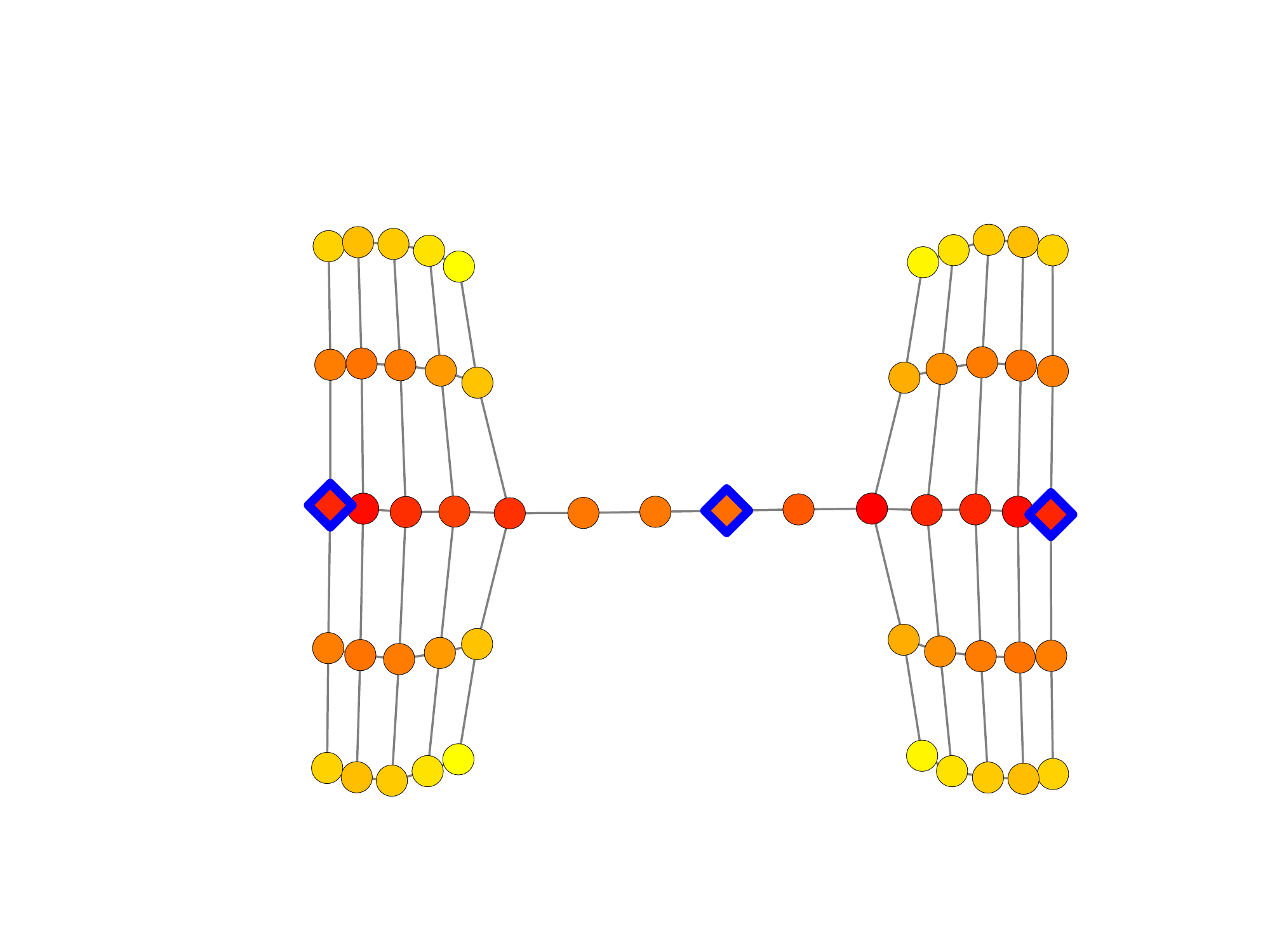}&\includegraphics[width=0.5\linewidth]{./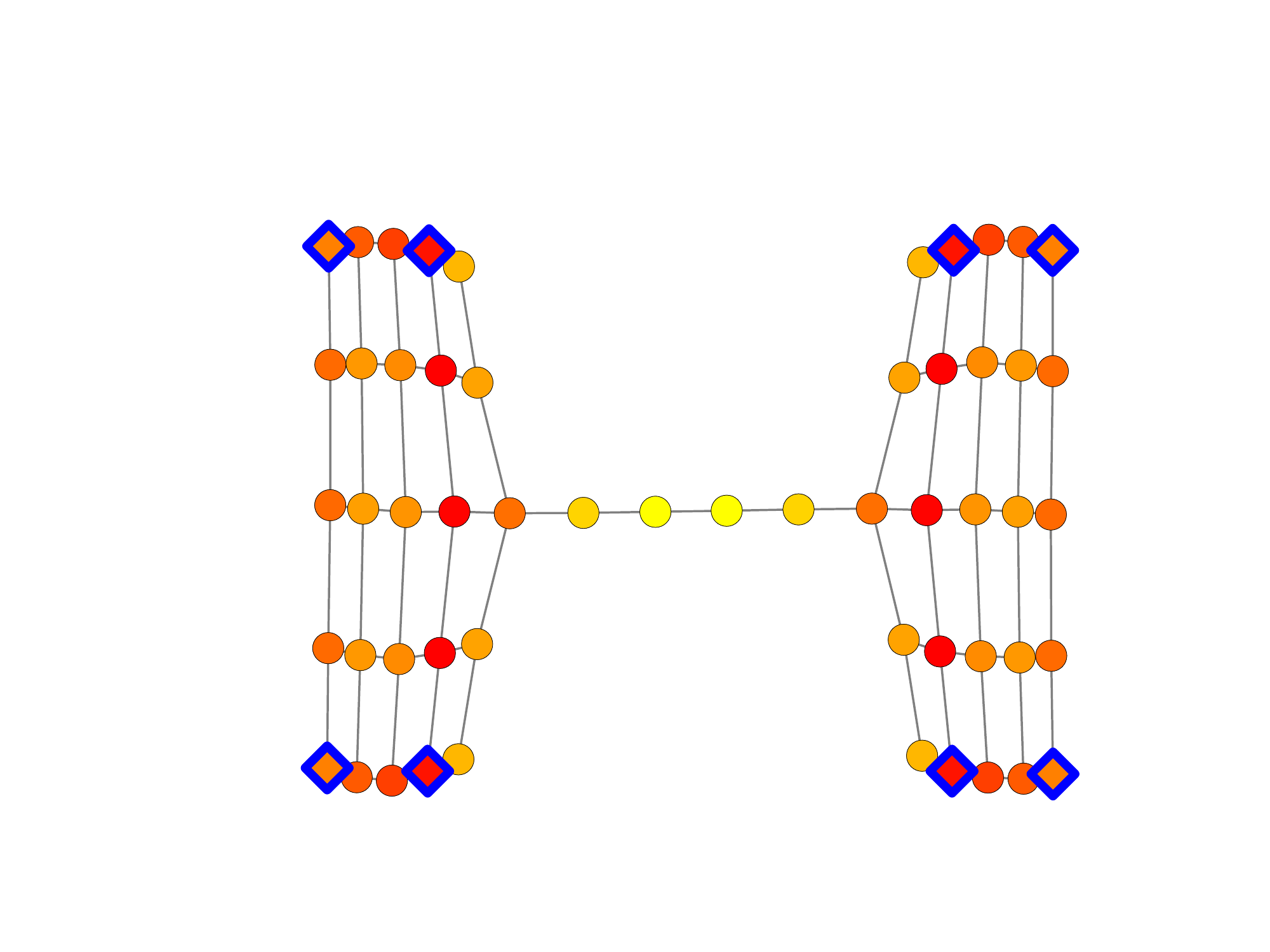}  \\
			(e) $|\Omega| = 3, \kappa = 4$ &(f)  $|\Omega| = 8, \kappa = 4$ \\ 
		\end{tabular}
	\end{center}
	\caption{Comparison of BC and SOC-BC. The blue diamond-shaped nodes represent nodes in $\Omega$. In (a), the nodes connecting the two components of the graph, which can be viewed as a "bridge" between two communities, have high centrality scores because they are essential for flow from one component to another. Whereas in (b) - (f), since the flow is limited to just 4 steps, the majoring of the flow in the graph would be within the two components thus the bridge nodes are no longer as important. }
	\label{compBC}
\end{figure}%

Differences between \soc-Katz and Katz centrality are illustrated in Figure \ref{compKatz}. In this experiment, we set $\kappa = 4$. The graph in Figure \ref{compKatz} (a) represents the standard Katz centrality. As expected, the nodes towards the center of the grid are the most important  nodes according to this model. The graphs in (b) - (f) show different scenarios where the nodes in $\Omega$ are marked with a blue-edge diamond-shaped node. As we can see in the Figure (e), depending on the value of $\kappa$, even with a relatively large ration of $|\Omega|/|V|$, (0.5 for (e)), the centrality scores can still be significantly different than the scores from their baseline models. 

A more comprehensive study comparing \soc-Katz with Katz on the grid graph is shown in Figure \ref{soc_grid_boxplot}. For each value of $\kappa$, with $2 \leq \kappa \leq 16$, we run 30 experiments. Each experiment consists of choosing nodes at random to be in the set $\Omega$. The boxplots in blue represent experiments with $|\Omega|/|V| = 0.1$, while $|\Omega|/|V| = 0.2$ are represented in red. We choose the values 0.1 and 0.2 for the ratio $|\Omega|/|V|$  because in most applications the set $\Omega$ will be considerably smaller than $V$. For example in social networks, the percentage of lurkers in an online community is estimated to range from 50 to 90 percent of the total membership \cite{katz1998luring,mason1999issues,soroka2003we}. As expected, the results show that as $\kappa$ increases, the correlation between \soc-Katz and Katz ranking increases. 
It is interesting to observe that since Katz centrality is based on infinite-length random walks emanating from a  node, it is not clear what value of $\kappa$ would make \soc-Katz identical to the standard Katz centrality for a given graph. However, this is not the case with betweenness centrality which is based on shortest paths. For a given graph, setting $\kappa$ to the longest shortest path would make \soc-BC identical to BC.

\begin{figure}[htp]
	\begin{center}
		\begin{tabular}{cc}
			
			\includegraphics[width=0.5\linewidth]{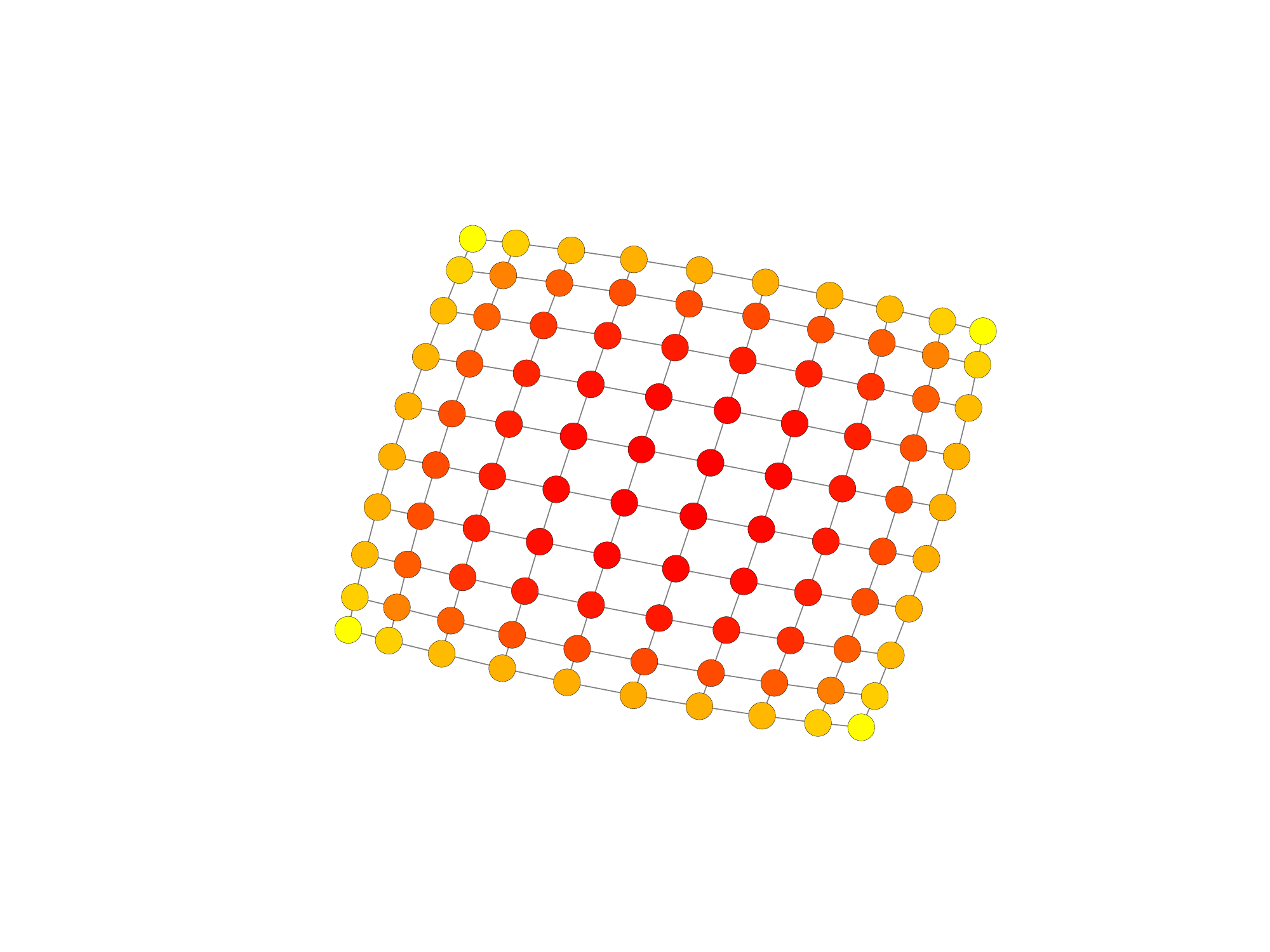} &     \includegraphics[width=0.5\linewidth]{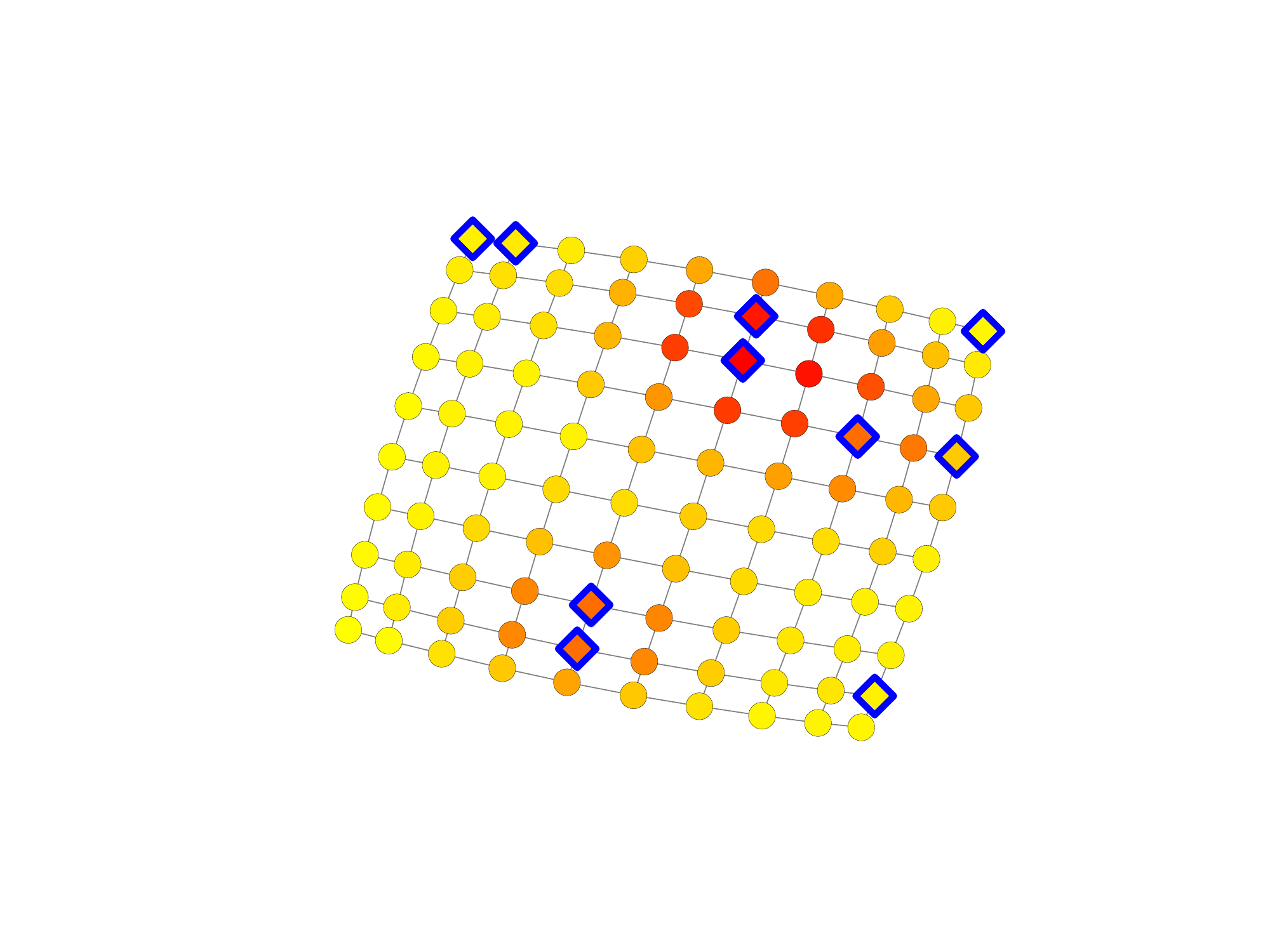} \\
			(a) Standard Katz  &(b) $|\Omega|/|V| = 0.1, \kappa = 4$\\ 
			\includegraphics[width=0.5\linewidth]{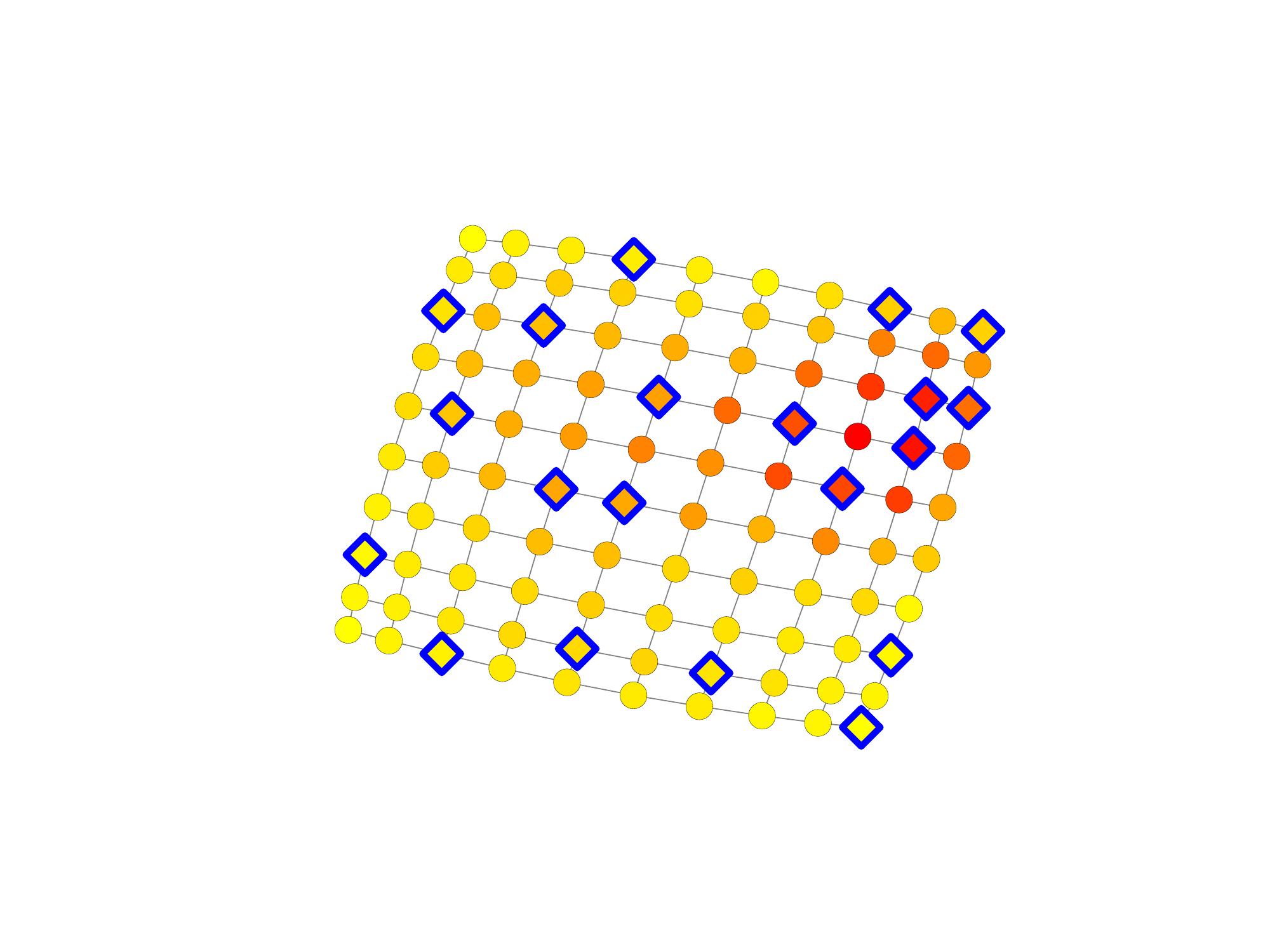}  &\includegraphics[width=0.5\linewidth]{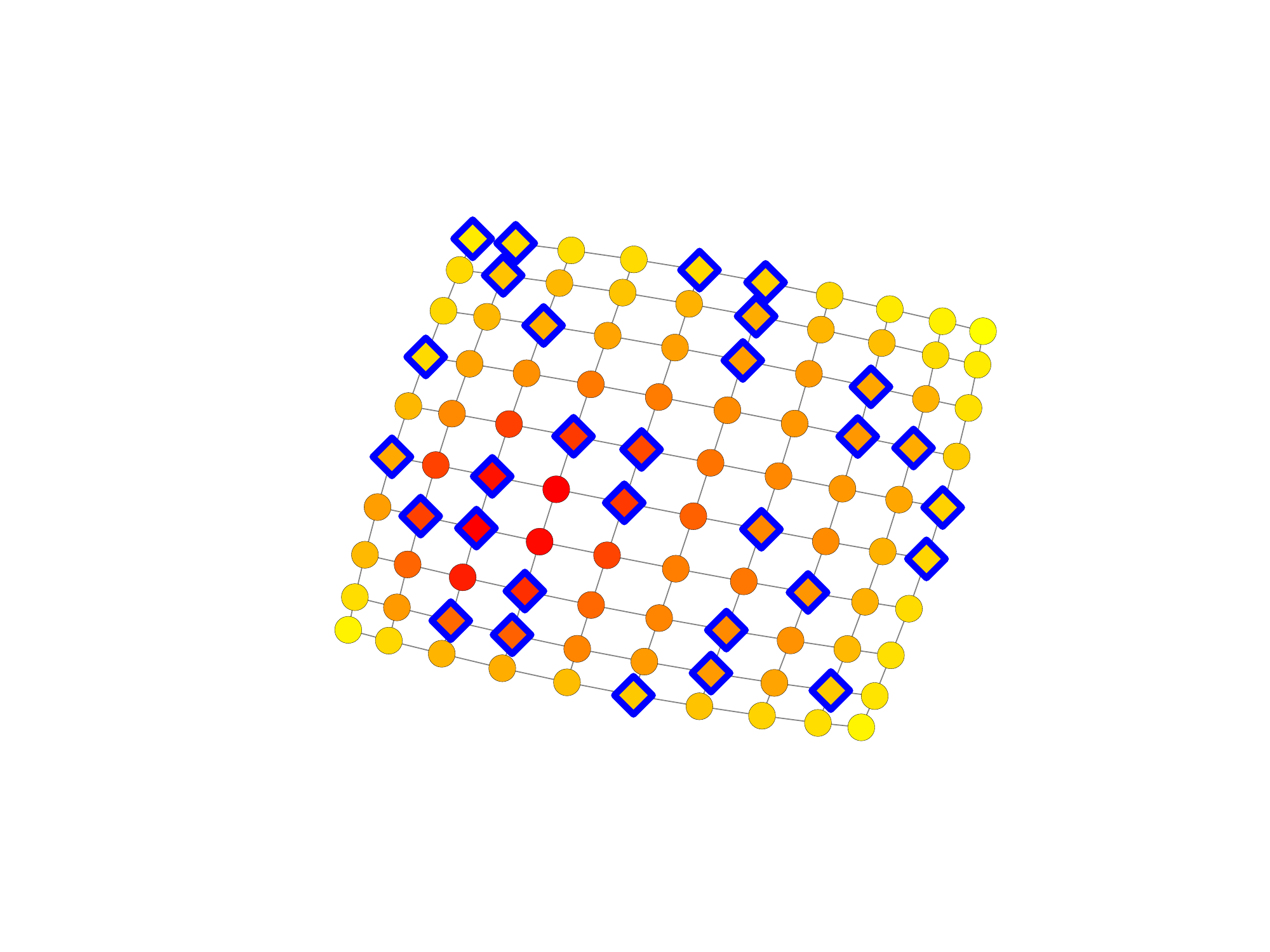}  \\
			(c) $|\Omega|/|V| = 0.2, \kappa = 4$ &(d) $|\Omega|/|V| = 0.3, \kappa = 4$\\ 
			\includegraphics[width=0.5\linewidth]{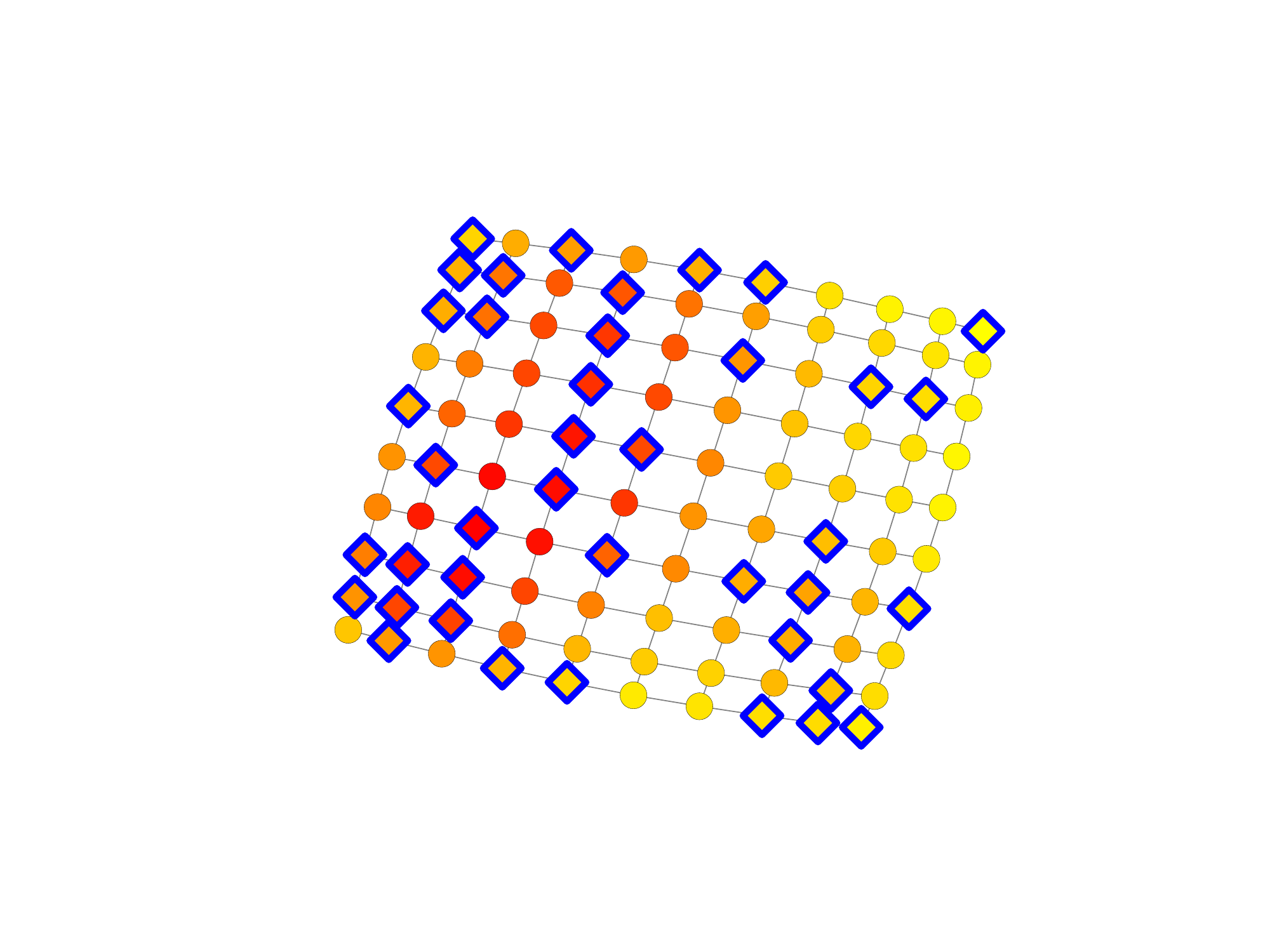}&\includegraphics[width=0.5\linewidth]{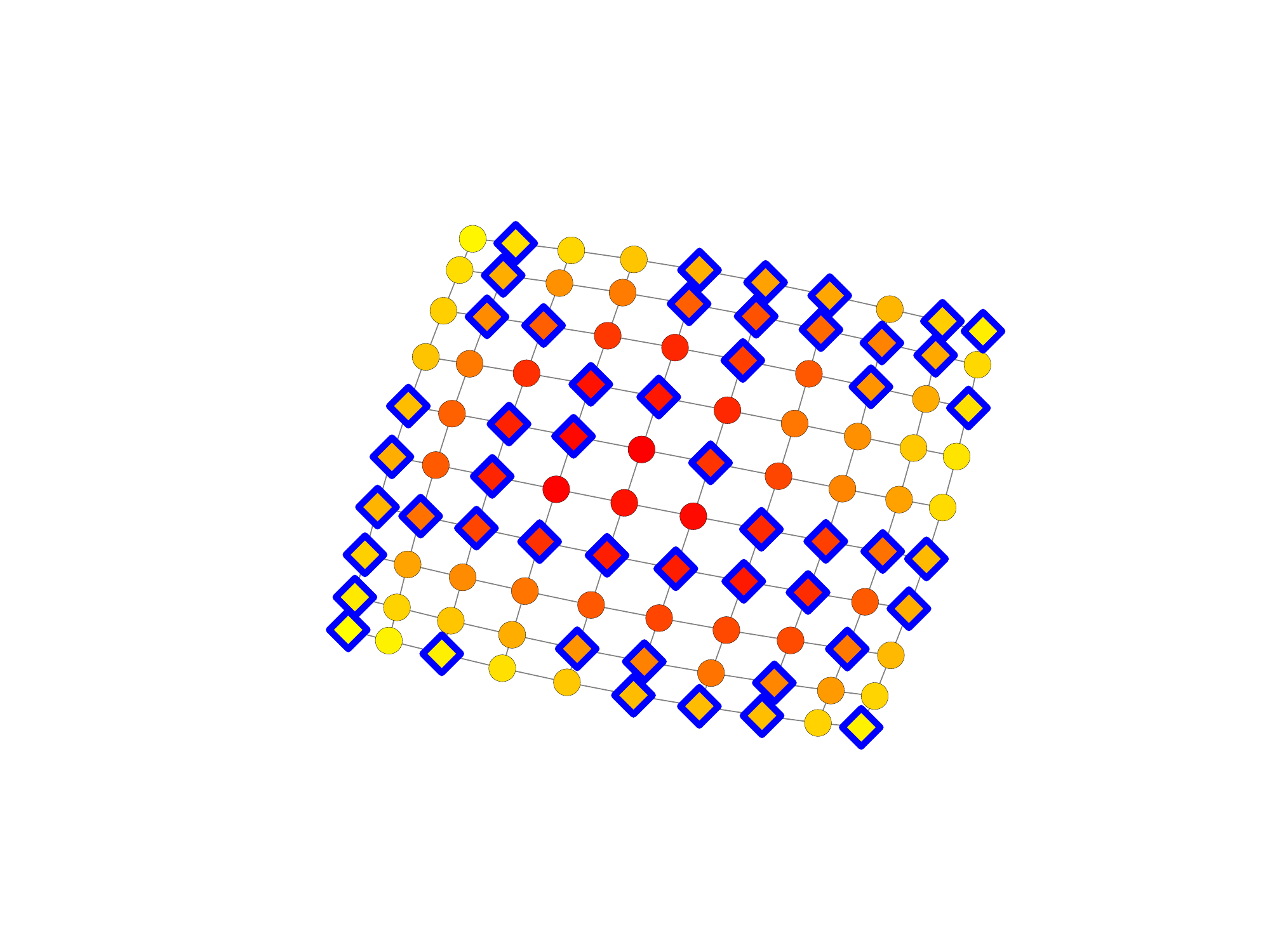}  \\
			(e) $|\Omega|/|V| = 0.4, \kappa = 4$ &(f) $|\Omega|/|V| = 0.5, \kappa = 4$ \\ 
		\end{tabular}
	\end{center}
	\caption{Comparison of Katz and SOC-Katz; The blue diamond-shaped nodes represent nodes in $\Omega$. In (a), the standard Katz centrality shows that the nodes more centrally located on the grid have a higher importance. For example, with respect to information spreading in a network, this implies that these nodes are the most influential in information spreading. However as observed from (b) to (f) if the information spread has fixed travel distance, then just the connectivity structure of the network is not enough to conclude about the most influential nodes. }
	\label{compKatz}
\end{figure}%

\begin{figure}[htb]
		\centering
	\begin{tikzpicture}
	\node (img)  {\includegraphics[width=0.5\textwidth]{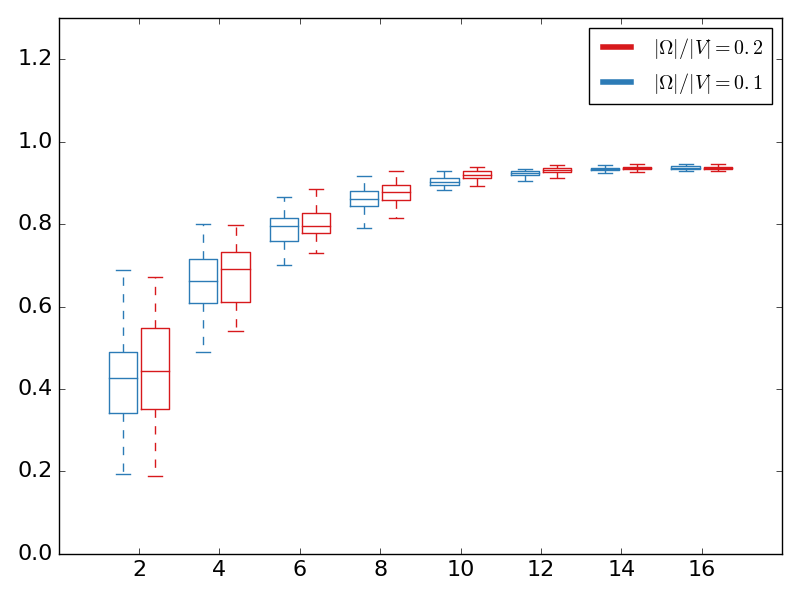}};
	\node[below=of img, node distance=0cm, yshift=1.3cm,font=\color{black}] {$\kappa$};
	\node[left=of img, node distance=0cm, rotate=90, xshift=0cm, anchor=center,yshift=-1.0cm,font=\color{black}] {Kendall $\tau$};
	\end{tikzpicture}
	\caption{SOC-Katz Vs. Katz for $10 \times 10$ Grid graph. }
	\label{soc_grid_boxplot}
\end{figure}

The second set of experiments in this section is carried out on real-world graphs, the Minnesota state road network, and the collaboration network.  In the Minnesota state road network, first, we compare \soc-BC with BC, while varying the value of $\kappa$, second, we compare $\soc$-RWBC and RWBC for a given source-target pair. The comparison of \soc-BC and BC is shown in Figure \ref{soc_MN_boxplot}. The parameter $\kappa$ is varied from 2 to 16. For each value of $\kappa$, we perform 30 experiments where each experiment consists of sampling a set of nodes $\Omega \subset V$ for a fixed ratio $|\Omega|/|V|$. We fix the ratio to 0.1 and 0.2, represented by blue and red boxplots respectively. Given that the Minnesota state road network has an average shortest path length of approximately 35.4, the results show that for values of $\kappa$, smaller than the average shortest path length, we can get significant differences between \soc-BC and BC ranking. Thus, we  find where the standard BC may potentially fail in identifying central nodes.  We visually demonstrate a similar result for \soc-RWBC and RWBC in Figure \ref{stRWBC}. In this experiment, we pick a pair of nodes representing a source and target and then compute the RWBC scores contributed by the two nodes referring to them as $s$-$t$-RWBC. Given that RWBC is identical to the current flow betweenness centrality \cite{brandes2005centrality}, one can think of this experiment as injecting a unit of current from the source flowing to target and measuring the fraction of current flowing through each node. The graph on Figure \ref{stRWBC} (a) represents the $s$-$t$-RWBC scores for the source-target pair represented by nodes in black. The $s$-$t$-RWBC values identical to zero are represented with nodes with negligible size. As  expected, the results show higher $s$-$t$-RWBC values for nodes close to the source and target. We perform a similar experiment for \soc-RWBC with $\kappa = 20$, while the distance from source to target is larger than 20. This implies that every random walk from source to target must pass through at least one node in $\Omega$. The results in Figure \ref{stRWBC} (b) show the $s$-$t$-\soc-RWBC values for the given source-target pair. In the case of \soc-RWBC, the higher central nodes are now the nodes close to the nodes in $\Omega$. This example demonstrates how \soc-RWBC can be used to identify congested nodes in a road network that is equipped with wireless charging lanes.

\begin{figure}[htb]
		\centering
	\begin{tikzpicture}
	\node (img)  {\includegraphics[width=0.5\textwidth]{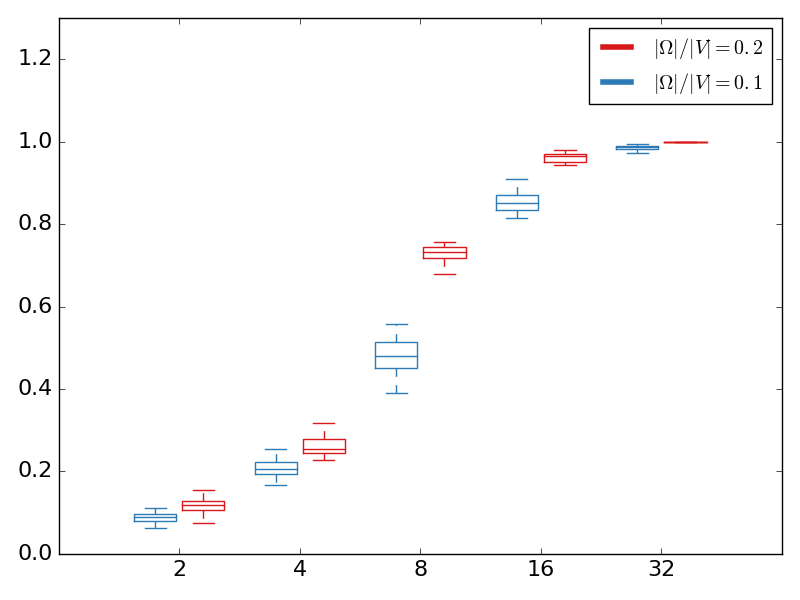}};
	\node[below=of img, node distance=0cm, yshift=1.3cm,font=\color{black}] {$\kappa$ };
	\node[left=of img, node distance=0cm, xshift=0.5cm, rotate=90, anchor=center,yshift=-0.5cm,font=\color{black}] {Kendall $\tau$};
	\end{tikzpicture}
	\caption{SOC-BC Vs BC for Minnesota State Road Network }
	\label{soc_MN_boxplot}
\end{figure}

\begin{figure}[htp]
	\begin{center}
		\begin{tabular}{c}
			
			\includegraphics[width=0.5\textwidth]{./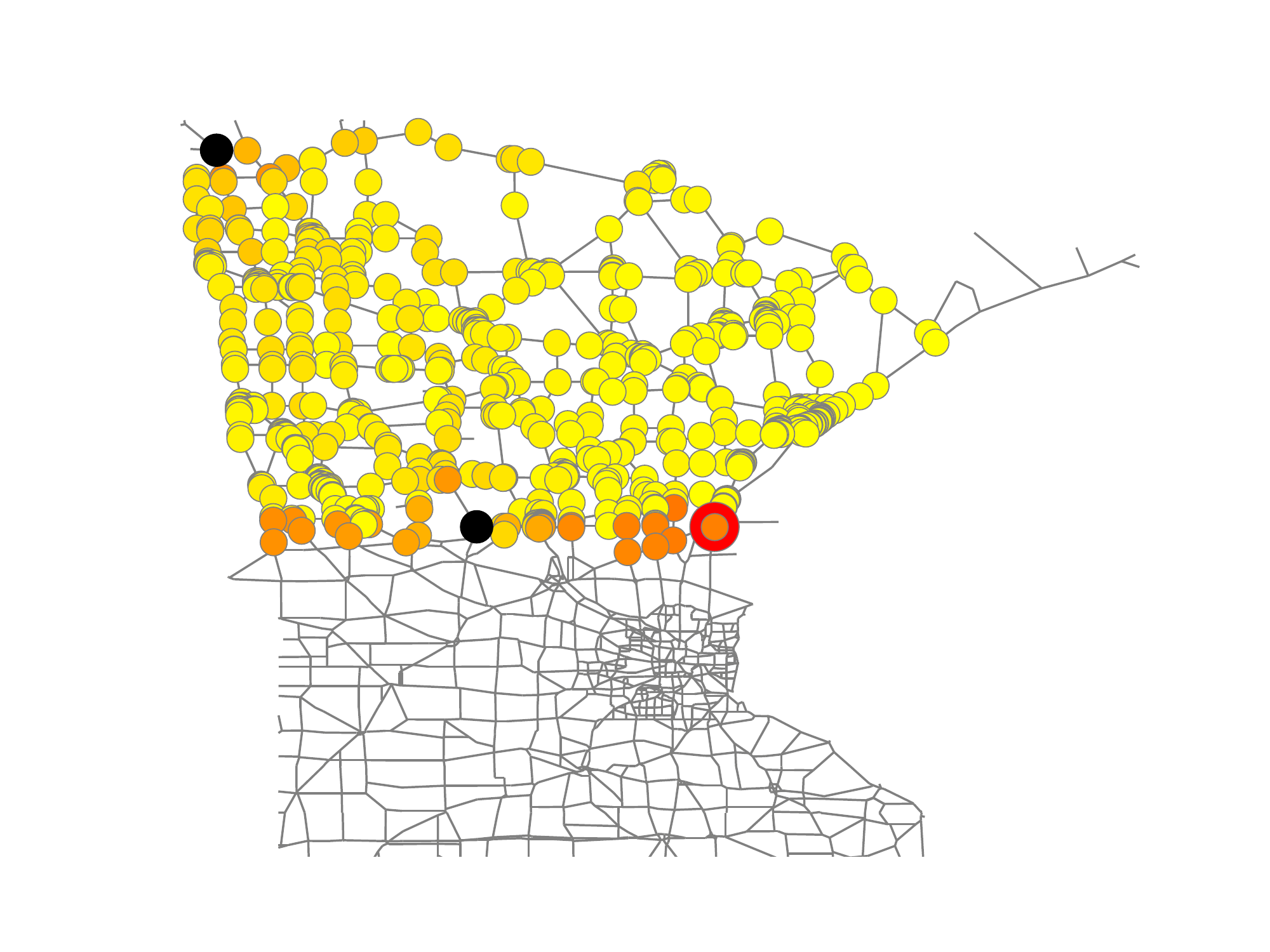}
			\includegraphics[width=0.5\textwidth]{./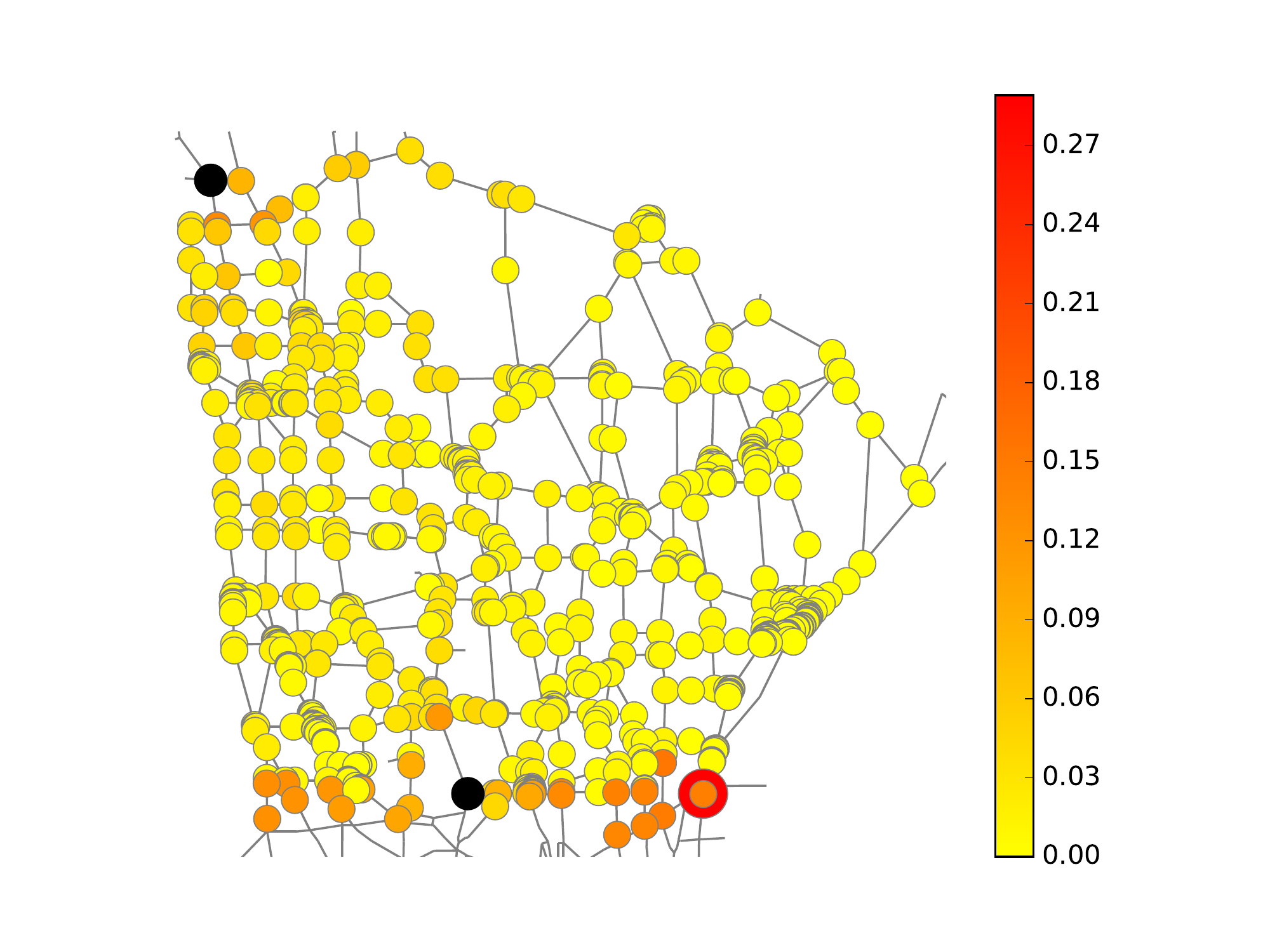}\\
			(a) \\
			\includegraphics[width=0.5\textwidth]{./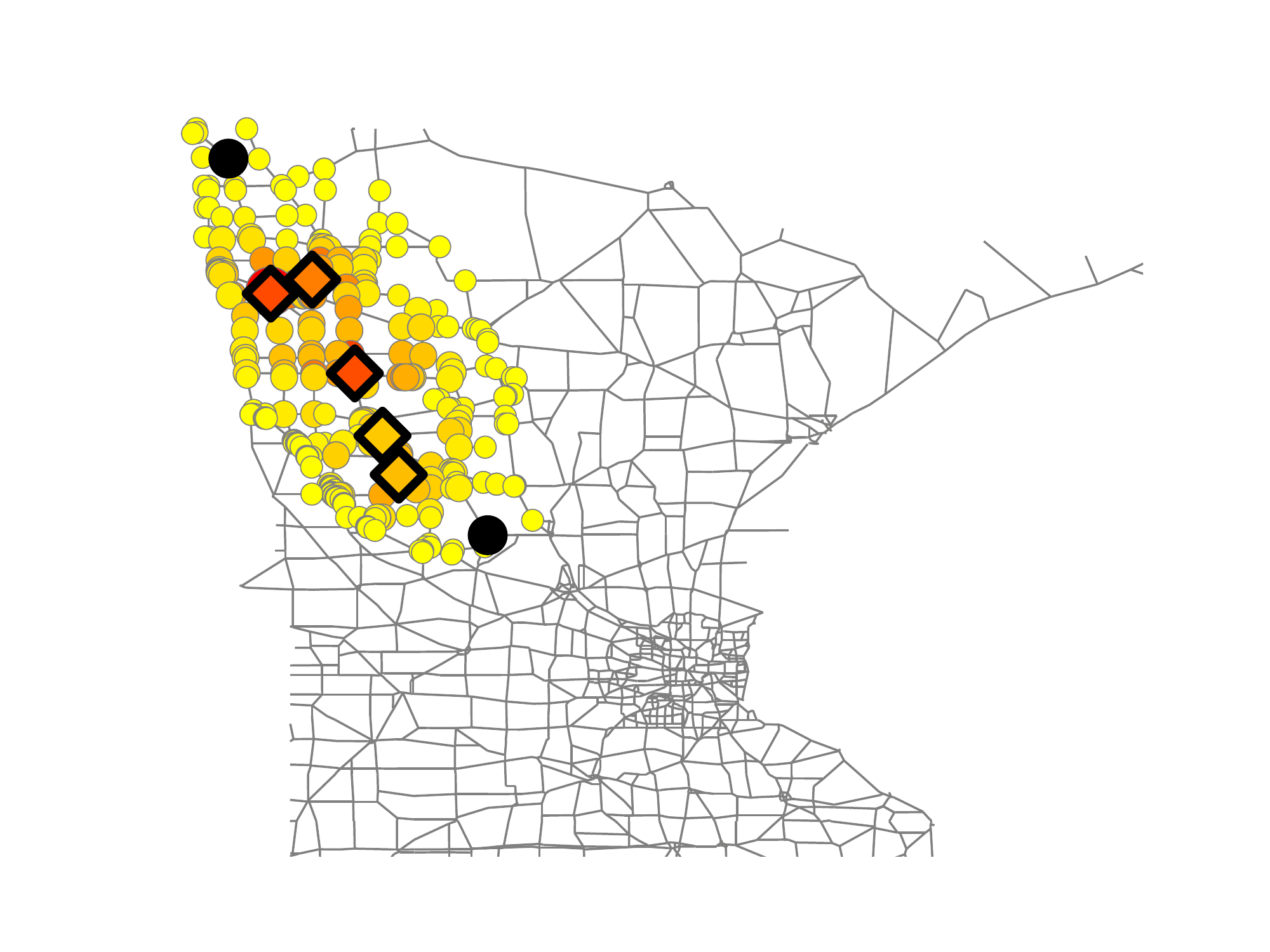}
			\includegraphics[width=0.5\textwidth]{./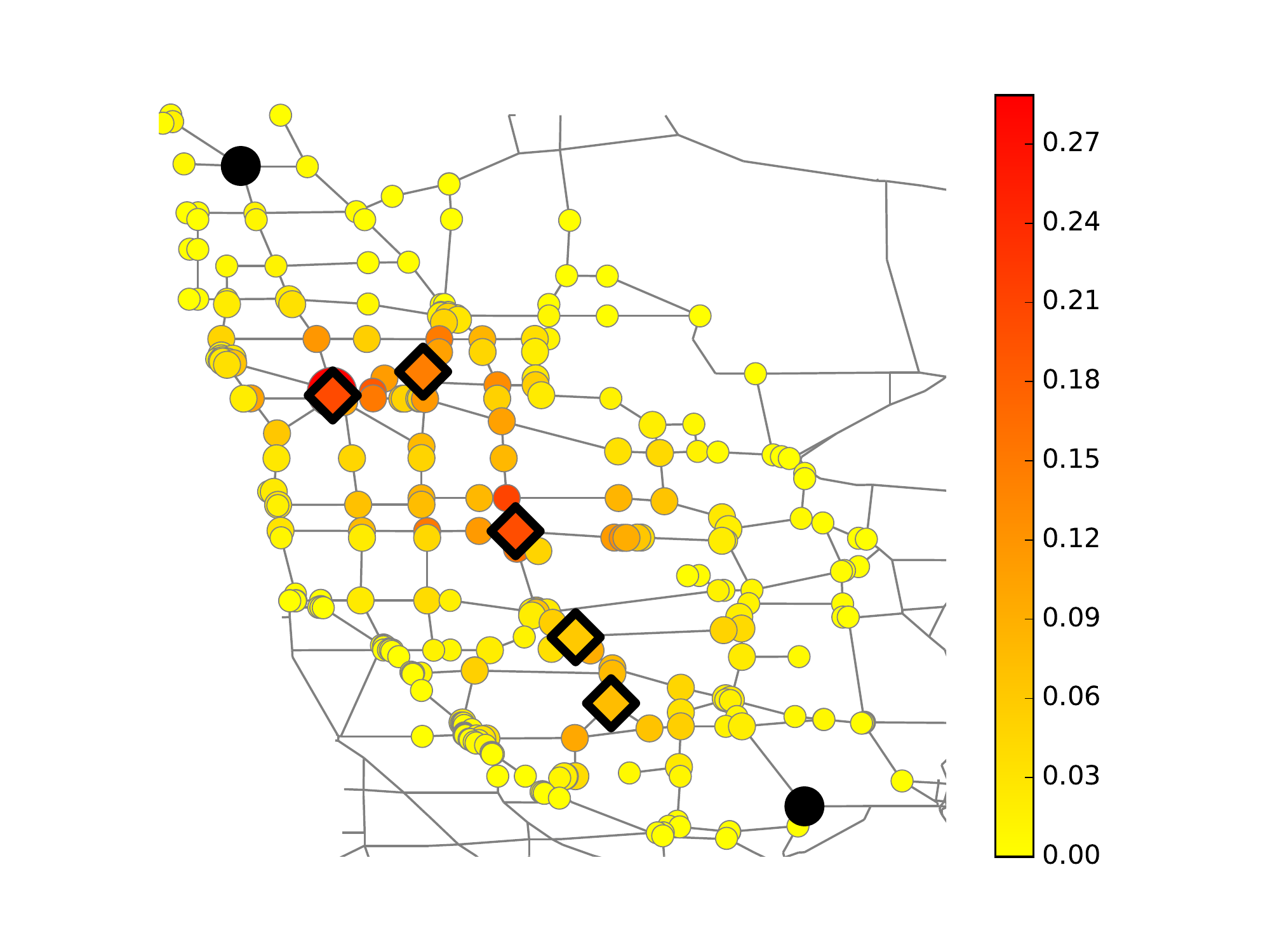}\\
			(b)
		\end{tabular}
	\end{center}
	\caption{Comparison of RWBC with SOC-RWBC for a single $s$-$t$ pair over the Minnesota state road network. The top-left and center black nodes represent the source and target nodes respectively. Nodes with $s$-$t$-centrality scores equal to 0 have have negligible node sizes.  (a) RWBC for a given $s$-$t$ pair. (b) SOC-RWBC for a given $s$-$t$ pair. Nodes with a triangular marker represent nodes in $\Omega$.}
	\label{stRWBC}
\end{figure}%
A comparison of \soc-Katz versus standard Katz with variations of the parameter $\kappa$ on the Collaboration network is shown in Figure  \ref{soc_rand_boxplot}. In general, for different values of $\kappa$, the correlation of \soc-Katz and Katz is high, generally above 0.8. However, once we plot the different ranking we see significant differences with the node rankings of the two measures. Thus, Kendall Tau correlation does not give a complete picture for this network. In particular, a node that is ranked highly with the standard Katz centrality can have a significantly less rank in a ranking with \soc-Katz. However, conversely, highly ranked nodes with \soc-Katz generally tend to also be highly ranked with respect to the standard Katz. With respect to the application to posters and lurkers in social networks, this follows the intuition that a user with a large number of neighbors (friends) can still be non-influential if the user together with all his/her neighbors (friends) are lurkers. On the other hand, a highly influential node would generally have many neighbors (friends). 

\begin{figure}[htb]
		\centering
	\begin{minipage}{0.45\textwidth}
		\begin{tikzpicture}
		\node (img)  {\includegraphics[width=1\textwidth]{./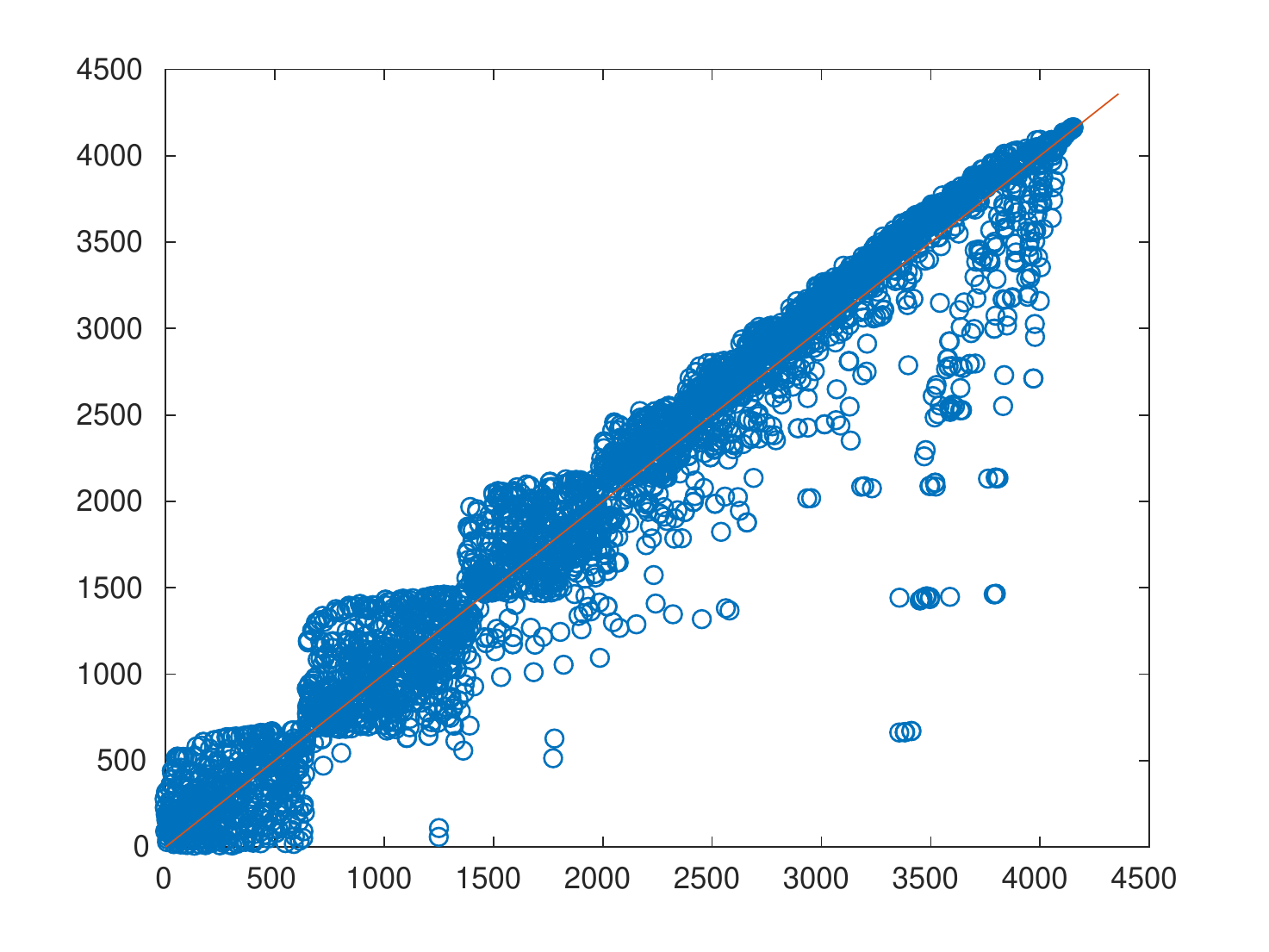}};
		\node[below=of img, node distance=0cm, yshift=1.3cm,font=\color{black}] {Katz};
		\node[left=of img, node distance=0cm, xshift=0.5cm, rotate=90, anchor=center,yshift=-0.7cm,font=\color{black}] {SOC-Katz};
		\node[right=of img, node distance=0cm, yshift=-1.3cm, xshift=-2.7cm, font=\color{black}]  {$\kappa=2$};
		\end{tikzpicture}
	\end{minipage}%
	\begin{minipage}{0.45\textwidth}
		\begin{tikzpicture}
		\node (img)  {\includegraphics[width=1\textwidth]{./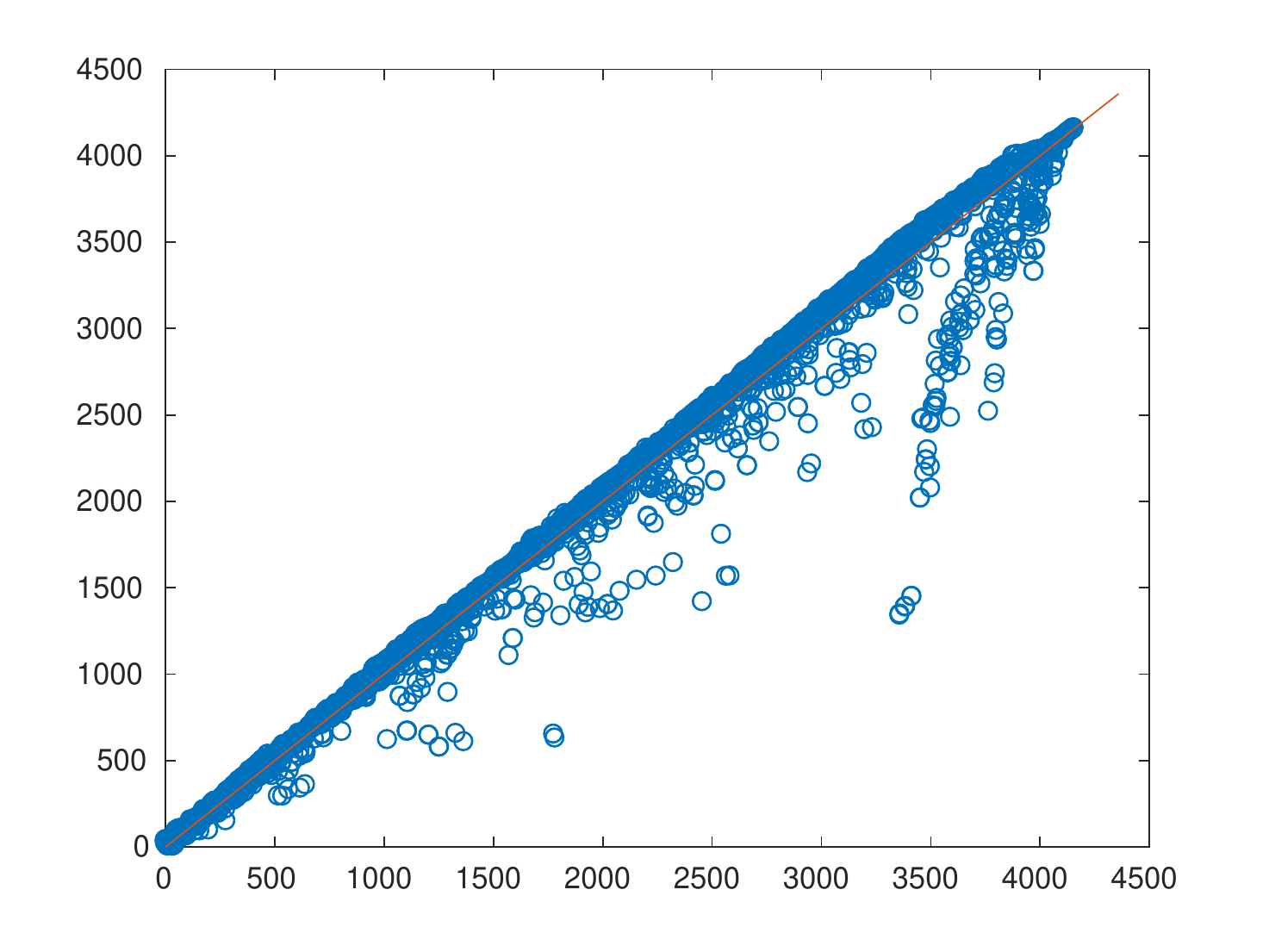}};
		\node[below=of img, node distance=0cm, yshift=1.3cm,font=\color{black}]  {Katz};
		\node[left=of img, node distance=0cm, rotate=90, xshift=0cm, anchor=center,yshift=-1.2cm,font=\color{black}] {SOC-Katz};
		\node[right=of img, node distance=0cm, yshift=-1.3cm, xshift=-2.7cm, font=\color{black}]  {$\kappa=3$};
		\end{tikzpicture}
	\end{minipage}%
	
	\begin{minipage}{0.45\textwidth}
		\begin{tikzpicture}
		\node (img)  {\includegraphics[width=1\textwidth]{./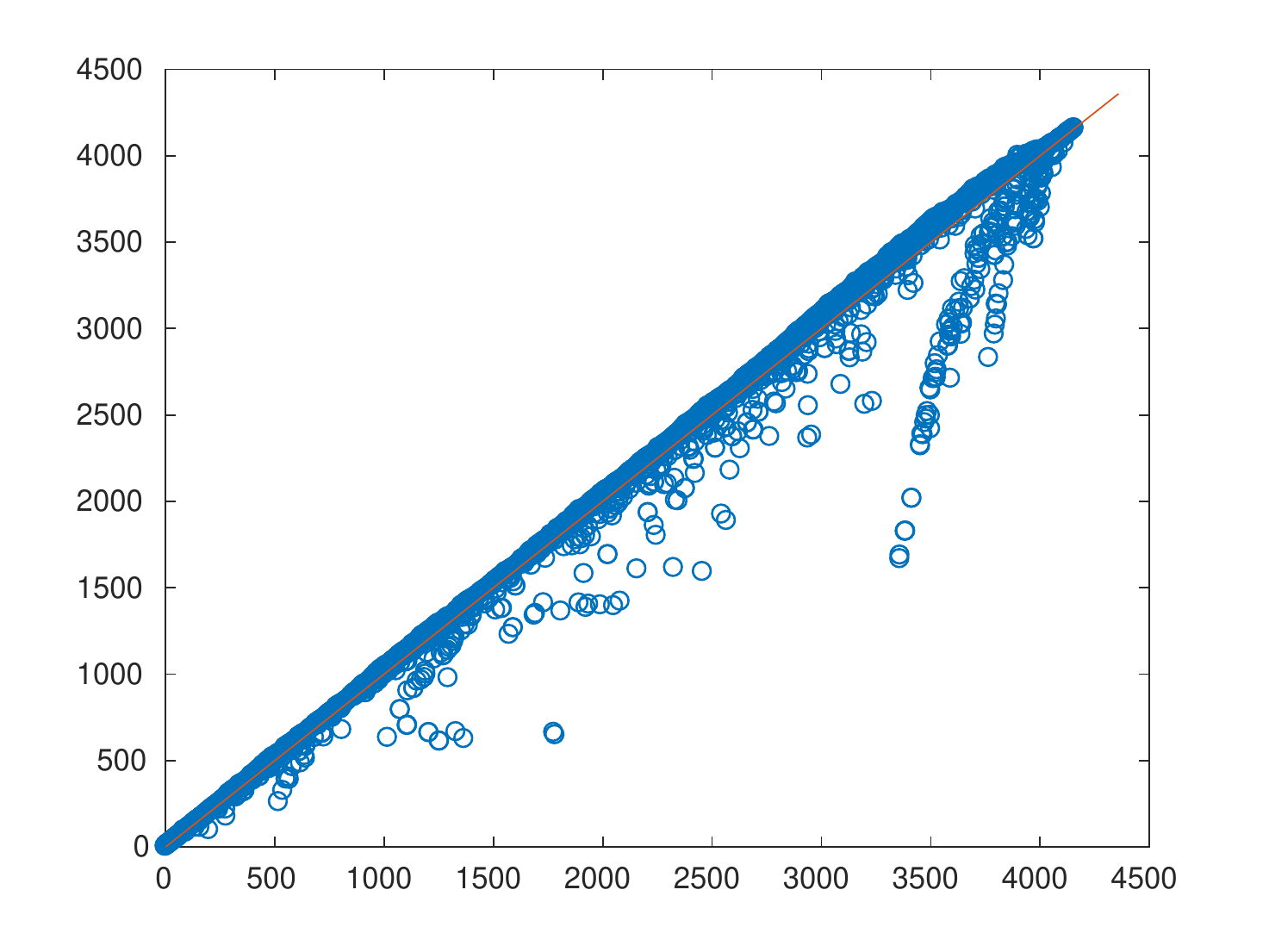}};
		\node[below=of img, node distance=0cm, yshift=1.3cm,font=\color{black}]  {Katz};
		\node[left=of img, node distance=0cm, rotate=90, xshift=0cm,anchor=center,yshift=-1.2cm,font=\color{black}] {SOC-Katz};
		\node[right=of img, node distance=0cm, yshift=-1.3cm, xshift=-2.7cm, font=\color{black}]  {$\kappa=4$};
		\end{tikzpicture}
	\end{minipage}%
	\begin{minipage}{0.45\textwidth}
		\begin{tikzpicture}
		\node (img)  {\includegraphics[width=1\textwidth]{./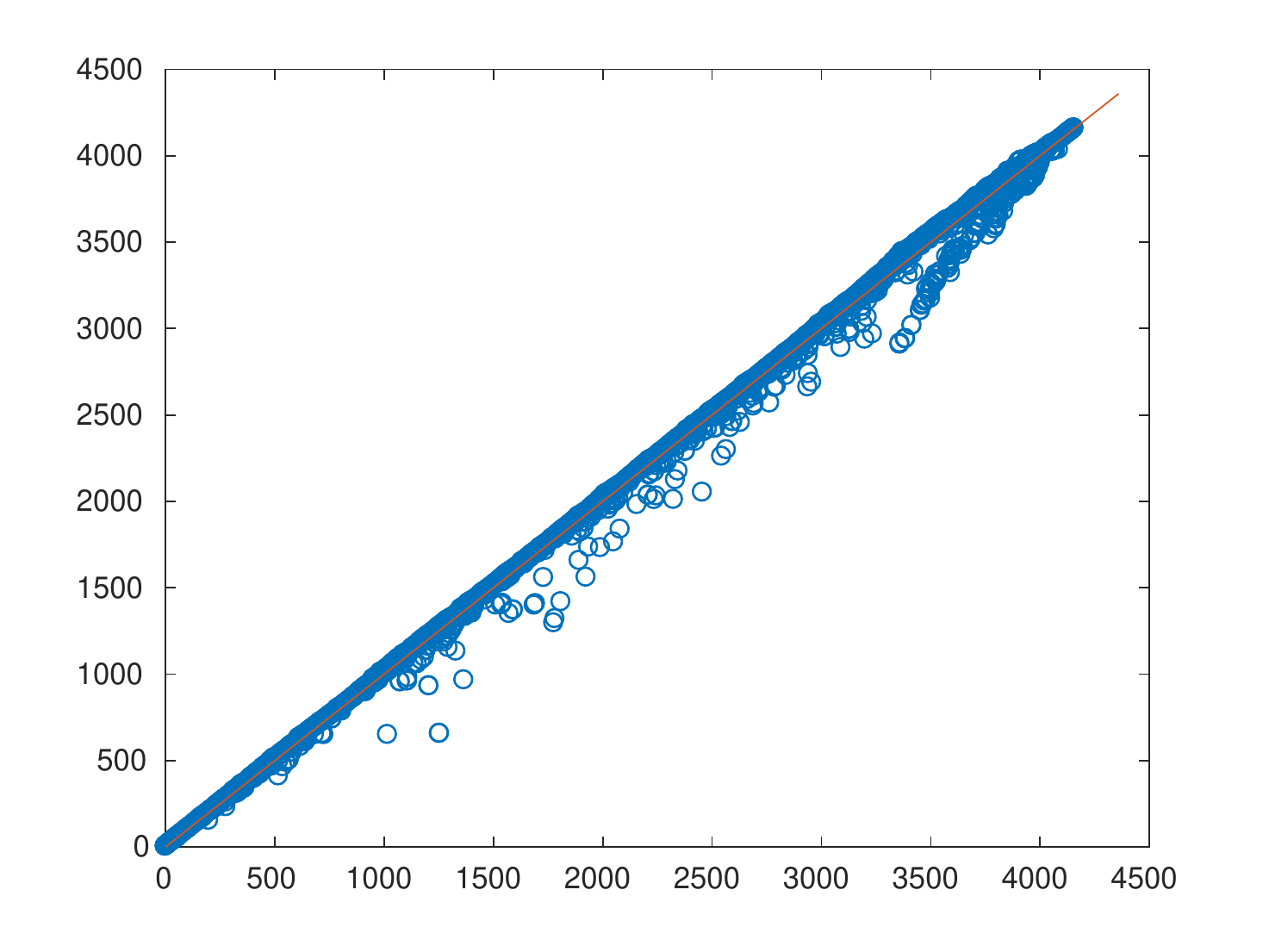}};
		\node[below=of img, node distance=0cm, yshift=1.3cm,font=\color{black}]  {Katz};
		\node[left=of img, node distance=0cm, rotate=90, xshift=0cm, anchor=center,yshift=-1.2cm,font=\color{black}] {SOC-Katz};
		\node[right=of img, node distance=0cm, yshift=-1.3cm, xshift=-2.7cm, font=\color{black}] {$\kappa=8$};
		\end{tikzpicture}
	\end{minipage}%

	\caption{Comparison of node rankings based on SOC-Katz and Katz for different values of $\kappa$ with $|\Omega|/|V| = 0.1$. Even if the above ranking give high Kendall Tau correlation, we notice that with the introduction of a ranking based SOC-Katz, highly ranked Katz nodes can significantly loose their ranking, however, less important nodes do not significantly increase with rank }
	\label{soc_rand_boxplot}
\end{figure}

\section{Conclusion}
An estimation of node spreading influence in a network is an important step towards understanding and controlling the spreading dynamics over the network. Centrality measures are traditionally used to identify influential nodes in a network. In this work, we extend the well-known measures of Katz, betweenness, and random-walk betweenness centralities to models that accommodate a resource, necessary for the spread, being consumed along the way. We present algorithms to compute the proposed centrality measures and carry out experiments on real-world networks. Lastly, we demonstrate simulation models that describe the flow process and show that they are highly correlated to the proposed centralities.  In order to answer the question, ``How good are these centrality measures?'', we analyze the centrality measures from three different perspectives, namely, usability, robustness, and novelty. From the usability perspective, among other experiments, we demonstrate how the centrality measures can be used to identify congested nodes in computer and road networks. From the robustness and novelty perspective, in the application of posters and lurkers in a social network, we showed that the proposed extension of Katz centrality follows the intuition that a user with a large number of neighbors (friends) can still be non-influential if the user together with all his/her neighbors (connections) are lurkers. On the other hand, a highly influential node would generally have many neighbors (connections).

The proposed measures take into account a spreading process that depends on a resource, such that the spreading process would be impossible without. Our numerical experiments demonstrate that the proposed measures differ significantly from the original measures when the resource is limited. On the other hand, they become identical to the original measures as the quantity of resource available tends to infinity. As a result, the proposed measures give a new tool and perspective to different application domains. For example, in the application of a road network equipped with wireless charging lanes, an optimal placement of these lanes with respect to traffic distribution, could be one where the distribution of centrality scores of all nodes is taken into account. In another domain, high centrality nodes can be considered for targeted attack or immunization strategies.

For a given application, the choice of which centrality measure to use to draw a conclusion about the network is extremely important as using a wrong measure can lead to meaningless results. The measures of \soc-RWBC and \soc-Katz are both based on random walks on the network. It is, however, important to note that, just as the standard measures, the random walks associated with \soc-RWBC have a fixed source and target node, while the random walks associated with \soc-Katz only have a fixed source node. Thus \soc-Katz is more suitable for applications where the flow process does not have a specified destination, for example, a disease spread. The \soc-RWBC and subsequently \soc-BC are suitable for applications where the flow process has a specified destination, for example vehicle flow.

There are numerous future research directions associated with the resource consumption based centralities. For example, we propose to  explore other fundamental network properties such as connectedness, clustering, and network robustness  in the context of consumable resource networks. In this work, there exists a set of nodes that facilitate flow in the network, conversely, problems in network interdiction \cite{wood1993deterministic} deal with the identification of nodes that hinder flow. An interesting direction is to explore the relationship between these two problems in more detail. 
Another highly relevant direction for the future work is to consider a distribution of resource consumption based  centralities in realistic network generation \cite{gutfraind2015multiscale,staudt2017generating} . This is particularly important for the simulation and verification studies. To the best of our knowledge, no generating model currently considers a distribution of resource consumption based centralities. Also, the resource consumption models can be generalized for clusters and communities. Moreover, one of the natural extensions of this work is introducing resource consumption element to node and edge similarity measures.

\section*{Acknowledgements}

We thank the anonymous reviewers whose comments and suggestions helped to improve and clarify this manuscript. This research is supported by the National Science Foundation under Award \#1647361. Any opinions, findings, conclusions or recommendations expressed in this material are those of the authors and do not necessarily reflect the views of the National Science Foundation.


\begin{thebibliography}{10}
	
	\bibitem{albert2002statistical}
	R{\'e}ka Albert and Albert-L{\'a}szl{\'o} Barab{\'a}si.
	\newblock Statistical mechanics of complex networks.
	\newblock {\em Reviews of modern physics}, 74(1):47, 2002.
	
	\bibitem{albert2000error}
	R{\'e}ka Albert, Hawoong Jeong, and Albert-L{\'a}szl{\'o} Barab{\'a}si.
	\newblock Error and attack tolerance of complex networks.
	\newblock {\em nature}, 406(6794):378, 2000.
	
	\bibitem{altshuler2011augmented}
	Yaniv Altshuler, Rami Puzis, Yuval Elovici, Shlomo Bekhor, and AS~Pentland.
	\newblock Augmented betweenness centrality for mobility prediction in
	transportation networks.
	\newblock In {\em International Workshop on Finding Patterns of Human Behaviors
		in NEtworks and MObility Data, NEMO11}, 2011.
	
	\bibitem{bae2014identifying}
	Joonhyun Bae and Sangwook Kim.
	\newblock Identifying and ranking influential spreaders in complex networks by
	neighborhood coreness.
	\newblock {\em Physica A: Statistical Mechanics and its Applications},
	395:549--559, 2014.
	
	\bibitem{barrat2008dynamical}
	Alain Barrat, Marc Barthelemy, and Alessandro Vespignani.
	\newblock {\em Dynamical processes on complex networks}.
	\newblock Cambridge university press, 2008.
	
	\bibitem{basu2013state}
	Anirban Basu, Simon Fleming, James Stanier, Stephen Naicken, Ian Wakeman, and
	Vijay~K Gurbani.
	\newblock The state of peer-to-peer network simulators.
	\newblock {\em {ACM Computing Surveys (CSUR)}}, 45(4):46, 2013.
	
	\bibitem{bavelas1948mathematical}
	Alex Bavelas.
	\newblock A mathematical model for group structures.
	\newblock {\em Human organization}, 7(3):16--30, 1948.
	
	\bibitem{bettencourt2006power}
	Lu{\'\i}s~MA Bettencourt, Ariel Cintr{\'o}n-Arias, David~I Kaiser, and Carlos
	Castillo-Ch{\'a}vez.
	\newblock The power of a good idea: Quantitative modeling of the spread of
	ideas from epidemiological models.
	\newblock {\em Physica A: Statistical Mechanics and its Applications},
	364:513--536, 2006.
	
	\bibitem{bi2016review}
	Zicheng Bi, Tianze Kan, Chunting~Chris Mi, Yiming Zhang, Zhengming Zhao, and
	Gregory~A Keoleian.
	\newblock A review of wireless power transfer for electric vehicles: Prospects
	to enhance sustainable mobility.
	\newblock {\em Applied Energy}, 179:413--425, 2016.
	
	\bibitem{boccaletti2006complex}
	Stefano Boccaletti, Vito Latora, Yamir Moreno, Martin Chavez, and D-U Hwang.
	\newblock Complex networks: Structure and dynamics.
	\newblock {\em Physics reports}, 424(4-5):175--308, 2006.
	
	\bibitem{bonacich1987power}
	Phillip Bonacich.
	\newblock Power and centrality: A family of measures.
	\newblock {\em American journal of sociology}, 92(5):1170--1182, 1987.
	
	\bibitem{bonacich1991simultaneous}
	Phillip Bonacich.
	\newblock Simultaneous group and individual centralities.
	\newblock {\em Social networks}, 13(2):155--168, 1991.
	
	\bibitem{borgatti2005centrality}
	Stephen~P Borgatti.
	\newblock Centrality and network flow.
	\newblock {\em Social networks}, 27(1):55--71, 2005.
	
	\bibitem{brandes2001faster}
	Ulrik Brandes.
	\newblock A faster algorithm for betweenness centrality.
	\newblock {\em Journal of mathematical sociology}, 25(2):163--177, 2001.
	
	\bibitem{brandes2005centrality}
	Ulrik Brandes and Daniel Fleischer.
	\newblock Centrality measures based on current flow.
	\newblock In {\em STACS}, volume 3404, pages 533--544. Springer, 2005.
	
	\bibitem{chen2011algebraic}
	Jie Chen and Ilya Safro.
	\newblock Algebraic distance on graphs.
	\newblock {\em {SIAM Journal on Scientific Computing}}, 33(6):3468--3490, 2011.
	
	\bibitem{chen2016optimal}
	Zhibin Chen, Fang He, and Yafeng Yin.
	\newblock Optimal deployment of charging lanes for electric vehicles in
	transportation networks.
	\newblock {\em Transportation Research Part B: Methodological}, 91:344--365,
	2016.
	
	\bibitem{cirimele2014wireless}
	Vincenzo Cirimele, Fabio Freschi, and Paolo Guglielmi.
	\newblock Wireless power transfer structure design for electric vehicle in
	charge while driving.
	\newblock In {\em {Electrical Machines (ICEM), 2014 International Conference
			on}}, pages 2461--2467. IEEE, 2014.
	
	\bibitem{cohen2001breakdown}
	Reuven Cohen, Keren Erez, Daniel Ben-Avraham, and Shlomo Havlin.
	\newblock Breakdown of the internet under intentional attack.
	\newblock {\em Physical review letters}, 86(16):3682, 2001.
	
	\bibitem{colizza2007reaction}
	Vittoria Colizza, Romualdo Pastor-Satorras, and Alessandro Vespignani.
	\newblock Reaction--diffusion processes and metapopulation models in
	heterogeneous networks.
	\newblock {\em Nature Physics}, 3(4):276, 2007.
	
	\bibitem{crucitti2006centrality}
	Paolo Crucitti, Vito Latora, and Sergio Porta.
	\newblock Centrality in networks of urban streets.
	\newblock {\em Chaos: an interdisciplinary journal of nonlinear science},
	16(1):015113, 2006.
	
	\bibitem{davis2011university}
	Timothy~A Davis and Yifan Hu.
	\newblock The university of florida sparse matrix collection.
	\newblock {\em ACM Transactions on Mathematical Software (TOMS)}, 38(1):1,
	2011.
	
	\bibitem{diekmann2000mathematical}
	Odo Diekmann and Johan Andre~Peter Heesterbeek.
	\newblock {\em Mathematical epidemiology of infectious diseases: model
		building, analysis and interpretation}, volume~5.
	\newblock John Wiley \& Sons, 2000.
	
	\bibitem{fouss2007random}
	Francois Fouss, Alain Pirotte, Jean-Michel Renders, and Marco Saerens.
	\newblock Random-walk computation of similarities between nodes of a graph with
	application to collaborative recommendation.
	\newblock {\em {IEEE Transactions on knowledge and data engineering}},
	19(3):355--369, 2007.
	
	\bibitem{freeman1978centrality}
	Linton~C Freeman.
	\newblock Centrality in social networks conceptual clarification.
	\newblock {\em Social networks}, 1(3):215--239, 1978.
	
	\bibitem{freeman1991centrality}
	Linton~C Freeman, Stephen~P Borgatti, and Douglas~R White.
	\newblock Centrality in valued graphs: A measure of betweenness based on
	network flow.
	\newblock {\em Social networks}, 13(2):141--154, 1991.
	
	\bibitem{fuller2016wireless}
	Micah Fuller.
	\newblock Wireless charging in california: Range, recharge, and vehicle
	electrification.
	\newblock {\em {Transportation Research Part C: Emerging Technologies}},
	67:343--356, 2016.
	
	\bibitem{ghosh2012rethinking}
	Rumi Ghosh and Kristina Lerman.
	\newblock Rethinking centrality: the role of dynamical processes in social
	network analysis.
	\newblock {\em arXiv preprint arXiv:1209.4616}, 2012.
	
	\bibitem{guimera2002optimal}
	Roger Guimer{\`a}, Albert D{\'\i}az-Guilera, Fernando Vega-Redondo, Antonio
	Cabrales, and Alex Arenas.
	\newblock Optimal network topologies for local search with congestion.
	\newblock {\em Physical review letters}, 89(24):248701, 2002.
	
	\bibitem{gutfraind2015multiscale}
	Alexander Gutfraind, Ilya Safro, and Lauren~Ancel Meyers.
	\newblock Multiscale network generation.
	\newblock In {\em Information Fusion (Fusion), 2015 18th International
		Conference on}, pages 158--165. IEEE, 2015.
	
	\bibitem{hethcote2000mathematics}
	Herbert~W Hethcote.
	\newblock The mathematics of infectious diseases.
	\newblock {\em SIAM review}, 42(4):599--653, 2000.
	
	\bibitem{holme2003congestion}
	Petter Holme.
	\newblock Congestion and centrality in traffic flow on complex networks.
	\newblock {\em Advances in Complex Systems}, 6(02):163--176, 2003.
	
	\bibitem{huisinga2001microscopic}
	Torsten Huisinga, Robert Barlovic, Wolfgang Knospe, Andreas Schadschneider, and
	Michael Schreckenberg.
	\newblock A microscopic model for packet transport in the internet.
	\newblock {\em Physica A: Statistical Mechanics and its Applications},
	294(1-2):249--256, 2001.
	
	\bibitem{huitema2000routing}
	Christian Huitema.
	\newblock {\em Routing in the Internet}.
	\newblock Prentice-Hall,, 2000.
	
	\bibitem{interdonato2016trust}
	Roberto Interdonato and Andrea Tagarelli.
	\newblock To trust or not to trust lurkers?: Evaluation of lurking and
	trustworthiness in ranking problems.
	\newblock In {\em International Conference and School on Network Science},
	pages 43--56. Springer, 2016.
	
	\bibitem{jang2012optimal}
	Young~Jae Jang, Young~Dae Ko, and Seungmin Jeong.
	\newblock Optimal design of the wireless charging electric vehicle.
	\newblock In {\em Electric Vehicle Conference (IEVC), 2012 IEEE International},
	pages 1--5. IEEE, 2012.
	
	\bibitem{jayasinghe2015explaining}
	Amila Jayasinghe, Kazushi Sano, and Hiroaki Nishiuchi.
	\newblock Explaining traffic flow patterns using centrality measures.
	\newblock {\em International Journal for Traffic and Transport Engineering},
	5(2):134--149, 2015.
	
	\bibitem{jayaweeracentrality}
	IMLN Jayaweera, KKKR Perera, and J~Munasinghe.
	\newblock Centrality measures to identify traffic congestion on road networks:
	A case study of sri lanka.
	\newblock {\em IOSR Journal of Mathematics (IOSR-JM)}, 2017.
	
	\bibitem{jiang2004structural}
	Bin Jiang and Christophe Claramunt.
	\newblock A structural approach to the model generalization of an urban street
	network.
	\newblock {\em GeoInformatica}, 8(2):157--171, 2004.
	
	\bibitem{jiang2011agent}
	Bin Jiang and Tao Jia.
	\newblock Agent-based simulation of human movement shaped by the underlying
	street structure.
	\newblock {\em International Journal of Geographical Information Science},
	25(1):51--64, 2011.
	
	\bibitem{katz1998luring}
	Jon Katz.
	\newblock Luring the lurkers.
	\newblock {\em Retrieved March}, 1(1999):1999, 1998.
	
	\bibitem{katz1953new}
	Leo Katz.
	\newblock A new status index derived from sociometric analysis.
	\newblock {\em Psychometrika}, 18(1):39--43, 1953.
	
	\bibitem{keeling2011modeling}
	Matt~J Keeling and Pejman Rohani.
	\newblock {\em Modeling infectious diseases in humans and animals}.
	\newblock Princeton University Press, 2011.
	
	\bibitem{kendall1938new}
	Maurice~G Kendall.
	\newblock A new measure of rank correlation.
	\newblock {\em Biometrika}, 30(1/2):81--93, 1938.
	
	\bibitem{kephart1997fighting}
	Jeffrey~O Kephart, Gregory~B Sorkin, David~M Chess, and Steve~R White.
	\newblock Fighting computer viruses.
	\newblock {\em Scientific American}, 277(5):88--93, 1997.
	
	\bibitem{khan2017utility}
	MD~Khan, Mashrur Chowdhury, Sakib~Mahmud Khan, Ilya Safro, and Hayato
	Ushijima-Mwesigwa.
	\newblock Utility maximization framework for opportunistic wireless electric
	vehicle charging.
	\newblock {\em Transportation Research Board 97th Annual Meeting,
		Transportation Research Board}, 2018.
	
	\bibitem{khan2019wireless}
	Zadid Khan, Sakib~Mahmud Khan, Mashrur Chowdhury, Ilya Safro, and Hayato
	Ushijima-Mwesigwa.
	\newblock Wireless charging utility maximization and intersection control delay
	minimization framework for electric vehicles.
	\newblock {\em {accepted in Computer-Aided Civil and Infrastructure
			Engineering}}, 2019.
	
	\bibitem{kitsak2010identification}
	Maksim Kitsak, Lazaros~K Gallos, Shlomo Havlin, Fredrik Liljeros, Lev Muchnik,
	H~Eugene Stanley, and Hern{\'a}n~A Makse.
	\newblock Identification of influential spreaders in complex networks.
	\newblock {\em Nature physics}, 6(11):888, 2010.
	
	\bibitem{kleinfeld2002small}
	Judith~S Kleinfeld.
	\newblock The small world problem.
	\newblock {\em Society}, 39(2):61--66, 2002.
	
	\bibitem{klemm2012measure}
	Konstantin Klemm, M~{\'A}ngeles Serrano, V{\'\i}ctor~M Egu{\'\i}luz, and Maxi
	San~Miguel.
	\newblock A measure of individual role in collective dynamics.
	\newblock {\em Scientific reports}, 2:292, 2012.
	
	\bibitem{lai2014knowledge}
	Hui-Min Lai and Tsung~Teng Chen.
	\newblock Knowledge sharing in interest online communities: A comparison of
	posters and lurkers.
	\newblock {\em Computers in Human Behavior}, 35:295--306, 2014.
	
	\bibitem{leskovec2007dynamics}
	Jure Leskovec, Lada~A Adamic, and Bernardo~A Huberman.
	\newblock The dynamics of viral marketing.
	\newblock {\em ACM Transactions on the Web (TWEB)}, 1(1):5, 2007.
	
	\bibitem{leskovec2007graph}
	Jure Leskovec, Jon Kleinberg, and Christos Faloutsos.
	\newblock Graph evolution: Densification and shrinking diameters.
	\newblock {\em ACM Transactions on Knowledge Discovery from Data (TKDD)},
	1(1):2, 2007.
	
	\bibitem{li2015percolation}
	Daqing Li, Bowen Fu, Yunpeng Wang, Guangquan Lu, Yehiel Berezin, H~Eugene
	Stanley, and Shlomo Havlin.
	\newblock Percolation transition in dynamical traffic network with evolving
	critical bottlenecks.
	\newblock {\em Proceedings of the National Academy of Sciences},
	112(3):669--672, 2015.
	
	\bibitem{li2015wireless}
	Siqi Li and Chunting~Chris Mi.
	\newblock Wireless power transfer for electric vehicle applications.
	\newblock {\em IEEE journal of emerging and selected topics in power
		electronics}, 3(1):4--17, 2015.
	
	\bibitem{liu2002simple}
	F~Liu, Y~Ren, and XM~Shan.
	\newblock A simple cellular automata model for packet transport in the
	internet.
	\newblock {\em Acta Physica Sinica}, 51(6):1175--1180, 2002.
	
	\bibitem{liu2016locating}
	Jian-Guo Liu, Jian-Hong Lin, Qiang Guo, and Tao Zhou.
	\newblock Locating influential nodes via dynamics-sensitive centrality.
	\newblock {\em Scientific reports}, 6:21380, 2016.
	
	\bibitem{liu2007opinion}
	Jian-Guo Liu, Zhi-Xi Wu, and Feng Wang.
	\newblock Opinion spreading and consensus formation on square lattice.
	\newblock {\em International Journal of Modern Physics C}, 18(07):1087--1094,
	2007.
	
	\bibitem{lukic2013cutting}
	Srdjan Lukic and Zeljko Pantic.
	\newblock Cutting the cord: Static and dynamic inductive wireless charging of
	electric vehicles.
	\newblock {\em {IEEE Electrification Magazine}}, 1(1):57--64, 2013.
	
	\bibitem{marett2009decision}
	Kent Marett and Kshiti~D Joshi.
	\newblock The decision to share information and rumors: Examining the role of
	motivation in an online discussion forum.
	\newblock {\em Communications of the Association for Information Systems},
	24(1):4, 2009.
	
	\bibitem{mason1999issues}
	Bruce Mason.
	\newblock Issues in virtual ethnography.
	\newblock {\em Ethnographic studies in real and virtual environments: Inhabited
		information spaces and connected communities}, pages 61--69, 1999.
	
	\bibitem{moreno2004dynamics}
	Yamir Moreno, Maziar Nekovee, and Amalio~F Pacheco.
	\newblock Dynamics of rumor spreading in complex networks.
	\newblock {\em {Physical Review E}}, 69(6):066130, 2004.
	
	\bibitem{motter2004cascade}
	Adilson~E Motter.
	\newblock Cascade control and defense in complex networks.
	\newblock {\em Physical Review Letters}, 93(9):098701, 2004.
	
	\bibitem{nagel1996particle}
	Kai Nagel.
	\newblock Particle hopping models and traffic flow theory.
	\newblock {\em Physical review E}, 53(5):4655, 1996.
	
	\bibitem{newman2003structure}
	Mark~EJ Newman.
	\newblock The structure and function of complex networks.
	\newblock {\em SIAM review}, 45(2):167--256, 2003.
	
	\bibitem{newman2005measure}
	Mark~EJ Newman.
	\newblock A measure of betweenness centrality based on random walks.
	\newblock {\em Social networks}, 27(1):39--54, 2005.
	
	\bibitem{ning2013compact}
	Puqi Ning, John~M Miller, Omer~C Onar, and Clifford~P White.
	\newblock A compact wireless charging system for electric vehicles.
	\newblock In {\em {Energy Conversion Congress and Exposition (ECCE), 2013
			IEEE}}, pages 3629--3634. IEEE, 2013.
	
	\bibitem{page1999pagerank}
	Lawrence Page, Sergey Brin, Rajeev Motwani, and Terry Winograd.
	\newblock The pagerank citation ranking: Bringing order to the web.
	\newblock Technical report, Stanford InfoLab, 1999.
	
	\bibitem{park2010social}
	Kyoungjin Park and Alper Yilmaz.
	\newblock A social network analysis approach to analyze road networks.
	\newblock In {\em {ASPRS Annual Conference. San Diego, CA}}, 2010.
	
	\bibitem{pastor2001epidemic}
	Romualdo Pastor-Satorras and Alessandro Vespignani.
	\newblock Epidemic spreading in scale-free networks.
	\newblock {\em Physical review letters}, 86(14):3200, 2001.
	
	\bibitem{porta2006network}
	Sergio Porta, Paolo Crucitti, and Vito Latora.
	\newblock The network analysis of urban streets: a primal approach.
	\newblock {\em Environment and Planning B: planning and design},
	33(5):705--725, 2006.
	
	\bibitem{preece2004top}
	Jenny Preece, Blair Nonnecke, and Dorine Andrews.
	\newblock The top five reasons for lurking: improving community experiences for
	everyone.
	\newblock {\em Computers in human behavior}, 20(2):201--223, 2004.
	
	\bibitem{qiu2013overview}
	Chun Qiu, KT~Chau, Chunhua Liu, and CC~Chan.
	\newblock Overview of wireless power transfer for electric vehicle charging.
	\newblock In {\em {Electric Vehicle Symposium and Exhibition (EVS27), 2013
			World}}, pages 1--9. IEEE, 2013.
	
	\bibitem{riemann2015optimal}
	Raffaela Riemann, David~ZW Wang, and Fritz Busch.
	\newblock Optimal location of wireless charging facilities for electric
	vehicles: flow-capturing location model with stochastic user equilibrium.
	\newblock {\em Transportation Research Part C: Emerging Technologies},
	58:1--12, 2015.
	
	\bibitem{ripeanu2002mapping}
	Matei Ripeanu and Ian Foster.
	\newblock {Mapping the Gnutella network: Macroscopic properties of large-scale
		peer-to-peer systems}.
	\newblock In {\em {International Workshop on Peer-to-Peer systems}}, pages
	85--93. Springer, 2002.
	
	\bibitem{rossi2015network}
	Ryan Rossi and Nesreen Ahmed.
	\newblock The network data repository with interactive graph analytics and
	visualization.
	\newblock In {\em {AAAI}}, volume~15, pages 4292--4293, 2015.
	
	\bibitem{scheurer2008spatial}
	Jan Scheurer, Carey Curtis, and S~Porta.
	\newblock {\em Spatial Network Analysis of Multimodal Transport Systems:
		Developing a Strategic Planning Tool to Assess the Congruence of Movement and
		Urban Structure: a Case Study of Perth Before and After the Perth-to-Mandurah
		Railway}.
	\newblock GAMUT, Australasian Centre for the Governance and Management of Urban
	Transport, University of Melbourne, 2008.
	
	\bibitem{schlosser2005posting}
	Ann~E Schlosser.
	\newblock Posting versus lurking: Communicating in a multiple audience context.
	\newblock {\em Journal of Consumer Research}, 32(2):260--265, 2005.
	
	\bibitem{shaydulin2017relaxation}
	Ruslan Shaydulin, Jie Chen, and Ilya Safro.
	\newblock Relaxation-based coarsening for multilevel hypergraph partitioning.
	\newblock {\em accepted in {SIAM Multiscale Modeling and Simulation}, preprint
		arXiv:1710.06552}, 2017.
	
	\bibitem{vsikic2013epidemic}
	Mile {\v{S}}iki{\'c}, Alen Lan{\v{c}}i{\'c}, Nino Antulov-Fantulin, and Hrvoje
	{\v{S}}tefan{\v{c}}i{\'c}.
	\newblock Epidemic centrality—is there an underestimated epidemic impact of
	network peripheral nodes?
	\newblock {\em {The European Physical Journal B}}, 86(10):440, 2013.
	
	\bibitem{soroka2003we}
	Vladimir Soroka, Michal Jacovi, and Sigalit Ur.
	\newblock We can see you: a study of communities’ invisible people through
	reachout.
	\newblock In {\em Communities and technologies}, pages 65--79. Springer, 2003.
	
	\bibitem{southworth2013streets}
	Michael Southworth and Eran Ben-Joseph.
	\newblock {\em Streets and the Shaping of Towns and Cities}.
	\newblock Island Press, 2013.
	
	\bibitem{rocketfuel}
	N.~Spring, R.~Mahajan, and D.~Wetherall.
	\newblock Measuring {ISP} topologies with rocketfuel.
	\newblock In {\em {SIGCOMM}}, volume~32, pages 133--145, 2002.
	
	\bibitem{staudt2017generating}
	Christian~L Staudt, Michael Hamann, Alexander Gutfraind, Ilya Safro, and
	Henning Meyerhenke.
	\newblock Generating realistic scaled complex networks.
	\newblock {\em Applied Network Science}, 2(1):36, 2017.
	
	\bibitem{strogatz2001exploring}
	Steven~H Strogatz.
	\newblock Exploring complex networks.
	\newblock {\em nature}, 410(6825):268, 2001.
	
	\bibitem{tagarelli2013s}
	Andrea Tagarelli and Roberto Interdonato.
	\newblock Who's out there?: identifying and ranking lurkers in social networks.
	\newblock In {\em {Proceedings of the 2013 IEEE/ACM International Conference on
			Advances in Social Networks Analysis and Mining}}, pages 215--222. ACM, 2013.
	
	\bibitem{tagarelli2014lurking}
	Andrea Tagarelli and Roberto Interdonato.
	\newblock Lurking in social networks: topology-based analysis and ranking
	methods.
	\newblock {\em Social Network Analysis and Mining}, 4(1):230, 2014.
	
	\bibitem{tao2006epidemic}
	Zhou Tao, Fu~Zhongqian, and Wang Binghong.
	\newblock Epidemic dynamics on complex networks.
	\newblock {\em Progress in Natural Science}, 16(5):452--457, 2006.
	
	\bibitem{tsoumakos2006analysis}
	Dimitrios Tsoumakos and Nick Roussopoulos.
	\newblock Analysis and comparison of p2p search methods.
	\newblock In {\em Proceedings of the 1st international conference on Scalable
		information systems}, page~25. ACM, 2006.
	
	\bibitem{ushijima2017optimal}
	Hayato Ushijima-Mwesigwa, MD~Khan, Mashrur~A Chowdhury, and Ilya Safro.
	\newblock Optimal installation for electric vehicle wireless charging lanes.
	\newblock {\em arXiv preprint arXiv:1704.01022}, 2017.
	
	\bibitem{van20141}
	Trevor Van~Mierlo.
	\newblock The 1\% rule in four digital health social networks: an observational
	study.
	\newblock {\em Journal of medical Internet research}, 16(2), 2014.
	
	\bibitem{vilathgamuwa2015wireless}
	DM~Vilathgamuwa and JPK Sampath.
	\newblock Wireless power transfer for electric vehicles, present and future
	trends.
	\newblock In {\em Plug in electric vehicles in smart grids}, pages 33--60.
	Springer, 2015.
	
	\bibitem{wang2012understanding}
	Pu~Wang, Timothy Hunter, Alexandre~M Bayen, Katja Schechtner, and Marta~C
	Gonz{\'a}lez.
	\newblock Understanding road usage patterns in urban areas.
	\newblock {\em Scientific reports}, 2:1001, 2012.
	
	\bibitem{wang2007analyzing}
	Yong Wang, Xiaochun Yun, and Yifei Li.
	\newblock Analyzing the characteristics of gnutella overlays.
	\newblock In {\em {Information Technology, 2007. ITNG'07. Fourth International
			Conference on}}, pages 1095--1100. IEEE, 2007.
	
	\bibitem{watts2007viral}
	Duncan~J Watts, Jonah Peretti, and Michael Frumin.
	\newblock {\em Viral marketing for the real world}.
	\newblock Harvard Business School Pub., 2007.
	
	\bibitem{wood1993deterministic}
	R~Kevin Wood.
	\newblock Deterministic network interdiction.
	\newblock {\em Mathematical and Computer Modelling}, 17(2):1--18, 1993.
	
	\bibitem{yan2006efficient}
	Gang Yan, Tao Zhou, Bo~Hu, Zhong-Qian Fu, and Bing-Hong Wang.
	\newblock Efficient routing on complex networks.
	\newblock {\em Physical Review E}, 73(4):046108, 2006.
	
	\bibitem{zhang2011centrality}
	Yuanyuan Zhang, Xuesong Wang, Peng Zeng, and Xiaohong Chen.
	\newblock Centrality characteristics of road network patterns of traffic
	analysis zones.
	\newblock {\em Transportation Research Record: Journal of the Transportation
		Research Board}, (2256):16--24, 2011.
	
	\bibitem{zhao2013rumor}
	Laijun Zhao, Wanlin Xie, H~Oliver Gao, Xiaoyan Qiu, Xiaoli Wang, and Shuhai
	Zhang.
	\newblock A rumor spreading model with variable forgetting rate.
	\newblock {\em {Physica A: Statistical Mechanics and its Applications}},
	392(23):6146--6154, 2013.
	
\end{thebibliography}
\end{document}